 \def\BibTeX{{\rm B\kern-.05em{\sc i\kern-.025em b}\kern-.08em
    T\kern-.1667em\lower.7ex\hbox{E}\kern-.125emX}}
\newcommand*{\rom}[1]{\expandafter\@slowromancap\romannumeral #1@}
\DeclareMathOperator{\mino}{minimize}
\newtheorem{remark}{Remark}
\newtheorem{lemma}{Lemma}
\newtheorem{theorem}{Theorem}
\newtheorem{proposition}{Proposition}
\title{Globally Optimal Resource Allocation Design for Discrete Phase Shift IRS-Assisted Multiuser Networks with Perfect and Imperfect CSI\thanks{This work has been presented in part at the IEEE International Communication Conference Workshops, Rome, Italy, June 2023 \cite{wu2023globally}.}}
\author{\IEEEauthorblockN {Yifei Wu\IEEEauthorrefmark{1}, Dongfang Xu\IEEEauthorrefmark{2}, Derrick Wing Kwan Ng\IEEEauthorrefmark{3}, Robert Schober\IEEEauthorrefmark{1}, and Wolfgang Gerstacker\IEEEauthorrefmark{1}}


\IEEEauthorrefmark {1}Friedrich-Alexander-Universit\"at
Erlangen-N\"urnberg, Germany;
\IEEEauthorrefmark {3}The University
of New South Wales, Australia\\
\IEEEauthorrefmark {2}The Hong Kong University of Science and Technology, Hong Kong

}
\begin{document}
\maketitle
\begin{abstract}
    Intelligent reflecting surfaces (IRSs) are a promising low-cost solution for achieving high spectral and energy efficiency in future communication systems by enabling the customization of wireless propagation environments. Despite the plethora of research on resource allocation design for IRS-assisted multiuser wireless communication systems, the optimal design and the corresponding performance upper bound are still not fully understood. To bridge this gap in knowledge, in this paper, we investigate the optimal resource allocation design for IRS-assisted multiuser multiple-input single-output (MISO) systems employing practical discrete IRS phase shifters. In particular, we jointly optimize the beamforming vector at the base station (BS) and the discrete IRS phase shifts to minimize the total transmit power for the cases of perfect and imperfect channel state information (CSI) knowledge. To this end, two novel algorithms based on the generalized Benders decomposition (GBD) method are developed to obtain the globally optimal solution for perfect and imperfect CSI, respectively. Moreover, to facilitate practical implementation, we propose two corresponding low-complexity suboptimal algorithms with guaranteed convergence by capitalizing on successive convex approximation (SCA). In particular, for imperfect CSI, we adopt a bounded error model to characterize the CSI uncertainty and propose a new transformation to convexify the robust quality-of-service (QoS) constraints. Our numerical results confirm the optimality of the proposed GBD-based algorithms for the considered system for both perfect and imperfect CSI. 
    {\color{black}Furthermore, we unveil that both proposed SCA-based algorithms can attain a locally optimal solution within a few iterations. Moreover, compared with the state-of-the-art solution based on alternating optimization (AO), the proposed low-complexity SCA-based schemes achieve a significant performance gain, especially for moderate-to-large numbers of IRS elements.}

\end{abstract}
\begin{IEEEkeywords}
    Generalized Benders decomposition method, optimal resource allocation, intelligent reflecting surface.
\end{IEEEkeywords}
\vspace*{-4mm}
\section{Introduction}

Recently, intelligent reflecting surface (IRS)-assisted communication has emerged as a promising approach to satisfy the escalating demand for high spectral and energy efficiency in future wireless systems. Specifically, comprising cost-effective passive and programmable elements, IRSs possess the capability to intelligently establish favorable wireless propagation environments such that the signal paths between various communication nodes can be tailored \cite{wu2023globally,wu2019intelligent, huang2019reconfigurable, wu2019beamforming, xu2020resource, wu2021intelligent}. By leveraging this appealing property, one can concentrate the energy of the transmitted signal in desired directions via beamforming, which facilitates power savings in wireless systems. Moreover, typically fabricated as thin rectangular planes, IRSs can be conveniently attached to building facades, indoor ceilings, and vehicles, thereby enabling seamless integration of IRSs into existing wireless systems\cite{wu2021intelligent,yu2021robust}. Driven by these advanced features, numerous works have explored the application of IRSs in wireless systems, aiming to enhance, e.g., multiple-input multiple-output (MIMO) transmission \cite{xu2021resource}, physical layer security \cite{yu2020robust}, and simultaneous wireless information and power transfer (SWIPT) \cite{xu2022optimal}.

To fully exploit the vast potential of IRSs, both the phase shift configuration of the IRS and the transmit beamforming at the base station (BS) have to be delicately designed \cite{wu2019beamforming,wu2021intelligent,9014322,xu2022optimal}. In light of this, several works have developed joint BS beamforming and IRS reflection coefficient design policies for the minimization of the total transmit power while guaranteeing the quality-of-service (QoS) of the communication users. Assuming perfect channel state information (CSI) knowledge, transmit power minimization problems were investigated for different IRS-assisted wireless systems, including multiuser MIMO systems \cite{wu2019intelligent}, secure wireless systems \cite{9014322}, and SWIPT systems \cite{xu2022optimal}. However, the significant power savings shown in \cite{wu2019intelligent,xu2022optimal}, and \cite{9014322} rely on the assumption of continuous IRS phase shifts, which may not be applicable in practical IRS systems.
In practice, due to various challenges, such as high energy consumption, high integration complexity, and intractable component coupling, the phase shifts generated by each IRS element are generally confined to discrete values and low resolution. In fact, practical large-scale IRSs typically utilize 1-bit on-off phase shifters or 2-bit quadruple-level phase shifters \cite{wu2021intelligent,wu2019beamforming,shi2022multiuser}. To account for this limitation, several early studies have investigated the resource allocation design for discrete IRS phase shifts. For example, in \cite{shi2022multiuser}, a suboptimal alternating optimization (AO)-based algorithm was developed to promote user fairness in IRS-assisted multiuser systems with discrete IRS phase shifts. However, as shown in \cite{9014322, xu2021resource}, for the considered non-convex optimization problems, iterative AO-based algorithms unavoidably compromise the optimality of the solution. In particular, it is well-known that these algorithms may get trapped in a locally optimal solution \cite{9014322,xu2021resource,bezdek2002some}, and their performance depends heavily on the choice of the initial points, which may lead to an unsatisfactory system performance. To investigate the optimal performance of IRS-assisted systems, in \cite{wu2019beamforming}, the authors considered a single-user scenario and determined the optimal BS beamformer and the optimal discrete IRS phase shifts using an enumeration-based algorithm. However, this scheme cannot be directly extended to multiuser systems, as finding closed-form optimal beamforming vectors seems intractable for multiuser scenarios. Moreover, assuming perfect CSI, in the conference version \cite{wu2023globally} of this paper, a novel algorithm based on the generalized Benders decomposition (GBD) framework was developed for attaining the globally jointly optimum BS beamformer and discrete IRS phase shift matrix for a simplified multiuser system model without direct link between the BS and the users. However, the optimal design for IRS-assisted multiuser multiple-input single-output (MISO) systems with direct link and imperfect CSI is still an open problem.

%
The authors of \cite{wu2019beamforming, wu2023globally} assume the perfect CSI of the IRS-assisted wireless system can be acquired. Unfortunately, due to the passive nature of IRSs, it is challenging to estimate the IRS-assisted links with conventional channel estimation schemes. As such, channel estimation errors are generally inevitable, leading to availability of imperfect CSI only \cite{yu2020robust,yu2021robust}. Furthermore, when considering the IRS-assisted links, i.e., the BS-IRS and IRS-user channels, they are typically cascaded into one effective channel for end-to-end channel estimation. Therefore, the CSI of the BS-IRS and IRS-user channels, along with the corresponding estimation errors should be considered jointly \cite{zhou2020framework,yu2021robust}. As a result, several early works have studied the robust design of IRS-assisted wireless systems assuming imperfect CSI knowledge of the IRS-induced cascaded channels and the direct channels between the BS and the users. For instance, in \cite{yu2021robust} and \cite{zhou2020framework}, worst-case robust optimization methods for practical system settings were developed based on algorithms exploiting linear approximation and successive convex approximation (SCA) methods, respectively. However, the authors in \cite{zhou2020framework,yu2021robust} assume IRSs with continuous phase shifts. Nevertheless, even under this optimistic assumption, the methods proposed in \cite{zhou2020framework,yu2021robust} cannot guarantee the global optimality of the resulting robust design. For practical IRS systems with imperfect CSI and discrete IRS phase shifters, a globally optimal robust resource allocation framework is not available in the literature, yet.

Motivated by the above discussion, in this paper, we investigate for the first time the globally optimal joint BS beamforming and IRS phase shift design for multiuser IRS-assisted systems employing discrete phase shifters. The main contributions of this paper can be summarized as follows:
\begin{itemize}
    \item In this work, we study resource allocation in IRS-assisted communication systems with discrete IRS phase shifters with the objective to minimize the BS transmit power, while guaranteeing the minimum signal-to-interference-plus-noise ratio (SINR) requirements of each user. Optimal and suboptimal resource allocation strategies are developed for perfect and imperfect CSI. 
    \item {\color{black}Our exploration begins with the resource allocation design for the case of perfect CSI. Assuming that perfect CSI of the entire system is available, we formulate the joint BS beamforming and IRS phase shift design as a mixed integer nonlinear programming (MINLP) problem. A series of transformations introduced in \cite{6698281, zhang2008joint} are leveraged and reformulated to recast the considered MINLP problem into a more tractable form, which facilitates the design of an iterative algorithm based on GBD. Specifically, the proposed GBD-based method is guaranteed to converge to the globally optimal solution. Moreover, the proposed GBD-based algorithm reveals an attainable performance upper bound for IRS-assisted wireless systems with discrete IRS phase shifts and can serve as a performance benchmark for any corresponding suboptimal design, e.g., \cite{wu2019beamforming, shi2022multiuser}. In addition, the proposed optimization framework further facilitates the design of an efficient suboptimal algorithm. In particular, leveraging the SCA approach, we exploit the proposed optimization framework to develop a novel and computationally efficient suboptimal scheme striking a favorable balance between performance and complexity.}

    \item {\color{black}We then, for the first time, investigate more practical IRS-assisted systems with discrete IRS phase shifts and imperfect CSI. Here, we introduce robust SINR constraints and, by suitably reformulating transformations reported in \cite{yu2021robust} and \cite{6698281} and adding a series of auxiliary optimization variables, recast them into a set of linear matrix inequality (LMI) constraints to obtain an MINLP problem for robust resource allocation design. Then, based on the proposed robust optimization framework, we extend the proposed GBD-based and SCA-based methods, originally conceived for the scenario with perfect CSI, to obtain the globally optimal and a locally optimal solution to the formulated robust resource allocation problem, respectively. Thus, the proposed robust optimization framework facilitates the design of practical IRS-assisted systems with discrete IRS phase shifts, where the proposed GBD-based method can serve as a performance benchmark for any corresponding suboptimal design. }
    \item By analyzing the convergence behavior of the proposed GBD-based algorithms, we confirm that they are capable of obtaining the globally optimal solution of the considered resource allocation problems for both perfect and imperfect CSI. Besides, both proposed SCA-based algorithms attain local optimality. Notably, the proposed SCA-based algorithms exhibit a faster convergence rate at the cost of a marginally higher total transmit power compared with the proposed GBD-based algorithms, which is advantageous for real-time implementation. Furthermore, our simulation results reveal that the proposed schemes outperform the commonly adopted suboptimal solution based on AO employing semidefinite relaxation. In particular, the performance gap between the proposed schemes and the AO-based design increases with the number of IRS elements, as the latter tends to get trapped in ineffective spurious local minima.
\end{itemize}
The remainder of this paper is organized as follows. We introduce the system model and formulate the considered resource allocation problem in Section \rom{2}. In Section \rom{3}, the proposed optimal and suboptimal algorithms for the case of perfect CSI are developed. These algorithms are extended to the case of imperfect CSI in Section \rom{4}. Simulation results are presented in Section \rom{5}, and Section \rom{6} concludes this paper.
\par
\textit{Notation:} 
Vectors and matrices are denoted by boldface lower-case and boldface capital letters, respectively. $\mathbb{R}^{N\times M}$ and $\mathbb{C}^{N\times M}$ denote the space of $N\times M$ real-valued and complex-valued matrices, respectively. $\mathbb{H}^N$ denotes the space of $N\times N$ Hermitian matrices. $||\cdot||_2$ and $||\cdot||_F$ denote the $l_2$-norm and Frobenius norm of the argument, respectively. $(\cdot)^T$, $(\cdot)^*$, and $(\cdot)^H$ stand for the transpose, conjugate, and conjugate transpose of their input arguments, respectively. $\mathbf{I}_{N}$ refers to the identity matrix of dimension $N$. $\mathrm{Tr}(\cdot)$ and $\mathrm{Rank}(\cdot)$ denote the trace and rank of the input matrix. $\mathbf{0}_{1\times L}$ represents the $1\times L$ all-zeros row vector. $\mathbf{A}\succeq\mathbf{0}$ indicates that $\mathbf{A}$ is a positive semidefinite matrix. $\mathrm{diag}(\mathbf{a})$ denotes a diagonal matrix whose main diagonal elements are given by vector $\mathbf{a}$.  $\mathrm{Re}\{\cdot\}$ and $\mathrm{Im}\{\cdot\}$ represent the real and imaginary parts of a complex number, respectively. $\mathbb{E}[\cdot]$ refers to statistical expectation. $\triangledown_{\mathbf{A}}f$ denotes the gradient of function $f$ with respect to matrix $\mathbf{A}$. $\mathrm{vec}(\cdot)$ refers to the vectorization of the input matrix. $\bigotimes$ stands for the Kronecker product between two matrices.
\vspace*{-3mm}
\section{System Model and Problem Formulation}
In this section, we first present the model for the considered IRS-assisted multiuser MISO system. Then, we formulate the resource allocation problem studied in this paper.
\vspace*{-4mm}
\subsection{IRS-assisted Multiuser System}
\begin{figure}[t]
	\centering
	\includegraphics[width=2.6in]{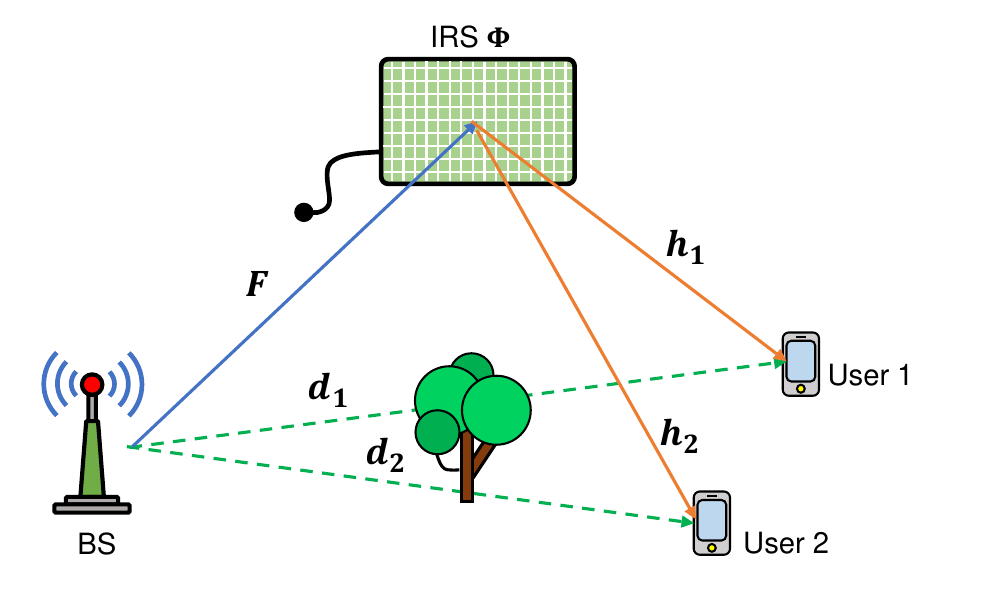}
	\caption{An IRS-assisted multiuser MISO system comprising $K=2$ users.}
	\label{fig::Model}
\end{figure}
{\color{black}As shown in Fig. \ref{fig::Model}, we consider a multiuser wireless communication system comprising an $M$-antenna BS and $K$ single-antenna users similar to \cite{wu2019beamforming}. We investigate a general setting, where the $K$ users share the same time-frequency resources for downlink communication. To provide high-data-rate communication services to the $K$ users, an IRS comprising $N$ phase shift elements is deployed to establish adaptive reflecting paths between the BS and the $K$ users\footnote{\color{black}The design proposed in this paper can be extended to the more general case with multiple IRSs by following the same approach as in \cite{yu2021robust}.}. The IRS is equipped with a smart controller that communicates with the BS to coordinate data transmission and to control the phase shifts of the reflecting elements.}
For notational simplicity, we define sets $\mathcal{K}=\{1,\cdots,K\}$ and $\mathcal{N}=\{1,\cdots,N\}$ to collect the indices of the users and the IRS elements, respectively. Then, the received discrete-time equivalent complex baseband signal $y_k$ at user $k$ is given by
\begin{equation}\label{oriChannel}
    {y}_k=(\mathbf{h}_k^H\bm{\Phi}\mathbf{F}+\mathbf{d}^H_k)\mathbf{W}\mathbf{s}+{n}_k,
\end{equation}
where $\mathbf{h}_k\in\mathbb{C}^{N\times 1}$ and $\mathbf{d}_k\in \mathbb{C}^{M\times 1}$ denote the conjugate channel between the IRS and user $k$ and the conjugate channel between the BS and user $k$, respectively. Matrix $\boldsymbol{\Phi}=\operatorname{diag}\left(e^{j\theta_1}.\cdots,e^{j\theta_N}\right)$ represents the phase shift matrix of the IRS. 
{\color{black}In this paper, we consider the practical case, where each element of the IRS can assume only $L=2^B$ different discrete phase shift values. Here, $B$ is the bit-resolution of each IRS element. For simplicity, we assume that the $L$ discrete phase shift values are uniformly quantized in the interval $[0,2\pi)$, i.e., $\theta_n\in\bm{\Psi}=\{0,\Delta\theta,\cdots,(L-1)\Delta\theta\}$, $\forall n\in\mathcal{N}$, where $\Delta\theta=\frac{2\pi}{L}$.\footnote{\color{black}The proposed resource allocation design can be adapted to the more general scenario considered in \cite{abeywickrama2020intelligent} where the feasible set of the phases of the IRS reflection coefficients is not uniformly quantized, and the amplitudes depend on the phases.  }}
The channel matrix between the BS and the IRS is denoted by $\mathbf{F}\in\mathbb{C}^{N\times M}$, and $\mathbf{W}=[\mathbf{w}_1,\cdots,\mathbf{w}_K]$ denotes the linear beamforming matrix at the BS, where $\mathbf{w}_k\in \mathbb{C}^{M\times 1}$ represents the linear beamforming vector for user $k$. Vector $\mathbf{s}=[s_1,\cdots,s_K]^T$ is the information-carrying symbol vector transmitted to the users, where $s_j\in\mathbb{C}$ denotes the symbol transmitted to user $j$ and $\mathbb{E}[|s_j|^2]=1,\mathbb{E}[s_j^*s_i]=0,j\neq i, \forall j,i\in\mathcal{K}$. Furthermore, $n_k\in\mathbb{C}$ represents the additive white Gaussian noise at user $k$ with zero mean and variance $\sigma_k^2$. 
\vspace*{-5mm}
\subsection{Resource Allocation Problem Formulation}
Without loss of generality, the received signal at user $k$ in \eqref{oriChannel} can be rewritten as:
\begin{equation}\label{orichannel1}
\begin{aligned}
    {y}_k&= [{\mathbf{h}}_k^H,1] \begin{bmatrix}
    \bm{\Phi} & \mathbf{0}_{N\times 1}\\
    \mathbf{0}_{1\times N} & 1
  \end{bmatrix}\begin{bmatrix}
    \mathbf{F} \\
    \mathbf{d}_k^H
  \end{bmatrix}\mathbf{W}\mathbf{s}+{n}_k\\
  &=\mathbf{v}^T{\mathbf{H}}_k\hat{\mathbf{F}}_k\mathbf{W}\mathbf{s}+{n}_k,
\end{aligned}
\end{equation}
where $\hat{\mathbf{F}}_k=[\mathbf{F}^H,\mathbf{d}_k]^H$, ${\mathbf{H}}_k=\mathrm{diag}([{\mathbf{h}}_k^H,1]),$ and $\mathbf{v}=[e^{j\theta_1},\cdots,e^{j\theta_N},1]^T$.
Next, we define the phase shift vector $\bm{\theta}=[1,e^{j\Delta\theta },\cdots, e^{j(L-1)\Delta\theta}]^T$. The binary selection vector of the $n$-th phase shifter is denoted by $\mathbf{b}_n=\big[b_n[1],\cdots,b_n[L]\big]^T,\hspace*{1mm}\forall n$, where $b_n[l]\in\left\{0,\hspace*{1mm}1\right\},\hspace*{1mm}\forall l,n,\hspace*{2mm} \sum_{l=1}^{L}b_n[l]=1,\hspace*{1mm}\forall n$. Then, the coefficient of the $n$-th IRS phase shifter can be expressed as $e^{j\theta_n}=\mathbf{b}_n^T\bm{\theta}$. Here, $b_n[l]=1$ if and only if the $l$-th phase shift value is selected, i.e., $e^{j\theta_n}=e^{j(l-1)\Delta\theta }.$ Hence, $\mathbf{v}$ can be expressed as $\mathbf{v}=\mathbf{B}^T\bm{\theta}$, where matrix $\mathbf{B}\in\mathbb{C}^{L\times(N+1)}$ is defined as $\mathbf{B}=[\mathbf{b}_1,\cdots,\mathbf{b}_{N+1}]$.
Note that $\mathbf{b}_{N+1}^T=[1,0,\cdots,0]$ is a constant binary vector. Therefore, the received signal of user $k$ in \eqref{oriChannel} can be rewritten as follows
\begin{equation}\label{orichannel1}
    y_k=\bm{\theta}^T\mathbf{B}{\mathbf{H}}_k\hat{\mathbf{F}}_k\mathbf{W}\mathbf{s}+{n}_k.
\end{equation}
Thus, the received SINR of user $k$ is given by
\begin{equation}\label{SINR}
    \mathrm{SINR}_k=\frac{|\bm{\theta}^T\mathbf{B}{\mathbf{H}}_k\hat{\mathbf{F}}_k\mathbf{w}_k|^2}{\sum_{k'\in\mathcal{K}\setminus\{k\}}|\bm{\theta}^T\mathbf{B}{\mathbf{H}}_k\hat{\mathbf{F}}_k\mathbf{w}_{k'}|^2+\sigma_k^2}.
\end{equation}
In this paper, we aim to minimize the total BS transmit power while guaranteeing
a minimum required SINR for each user. The optimal BS beamforming and IRS phase
shift selection policy, i.e., $\mathbf{W}$ and $\mathbf{B}$, is obtained by solving the following optimization problem
\begin{eqnarray}
\label{Ori_Problem}
    &&\hspace*{-6mm}\underset{\mathbf{B},\mathbf{W}}{\mino}\hspace*{2mm}\sum_{k\in\mathcal{K}}\left\|\mathbf{w}_k\right\|_2^2\notag\\[-10pt]
    &&\hspace*{-1mm}\mbox{s.t.}\hspace*{5mm} \mbox{C1:}\hspace*{1mm} \mathrm{SINR}_k\geq \gamma_k,\hspace*{1mm}\forall k\in \mathcal{K},\hspace*{1mm}\mbox{C2a:}\hspace*{1mm}\sum_{l=1}^{L}b_n[l]=1,\hspace*{1mm}\forall n,\notag\\
    &&\hspace*{8mm}\mbox{C2b:}\hspace*{1mm}b_n[l]\in\left\{0,\hspace*{1mm}1\right\},\hspace*{1mm}\forall l,\hspace*{1mm}\forall n,\vspace*{-3mm}
\end{eqnarray}
where $\gamma_k\geq 0$ denotes the pre-defined minimum required SINR of user $k$. There are two challenges to obtain the global optimum of \eqref{Ori_Problem}. First, the optimization variables $\mathbf{W}$ and $\mathbf{B}$ are coupled in constraint C1. Second, the feasible set defined by constraint C2b is highly non-convex. Overall, problem \eqref{Ori_Problem} belongs to the class of combinatorial optimization problems which are generally NP-hard \cite{xu2022optimal, wu2019beamforming, shi2022multiuser, hu2021robust}. 

We note that the problem formulation in \eqref{Ori_Problem} assumes availability of perfect CSI. The modifications needed to account for imperfect CSI will be presented in Section \rom{4}.
\vspace*{-1mm}
\section{Algorithm Design for Perfect CSI}
\vspace*{-1mm}
\begin{figure*}[!htbp]
    \centering
    \includegraphics[width=6.8in]{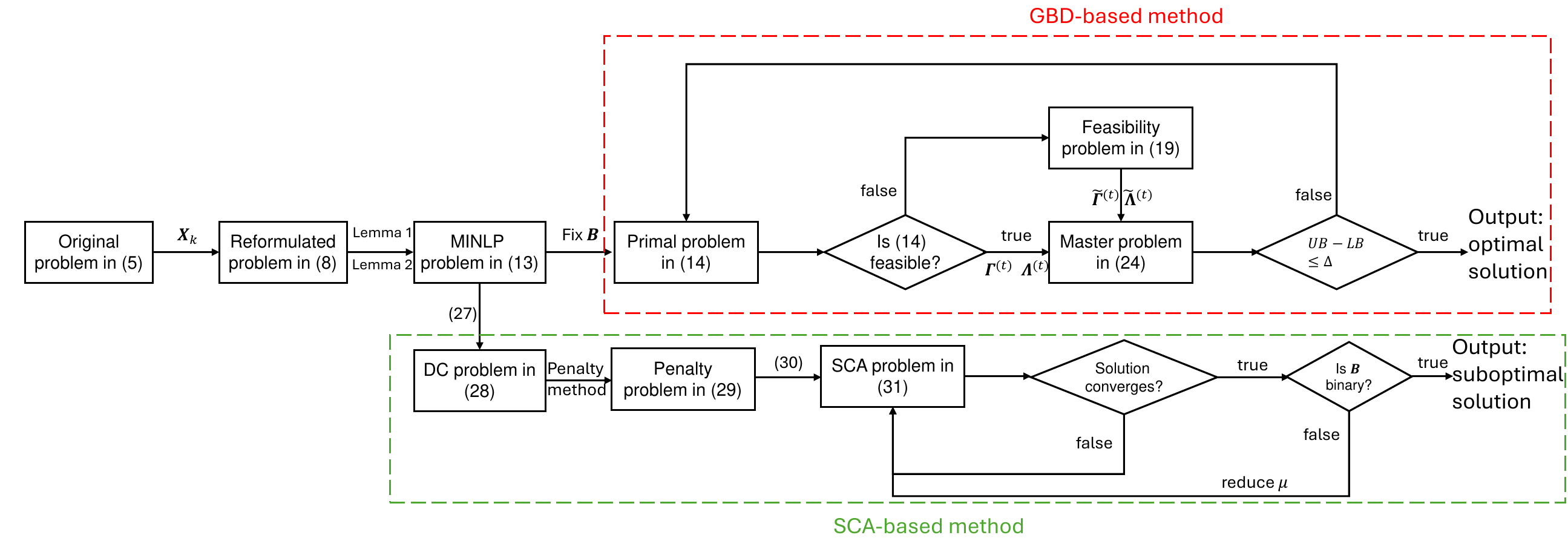}
    \caption{\color{black}Flowchart of the algorithms proposed for the case of perfect CSI.}
    \label{fig:FC_perfect_CSI}
\end{figure*}
{\color{black}In this section, assuming perfect CSI is available, we first reformulate the problem in \eqref{Ori_Problem} into an equivalent yet more tractable MINLP problem. Then, we develop two novel algorithms by exploiting the GBD and SCA methods, respectively, which attain the global optimum and a local optimum of the reformulated problem, respectively. An overview of the individual steps of the derivation and flow of the proposed algorithms is given in Fig. \ref{fig:FC_perfect_CSI}.}
\vspace*{-1mm}
\subsection{Problem Transformation}
\vspace*{-1mm}
{\color{black}To facilitate the optimal resource allocation design, we first define $\mathbf{X}_k=\mathbf{B}{\mathbf{H}}_k\hat{\mathbf{F}}_k\mathbf{W}\in\mathbb{C}^{L\times K}$ and rewrite \eqref{orichannel1} as follows
\begin{equation}
    y_k={\boldsymbol{\theta}^T\mathbf{x}_{k,k}s_k}+\sum_{k'\in\mathcal{K}\setminus k}{\boldsymbol{\theta}^T\mathbf{x}_{k,k'}s_{k'}}+n_k,
    \end{equation}
    where $\mathbf{x}_{k,k'}\in\mathbb{C}^{L}$ denotes the $k'$-th column of $\mathbf{X}_k$.}
Then, we can rewrite the SINR of user $k$ in \eqref{SINR} as 
\begin{equation}
     \mathrm{SINR}_k=\frac{|\bm{\theta}^T\mathbf{x}_{k,k}|^2}{\sum_{k'\in\mathcal{K}\setminus\{k\}}|\bm{\theta}^T\mathbf{x}_{k,k'}|^2+\sigma_k^2}.
\end{equation}
Next, we define $\mathbf{X}=[\mathbf{X}_1,\cdots,\mathbf{X}_K]\in\mathbb{C}^{L\times K^2}$ to collect all $\mathbf{X}_k, \forall k\in\mathcal{K}$, and recast the problem in \eqref{Ori_Problem} equivalently as follows
\begin{eqnarray}
\label{Ori_Problem1}
    &&\hspace*{-4mm}\underset{\mathbf{B},\,\mathbf{W},\,\mathbf{X}}{\mino}\hspace*{2mm}\sum_{k\in\mathcal{K}}\left\|\mathbf{w}_k\right\|_2^2\notag\\
    &&\hspace*{1mm}\mbox{s.t.}\hspace*{7mm} \mbox{C1},\mbox{C2a}, \mbox{C2b},
    \hspace*{1mm}\mbox{C3:}\,\mathbf{X}_k=\mathbf{B}{\mathbf{H}}_k\hat{\mathbf{F}}_k\mathbf{W},
\end{eqnarray}
where equality constraint C3 is non-convex due to the coupling between $\mathbf{B}$ and $\mathbf{W}$.
To deal with this difficulty, we exploit the following lemma to transform equality constraint ${\mbox{C3}}$ into two equivalent inequality constraints.
\vspace*{-2mm}
\begin{lemma}
Based on Schur's complement, equality constraint ${\mbox{C3}}$ is equivalent to the following inequality constraints:
\begin{eqnarray}
\mathrm{{C3a:}}&\hspace*{1mm}\label{sdp}
   \begin{bmatrix}
        \mathbf{S}_k & \mathbf{X}_k & \mathbf{B}{\mathbf{H}}_k\\
        \mathbf{X}_k^H & \mathbf{T}_k & \mathbf{W}^H\hat{\mathbf{F}}_k^H\\
        {\mathbf{H}}_k^H\mathbf{B}^H & \hat{\mathbf{F}}_k\mathbf{W} & \mathbf{I}_{N+1}
    \end{bmatrix}&\succeq \mathbf{0},\forall k\in\mathcal{K},\notag\\
\mathrm{{C3b:}}&\hspace*{1mm}\label{DC}
    \mathrm{Tr}\left(\mathbf{S}_k-\mathbf{B}{\mathbf{H}}_k{\mathbf{H}}_k^H\mathbf{B}^H\right)&\leq0,\forall k\in\mathcal{K},\vspace*{-2mm}
\end{eqnarray}
where $\mathbf{S}_k\in\mathbb{C}^{L\times L}\succeq \mathbf{0}$ and $\mathbf{T}_k\in\mathbb{C}^{K\times K}\succeq \mathbf{0},\forall k\in\mathcal{K}$, are auxiliary optimization variables.\vspace*{-2mm}
\end{lemma}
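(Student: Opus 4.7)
The plan is to establish the equivalence in two directions by repeatedly applying the Schur complement lemma together with the elementary fact that a positive semidefinite block matrix with a zero block on its diagonal must have the corresponding off-diagonal blocks equal to zero.

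First, I would prove the ``only if'' direction. Assuming $\mathbf{X}_k=\mathbf{B}\mathbf{H}_k\hat{\mathbf{F}}_k\mathbf{W}$, I would choose the auxiliary matrices as $\mathbf{S}_k=\mathbf{B}\mathbf{H}_k\mathbf{H}_k^H\mathbf{B}^H$ and $\mathbf{T}_k=\mathbf{W}^H\hat{\mathbf{F}}_k^H\hat{\mathbf{F}}_k\mathbf{W}$. Both are manifestly positive semidefinite, and substituting these into the block matrix of C3a produces $\mathbf{M}_k\mathbf{M}_k^H$ with $\mathbf{M}_k=\bigl[(\mathbf{B}\mathbf{H}_k)^H,\hat{\mathbf{F}}_k\mathbf{W},\mathbf{I}_{N+1}\bigr]^H$ (up to conjugate transpose in one block), which is automatically PSD. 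Moreover $\mathrm{Tr}(\mathbf{S}_k-\mathbf{B}\mathbf{H}_k\mathbf{H}_k^H\mathbf{B}^H)=0$, so C3b is satisfied with equality.

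For the ``if'' direction, I would start from C3a and apply Schur's complement with respect to the invertible block $\mathbf{I}_{N+1}$. This yields the equivalent $2\times 2$ block PSD condition
\begin{equation*}
\begin{bmatrix}
\mathbf{S}_k-\mathbf{B}\mathbf{H}_k\mathbf{H}_k^H\mathbf{B}^H & \mathbf{X}_k-\mathbf{B}\mathbf{H}_k\hat{\mathbf{F}}_k\mathbf{W}\\
(\mathbf{X}_k-\mathbf{B}\mathbf{H}_k\hat{\mathbf{F}}_k\mathbf{W})^H & \mathbf{T}_k-\mathbf{W}^H\hat{\mathbf{F}}_k^H\hat{\mathbf{F}}_k\mathbf{W}
\end{bmatrix}\succeq\mathbf{0}.
\end{equation*}
In particular, the $(1,1)$ block $\mathbf{S}_k-\mathbf{B}\mathbf{H}_k\mathbf{H}_k^H\mathbf{B}^H$ is PSD. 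Combining this with C3b, which forces its trace to be non-positive, I conclude that this PSD matrix has zero trace and therefore is identically zero, i.e., $\mathbf{S}_k=\mathbf{B}\mathbf{H}_k\mathbf{H}_k^H\mathbf{B}^H$.

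Finally, I would invoke the standard linear-algebra fact that if a Hermitian PSD block matrix has a zero diagonal block, then the off-diagonal blocks in the same row and column must vanish (this follows from applying the PSD inequality $\mathbf{z}^H\mathbf{M}\mathbf{z}\geq 0$ to vectors of the form $\mathbf{z}=[\alpha\mathbf{u}^T,\mathbf{v}^T]^T$ and letting $|\alpha|\to\infty$). Applied to the reduced PSD matrix above, this forces $\mathbf{X}_k-\mathbf{B}\mathbf{H}_k\hat{\mathbf{F}}_k\mathbf{W}=\mathbf{0}$, recovering C3. The main obstacle is really this last step: one must be careful that it is exactly the interplay between C3a (which gives PSD, hence the $(1,1)$ block is PSD) and C3b (which pins down the trace) that collapses the diagonal block to zero and thereby kills the off-diagonal block; neither constraint alone suffices. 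The PSD-ness of the auxiliary variables $\mathbf{S}_k$ and $\mathbf{T}_k$ is implied by the PSD-ness of the full block matrix in C3a, so it need not be imposed separately.
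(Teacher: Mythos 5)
Your proof is correct and complete, and it follows exactly the Schur-complement route that the lemma advertises; the paper itself does not spell out the argument but merely cites \cite[Appendix A]{6698281}, where the same reasoning (Schur complement with respect to the identity block, then using C3b to force the PSD Schur complement's $(1,1)$ block to vanish, which kills the off-diagonal block) is used. Your added observations---that the PSD-ness of $\mathbf{S}_k,\mathbf{T}_k$ is already implied by C3a and that neither C3a nor C3b alone suffices---are accurate and make the write-up self-contained.
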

\vspace*{-2mm}\begin{proof}
    Please refer to \cite[Appendix A]{6698281}.
\end{proof}

We note that constraint $\mbox{C3a}$ is a convex LMI constraint. 
Yet, constraint $\mbox{C3b}$ is in the form of a difference of two convex functions, which is a non-convex constraint. 
{\color{black}To circumvent this obstacle, we rewrite constraint C3b as a linear inequality constraint for $\mathbf{S}_k$, $\forall k \in\mathcal{K}$, as follows: 
\begin{equation}\label{LPDC}
    \begin{aligned}
         \mathrm{Tr}\left(\mathbf{S}_k-\mathbf{B}{\mathbf{H}}_k{\mathbf{H}}_k^H\mathbf{B}^H\right)&\overset{(a)}{=}\mathrm{Tr}(\mathbf{S}_k)-\sum_{n\in\mathcal{N}}|h_{k,n}|^2\mathbf{b}_n^T\mathbf{b}_n-1\\
         &\overset{(b)}{=}\mathrm{Tr}\big(\mathbf{S}_k\big)-\sum_{n\in\mathcal{N}}|h_{k,n}|^2-1\leq 0,
    \end{aligned}
    \end{equation}
     where equalities (a) and (b) hold due to the diagonality of matrix $\mathbf{H}_k=\mathrm{diag}([\mathbf{h}_k^H,1])$ and the nature of the binary selection vector, i.e., $\|\mathbf{b}_n\|_2^2=1$, $\forall n\in\mathcal{N}$, respectively.}
{\color{black} On the other hand, the SINR constraint C1 in (8) is also non-convex. Nevertheless, we observe that if $\mathbf{W}$ satisfies QoS constraint C1 in (8),  so does $\mathbf{W}\mathrm{diag}(e^{j\omega_1},\cdots,e^{j\omega_K})$ for any arbitrary phase shift $\omega_k$, $\forall k \in\mathcal{K}$, with unchanged value of the objective function \cite{zhang2008joint}. Thus, we can always find an optimal $\mathbf{w}_k$, for which 
$\bm{\theta}^T\mathbf{B}{\mathbf{H}}_k\hat{\mathbf{F}}_k\mathbf{w}_k\in\mathbb{R}$ holds. Therefore, we can assume that $\bm{\theta}^T\mathbf{B}{\mathbf{H}}_k\hat{\mathbf{F}}_k\mathbf{w}_k\in\mathbb{R}$ without loss of generality and optimality. Moreover, since $\mathbf{x}_{k,k}$ is defined as $\mathbf{x}_{k,k}=\mathbf{B}{\mathbf{H}}_k\hat{\mathbf{F}}_k\mathbf{w}_k$, $\bm{\theta}^T\mathbf{x}_{k,k}\in\mathbb{R}$ holds if $\bm{\theta}^T\mathbf{B}{\mathbf{H}}_k\hat{\mathbf{F}}_k\mathbf{w}_k\in\mathbb{R}$ holds.
Thus, we can assume ${\bm{\theta}}^T\mathbf{x}_{k,k}\in\mathbb{R}$ without loss of generality and optimality and transform C1 to a convex constraint by exploiting the following lemma.}
\vspace*{-2mm}
\begin{lemma}
Without loss of generality and optimality, we assume that ${\bm{\theta}}^T\mathbf{x}_{k,k}\in\mathbb{R}$ \cite{zhang2008joint}. Then, constraint C1 can be equivalently rewritten as
\begin{eqnarray}
&& \hspace*{-12mm}{{\mathrm{C1a}}}\mbox{:}\hspace*{1mm}\sqrt{\sum_{k'\in\mathcal{K}\setminus\{k\}}|{\bm{\theta}}^T\mathbf{x}_{k,k'}|^2+\sigma_k^2}\leq\frac{\operatorname{Re}\{{\bm{\theta}}^T\mathbf{x}_{k,k}\}}{\sqrt{\gamma_k}}, \hspace*{1mm}\forall k,\\
&&\hspace*{-12mm}\mathrm{C1b:}\hspace*{1mm}\operatorname{Im}\{{\bm{\theta}}^T\mathbf{x}_{k,k}\}=0,\hspace*{1mm}\forall k,\vspace*{-2mm}
\end{eqnarray}
where constraints C1a and C1b are both convex constraints.\vspace*{-2mm}
\end{lemma}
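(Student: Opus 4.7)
The plan is to prove Lemma 2 by showing that, given the WLOG assumption $\bm{\theta}^T\mathbf{x}_{k,k}\in\mathbb{R}$ justified in the paragraph preceding the lemma, the single nonconvex fractional inequality C1 is equivalent to the pair $\{\mathrm{C1a},\mathrm{C1b}\}$, and that both resulting constraints are convex. The justification of the WLOG assumption itself is already handled by the phase-rotation argument $\mathbf{w}_k\mapsto \mathbf{w}_k e^{j\omega_k}$ (which leaves the objective and interference power invariant while rotating $\bm{\theta}^T\mathbf{x}_{k,k}$ onto the real axis), so my proof only needs to make the algebraic equivalence precise and then verify convexity.

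First I would manipulate C1 directly: since the denominator $\sum_{k'\neq k}|\bm{\theta}^T\mathbf{x}_{k,k'}|^2+\sigma_k^2$ is strictly positive, C1 is equivalent to $|\bm{\theta}^T\mathbf{x}_{k,k}|^2\ge \gamma_k\bigl(\sum_{k'\neq k}|\bm{\theta}^T\mathbf{x}_{k,k'}|^2+\sigma_k^2\bigr)$, and taking square roots of the two nonnegative sides gives $|\bm{\theta}^T\mathbf{x}_{k,k}|\ge\sqrt{\gamma_k}\sqrt{\sum_{k'\neq k}|\bm{\theta}^T\mathbf{x}_{k,k'}|^2+\sigma_k^2}$. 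Writing $|\bm{\theta}^T\mathbf{x}_{k,k}|^2=(\operatorname{Re}\{\bm{\theta}^T\mathbf{x}_{k,k}\})^2+(\operatorname{Im}\{\bm{\theta}^T\mathbf{x}_{k,k}\})^2$ and invoking the WLOG assumption $\operatorname{Im}\{\bm{\theta}^T\mathbf{x}_{k,k}\}=0$, which is exactly C1b, I would reduce the last inequality to $|\operatorname{Re}\{\bm{\theta}^T\mathbf{x}_{k,k}\}|\ge\sqrt{\gamma_k}\sqrt{\sum_{k'\neq k}|\bm{\theta}^T\mathbf{x}_{k,k'}|^2+\sigma_k^2}$. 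A second phase rotation (multiplying $\mathbf{w}_k$ by $-1$ if necessary, which again preserves both the objective and all interference terms) shows that we may additionally take $\operatorname{Re}\{\bm{\theta}^T\mathbf{x}_{k,k}\}\ge 0$; the absolute value can then be dropped, yielding exactly C1a after dividing by $\sqrt{\gamma_k}>0$. The reverse implication is immediate: any $(\mathbf{B},\mathbf{W},\mathbf{X})$ satisfying C1a and C1b clearly meets C1 by reversing these steps.

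The final step is to verify convexity of C1a and C1b. Since $\bm{\theta}^T\mathbf{x}_{k,k'}$ is affine in the optimization variable $\mathbf{x}_{k,k'}$, the left-hand side of C1a is the composition of the Euclidean norm with an affine map and hence is a convex function, while the right-hand side is affine; thus C1a is a second-order cone constraint. C1b is an affine equality in $\mathbf{x}_{k,k}$ and therefore convex as well.

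I do not expect a genuine obstacle here, as the lemma is essentially a careful bookkeeping exercise. The only subtlety worth flagging in the write-up is that the ``WLOG and without loss of optimality'' phrase conceals two independent phase-rotation invariances of the feasible set and objective: one to kill the imaginary part of $\bm{\theta}^T\mathbf{x}_{k,k}$ (giving C1b) and one to enforce nonnegativity of its real part (allowing the absolute value to be dropped in C1a). Making both rotations explicit, and noting that they act only through $\mathbf{w}_k$ while leaving $\|\mathbf{w}_k\|_2^2$ and all interference magnitudes $|\bm{\theta}^T\mathbf{x}_{k,k'}|$ with $k'\neq k$ unchanged, is what secures the equivalence of C1 and $\{\mathrm{C1a},\mathrm{C1b}\}$ at the level of optimal values rather than only feasibility.
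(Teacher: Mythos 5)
Your proof is correct and follows the same standard argument that the paper delegates to \cite[Appendix II]{zhang2008joint}: exploit the phase-rotation invariance of the objective and of all interference magnitudes under $\mathbf{w}_k\mapsto e^{j\omega_k}\mathbf{w}_k$ to make $\bm{\theta}^T\mathbf{x}_{k,k}$ real and nonnegative, then rewrite the SINR inequality as a second-order cone constraint plus an affine equality. Your explicit remark that the equivalence holds at the level of optimal values (not pointwise on the feasible set) is a worthwhile clarification that the paper leaves implicit.
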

\vspace*{-2mm}
\begin{proof}
Please refer to \cite[Appendix \rom{2}]{zhang2008joint}.\vspace*{-2mm}
\end{proof}
For notational simplicity, we define $\mathbf{S}=[\mathbf{S}_1,\cdots,\mathbf{S}_K]\in\mathbb{C}^{L\times LK}$ and $\mathbf{T}=[\mathbf{T}_1,\cdots,\mathbf{T}_K]\in\mathbb{C}^{K\times K^2}$ as the collections of all $\mathbf{S}_k$ and $\mathbf{T}_k,\forall k \in\mathcal{K}$, respectively. Thus, the original optimization problem can be equivalently transformed into
\color{black}
\begin{eqnarray}
\label{Reformulated_problem}
    &&\hspace*{-6mm}\underset{\substack{\mathbf{X},\mathbf{W},\mathbf{B},\mathbf{S},\mathbf{T}}}{\mino}\hspace*{2mm}\sum_{k\in\mathcal{K}}\left\|\mathbf{w}_k\right\|_2^2\notag\\
    &&\hspace*{0mm}\mbox{s.t.}\hspace*{8mm}\mbox{C1a},\mbox{C1b},\mbox{C2a},{\mbox{C2b}},{\mbox{C3a}}\\
    &&\hspace*{12mm}\overline{\mbox{C3b}}:\mathrm{Tr}\big(\mathbf{S}_k\big)-\sum_{n\in\mathcal{N}}|h_{k,n}|^2-1\leq 0,\ \forall k \in\mathcal{K}.\notag
\end{eqnarray}
\color{black}
\vspace*{-2mm}\begin{remark}
For a given fixed binary matrix $\mathbf{B}$, the MINLP problem in \eqref{Reformulated_problem} degenerates into a convex optimization problem with respect to (w.r.t.) continuous variables $\mathbf{X},\mathbf{W},\mathbf{S}$, and $\mathbf{T}$. On the other hand, the problem in \eqref{Reformulated_problem} is a linear programming problem w.r.t. the discrete matrix $\mathbf{B}$ if continuous variables $\mathbf{X},\mathbf{W},\mathbf{S}$, and $\mathbf{T}$ are fixed. Therefore, according to \cite{geoffrion1972generalized,sahinidis1991convergence}, GBD is guaranteed to yield the globally optimal solution of \eqref{Reformulated_problem}.
\end{remark}
\vspace*{-2mm}\subsection{Proposed GBD-Based Optimal Algorithm}\label{Opt_perfect_CSI}
We develop a GBD-based iterative algorithm to optimally solve the combinatorial optimization problem \eqref{Reformulated_problem}. The crux of GBD is the partitioning of an MINLP problem into a primal problem and a master problem. These two problems are solved iteratively until convergence \cite{geoffrion1972generalized}.
In particular, we obtain the primal problem by fixing the binary matrix $\mathbf{B}$ in \eqref{Reformulated_problem}, and then solve the primal problem w.r.t. $\mathbf{X}$, $\mathbf{W}$, $\mathbf{S}$, and $\mathbf{T}$, which produces an upper bound (UB) for the original problem in \eqref{Reformulated_problem}. Furthermore, by fixing optimization variables $\mathbf{X}$, $\mathbf{W}$, $\mathbf{S}$, and $\mathbf{T}$, the master problem is in the form of a typical mixed integer linear programming (MILP) problem w.r.t. $\mathbf{B}$. The solution of the master problem yields a lower bound (LB) for the optimization problem in \eqref{Reformulated_problem}. By solving the primal problem and the master problem iteratively, the proposed GBD-based algorithm is guaranteed to converge to the globally optimal solution of \eqref{Reformulated_problem} \cite{geoffrion1972generalized}. In the following, we formulate and solve the primal and master problems for the $i$-th iteration of the GBD algorithm.
\subsubsection{Primal Problem}

Given binary matrix $\mathbf{B}^{(i-1)}$ obtained by solving the master problem in the $(i-1)$-th iteration, the primal problem in the $i$-th iteration is given by

\vspace*{-2mm}
\begin{eqnarray}
\label{Primal_problem}
    \underset{\bm{\Gamma}}{\mino}\hspace*{2mm}\sum_{k\in\mathcal{K}}\left\|\mathbf{w}_k\right\|_2^2
    \hspace*{4mm}\mbox{s.t.}\hspace*{2mm} \mbox{C1a}, \mbox{C1b},\mbox{C3a}, \overline{\mbox{C3b}},
\end{eqnarray}
where $\bm{\Gamma}=\{\mathbf{X},\mathbf{W},\mathbf{S},\mathbf{T}\}$ denotes the collection of the optimization variables in \eqref{Primal_problem}.
Note that the problem in \eqref{Primal_problem} is convex w.r.t. $\bm{\Gamma}$ and can be solved by using standard off-the-shelf convex program solvers such as CVX \footnote{\color{black}A closed-form solution of \eqref{Primal_problem} cannot be derived analytically due to the challenges in handling LMI and second order cone constraints. Thus, the performance of the proposed algorithm can only be verified by numerical evaluation, see Section \rom{5}.} \cite{grant2008cvx}. If the problem in \eqref{Primal_problem} is feasible in the $i$-th iteration, we denote the optimal solution by $\bm{\Gamma}^{(i)}=\{\mathbf{X}^{(i)},\mathbf{W}^{(i)},\mathbf{S}^{(i)}, \mathbf{T}^{(i)}\}$. Then, we obtain the Lagrangian of \eqref{Primal_problem} as follows
\begin{equation}\label{Largrangian_conf}
\hspace*{-1mm}\mathcal{L}(\bm{\Gamma},\mathbf{B}^{(i-1)},\bm{\Lambda})\hspace*{-0.5mm}=\hspace*{-1mm}\sum_{k\in\mathcal{K}}\left\|\mathbf{w}_k\right\|_2^2\hspace*{-0.25mm}+f_1(\bm{\Gamma},\bm{\Lambda})\hspace*{-0.25mm}+f_2(\mathbf{B}^{(i-1)},\bm{\Lambda}),
\end{equation}
where $f_1(\bm{\Gamma},\bm{\Lambda})$ is given by \eqref{f1} which is shown at the top of the next page and 
\vspace*{-2mm}
\begin{figure*}[!t]
\normalsize

\setcounter{equation}{15}

\begin{equation}\label{f1}
\begin{aligned}
f_1(\bm{\Gamma},\bm{\Lambda})&=\sum_{k\in\mathcal{K}}\Biggr[\alpha_k\left(\sqrt{\sum_{k'\neq k}\hspace*{-2mm}|\bm{\theta}^T\mathbf{x}_{k,k'}|^2+\sigma_k^2}-\frac{\operatorname{Re}\{\bm{\theta}^T\mathbf{x}_{k,k}\}}{\sqrt{\gamma_k}}\right)+\beta_k\big(\operatorname{Im}\{\bm{\theta}^T\mathbf{x}_{k,k}\}\big)+\zeta_k(\mathrm{Tr}(\mathbf{S}_k)-(N+1))\\
&+\mathrm{Tr}\big(\mathbf{S}_k\mathbf{Q}_{k,11}\big)+\mathrm{Tr}\big(\mathbf{T}_k\mathbf{Q}_{k,22}\big)+2\mathrm{Re}\left\{\mathrm{Tr}\big({\mathbf{Q}^H_{k,32}}\widehat{\mathbf{F}}_k\mathbf{W}\big)+\mathrm{Tr}\big(\mathbf{X}_k\mathbf{Q}_{k,21}\big)\right\}\Biggr],
\end{aligned}\vspace*{0mm}
\end{equation}
\setcounter{equation}{16}
\hrulefill
\vspace*{0pt}
\end{figure*}

\vspace*{-1mm}
\begin{eqnarray}
     f_2(\mathbf{B}^{(i-1)},\bm{\Lambda})=\sum_{k\in\mathcal{K}}2\mathrm{Re}\left\{\mathrm{Tr}\big(\mathbf{B}^{(i-1)}{\mathbf{H}}_k\mathbf{Q}_{k,31}\big)\right\}.\vspace*{-1mm}
\end{eqnarray}
Here, $\bm{\Lambda}=\left\{\alpha_k,\beta_k,\mathbf{Q}_k,\zeta_k \right\}$ denotes the collection of dual variables, where $\alpha_k,\beta_k$, $\mathbf{Q}_k$, and $\zeta_k$ represent the dual variables for constraints C1a, C1b, C3a, and $\overline{\mbox{C3b}}$, respectively, and the dual variable matrix $\mathbf{Q}_k\in\mathbb{C}^{(N+K+L+1)\times (N+K+L+1)}$ for constraint C3a is decomposed into nine sub-matrices as follows
\begin{equation}\label{Qi}
  \mathbf{Q}_k=\left[ \begin{array}{ccc}
        \mathbf{Q}_{k,11} & \mathbf{Q}_{k,21}^H & \mathbf{Q}_{k,31}^H\\
        \mathbf{Q}_{k,21} & \mathbf{Q}_{k,22} & \mathbf{Q}_{k,32}^H\\
        \mathbf{Q}_{k,31} & \mathbf{Q}_{k,32} & \mathbf{Q}_{k,33}
    \end{array}\right],\forall k \in\mathcal{K},\vspace*{0mm}
\end{equation}
with $\mathbf{Q}_{k,11}\in\mathbb{C}^{L\times L}$, $\mathbf{Q}_{k,21}\in\mathbb{C}^{K\times L}$, $\mathbf{Q}_{k,22}\in\mathbb{C}^{K\times K}$, $\mathbf{Q}_{k,31}\in\mathbb{C}^{(N+1)\times L}$, $\mathbf{Q}_{k,32}\in\mathbb{C}^{(N+1)\times K}$, and $\mathbf{Q}_{k,33}\in\mathbb{C}^{(N+1)\times (N+1)}$.
On the other hand, if problem \eqref{Primal_problem} is infeasible for a given $\mathbf{B}^{(i-1)}$, then we formulate an $l_1$-minimization feasibility-check problem as follows
\begin{eqnarray}
\label{Feasible_problem}
    &&\hspace*{-6mm}\underset{\bm{\Gamma},\,\bm{\lambda}}{\mino}\hspace*{1mm}\sum_{k\in\mathcal{K}}\lambda_k\notag\\
    &&\hspace*{-1mm}\mbox{s.t.}\hspace*{3mm}\overline{\mbox{C1a}}\mbox{:}\hspace*{1mm}\sqrt{\sum_{k'\neq k}\hspace*{-1mm}|\bm{\theta}^T\mathbf{x}_{k,k'}|^2\hspace*{-0.5mm}+\hspace*{-0.5mm}\sigma_k^2}\hspace*{-0.5mm}-\hspace*{-0.5mm}\frac{\operatorname{Re}\{\bm{\theta}^T\mathbf{x}_{k,k}\}}{\sqrt{\gamma_k}}\leq \lambda_k,\forall k,\notag\\
    &&\hspace*{6mm}\mbox{C1b},\mbox{C3a},\overline{\mbox{C3b}},\hspace*{1mm}\mbox{C4}\mbox{:}\hspace*{1mm}\lambda_k\geq 0, \forall k,\vspace*{-2mm}
\end{eqnarray}
where $\bm{\lambda}=[\lambda_1,\cdots,\lambda_K]$ is an auxiliary optimization vector. Problem \eqref{Feasible_problem} is convex w.r.t. $\bm{\Gamma}$ and $\bm{\lambda}$ and always feasible \cite{geoffrion1972generalized,sahinidis1991convergence}. Thus, problem \eqref{Feasible_problem} can be solved with CVX \cite{grant2008cvx}.
{\color{black}\begin{remark}
    If $\mathbf{B}^{(i-1)}$ is infeasible for the primal problem in \eqref{Primal_problem}, the objective function value of \eqref{Feasible_problem} will be positive, i.e.,  $\sum_{k\in\mathcal{K}}\lambda_k>0$ holds. On the other hand, if the objective function value $\sum_{k\in\mathcal{K}}\lambda_k$ is equal to zero, i.e., $\lambda_1=\lambda_2\cdots=\lambda_K=\sum_{k\in\mathcal{K}}\lambda_k=0$, the constraints in \eqref{Feasible_problem} are identical to the constraints in \eqref{Primal_problem}, and thus $\mathbf{B}^{(i-1)}$ is feasible for the primal problem in \eqref{Primal_problem} \cite[Theorem 2.2]{geoffrion1972generalized}.
\end{remark}}
Furthermore, the Lagrangian of \eqref{Feasible_problem} is given by
\begin{equation}\label{Largrangian_feasible}
\begin{aligned}
    \widetilde{\mathcal{L}}(\bm{\Gamma},\mathbf{B}^{(i-1)},\widetilde{\bm{\Lambda}})&={f}_1(\bm{\Gamma},\widetilde{\bm{\Lambda}})+{f}_2(\mathbf{B}^{(i-1)},\widetilde{\bm{\Lambda}}),
\end{aligned}
\end{equation}
where $\widetilde{\bm{\Lambda}}=[\widetilde{\alpha}_k,\widetilde{\beta_k},\widetilde{\mathbf{Q}}_k,\widetilde{\zeta}_k]$ denotes the collection of dual variables $\widetilde{\alpha}_k\in\mathbb{R},\widetilde{\beta}_k\in\mathbb{R}$, \\
$\widetilde{\mathbf{Q}}_k\in\mathbb{C}^{(N+K+L+1)\times (N+K+L+1)}$, and $\widetilde{\zeta}_k\in\mathbb{R}$ for constraints $\overline{\mbox{C1a}}$, C1b, C3a, and $\overline{\mbox{C3b}}$ in \eqref{Feasible_problem}, respectively. Similar to the notation in \eqref{Primal_problem}, the optimal solution of \eqref{Feasible_problem} in the $i$-th iteration is denoted by $\widetilde{\bm{\Gamma}}^{(i)}=\{\widetilde{\mathbf{X}}^{(i)},\widetilde{\mathbf{W}}^{(i)},\widetilde{\mathbf{S}}^{(i)}, \widetilde{\mathbf{T}}^{(i)}\}$, while $\widetilde{\bm{\Lambda}}^{(i)}$ denotes the optimal dual solution. The solutions of the feasibility-check problem in \eqref{Feasible_problem} are used to generate the feasibility cut removing infeasible solutions $\mathbf{B}^{(i-1)}$ from the feasible set of the master problem in the subsequent iterations.
\subsubsection{Master Problem}
{\color{black}The master problem can be derived based on nonlinear convex duality theory \cite{geoffrion1972generalized}. In particular, since the MINLP problem in \eqref{Reformulated_problem} is convex w.r.t. continuous variables $\mathbf{X},\mathbf{W},\mathbf{S}$, and $\mathbf{T}$ for a given binary variable $\mathbf{B}$, the MINLP problem in \eqref{Ori_Problem1} is equivalent to the following Lagrangian dual problem,
\begin{eqnarray}
    &&\hspace*{-6mm}\underset{\mathbf{B}}{\mino}\hspace*{2mm}\max_{\bm{\Lambda}\in\mathcal{F}_{\Lambda}}\min_{\bm{\Gamma}}\mathcal{L}(\bm{\Gamma},\mathbf{B},\bm{\Lambda})\notag\\
    &&\hspace*{0mm}\mbox{s.t.}\hspace*{6mm}\mbox{C2a}, \mbox{C2b},\\
    &&\hspace*{12mm}\mathbf{B}\in\mathcal{F}_B,\notag
\end{eqnarray}
where $\mathcal{F}_{\Lambda}$ and $\mathcal{F}_B$ denote the feasible sets of $\bm{\Lambda}$ and $\mathbf{B}$, respectively. Next, based on \cite[Theorem 2.2]{geoffrion1972generalized}, the feasible set of $\mathbf{B}$ can be represented by the following inequality constraints
\begin{equation}
    0\geq\min_{\bm{\Gamma}}\widetilde{\mathcal{L}}(\bm{\Gamma},\mathbf{B},\widetilde{\bm{\Lambda}}), \, \forall \widetilde{\bm{\Lambda}}\in\mathcal{F}_{\widetilde{{\Lambda}}},
\end{equation}
where $\mathcal{F}_{\widetilde{{\Lambda}}}$ denotes the feasible region of $\widetilde{\boldsymbol{\Lambda}}$. Moreover, we introduce an auxiliary variable \\$\eta=\max_{\bm{\Lambda}\in\mathcal{F}_{\Lambda}}\min_{\bm{\Gamma}}\mathcal{L}(\bm{\Gamma},\mathbf{B},\bm{\Lambda})$ and reformulate the Lagrangian dual problem as follows
\begin{eqnarray}\label{Dual_problem_epi}
    &&\hspace*{-6mm}\underset{\mathbf{B}}{\mino}\hspace*{2mm}\eta\notag\\
    &&\hspace*{0mm}\mbox{s.t.}\hspace*{6mm}\mbox{C2a}, \mbox{C2b},\\
    &&\hspace*{12mm}\hspace*{1mm}\eta\geq\min_{\bm{\Gamma}}\mathcal{L}(\bm{\Gamma},\mathbf{B},\bm{\Lambda}),\hspace*{0mm}\forall \bm{\Lambda}\in\mathcal{F}_{{{\Lambda}}},\notag\\
    &&\hspace*{12mm}\hspace*{1mm}0\geq\min_{\bm{\Gamma}}\widetilde{\mathcal{L}}(\bm{\Gamma},\mathbf{B},\widetilde{\bm{\Lambda}}),\hspace*{0mm}\forall \widetilde{\bm{\Lambda}}\in\mathcal{F}_{\widetilde{{\Lambda}}}.\notag\vspace*{-2mm}
\end{eqnarray}
Note that the reformulated dual problem cannot be solved due to the infinite number of inequality constraints. Therefore, we relax constraints $\eta\geq\min_{\bm{\Gamma}}\mathcal{L}(\bm{\Gamma},\mathbf{B},\bm{\Lambda}),\hspace*{0mm}\forall \bm{\Lambda}\in\mathcal{F}_{{{\Lambda}}}$ and $0\geq\min_{\bm{\Gamma}}\widetilde{\mathcal{L}}(\bm{\Gamma},\mathbf{B},\widetilde{\bm{\Lambda}}),\hspace*{0mm}\forall \widetilde{\bm{\Lambda}}\in\mathcal{F}_{\widetilde{{\Lambda}}}$ to formulate the master problem. In particular, we define sets $\mathcal{F}$ and $\mathcal{I}$ as the collections of iteration indices for which \eqref{Primal_problem} is feasible and infeasible, respectively, and recast the master problem in the $i$-th iteration as follows
\vspace*{-2mm}
\begin{eqnarray}
\label{Master_problem}
    &&\hspace*{-6mm}\underset{\mathbf{B},\eta}{\mino}\hspace*{2mm}\eta\notag\\
    &&\hspace*{0mm}\mbox{s.t.}\hspace*{6mm}\mbox{C2a}, \mbox{C2b},\\
    &&\hspace*{10mm}\mbox{C5a}\mbox{:}\hspace*{1mm}\eta\geq\min_{\bm{\Gamma}}\mathcal{L}(\bm{\Gamma},\mathbf{B},\bm{\Lambda}^{(t)}),\hspace*{0mm}\forall t\in\mathcal{F}\cap\{1,\cdots,i\},\notag\\
    &&\hspace*{10mm}\mbox{C5b}\mbox{:}\hspace*{1mm}0\geq\min_{\bm{\Gamma}}\widetilde{\mathcal{L}}(\bm{\Gamma},\mathbf{B},\widetilde{\bm{\Lambda}}^{(t)}),\hspace*{0mm}\forall t\in\mathcal{I}\cap\{1,\cdots,i\},\notag\vspace*{-2mm}
\end{eqnarray}
where constraints C5a and C5b represent the optimality and feasibility cuts \cite{geoffrion1972generalized}, respectively. In particular, introducing the optimality cut reduces the search region of the globally optimal solution. On the other hand, the feasibility cut removes the corresponding infeasible solutions from the search space. If the given $\mathbf{B}^{(i-1)}$ is infeasible for the primal problem in (14), $0<\widetilde{\mathcal{L}}(\bm{\Gamma}^{(i)},\mathbf{B}^{(i-1)},\widetilde{\bm{\Lambda}}^{(i)})$ must hold based on Remark 2. Thus, the infeasible $\mathbf{B}^{(i-1)}$ can be removed from the feasible set by introducing a constraint $0\geq\min_{\bm{\Gamma}}\widetilde{\mathcal{L}}(\bm{\Gamma},\widetilde{\mathbf{B}},\widetilde{\bm{\Lambda}}^{(i)})$. Note that the master problem is a relaxed version of the Lagrangian dual problem in \eqref{Dual_problem_epi}. Thus, the optimal solution of the master problem provides a lower bound for the original MINLP problem in \eqref{Reformulated_problem}.}
Moreover, the inner minimization in C5a and C5b can be obtained from the optimal solutions of the primal problem in \eqref{Primal_problem} and the feasibility-check problem in \eqref{Feasible_problem} exploiting the following lemma:
\vspace*{-2mm}\begin{lemma}
Inequality constraints C5a and C5b can be equivalently recast as the following two linear inequalities:
\begin{eqnarray}
\label{Recast_C5}
    &&\hspace*{-4mm}\overline{\mathrm{C5a}}\mathrm{:}\hspace*{1mm}\eta\geq \sum_{k\in\mathcal{K}}\left\|\mathbf{w}_k^{(t)}\right\|_2^2+f_1(\bm{\Gamma}^{(t)},\bm{\Lambda}^{(t)})+f_2(\mathbf{B},\bm{\Lambda}^{(t)}),\notag\\
    &&\hspace*{11mm}\forall t\in\mathcal{F}\cap\{1,\cdots,i\},\\
    &&\hspace*{-4mm}\overline{\mathrm{C5b}}\mathrm{:}\hspace*{1mm}0\geq f_1(\Tilde{\bm{\Gamma}}^{(t)},\Tilde{\bm{\Lambda}}^{(t)})+{f}_2(\mathbf{B},\widetilde{\bm{\Lambda}}^{(t)}),\forall t\in\mathcal{I}\cap\{1,\cdots,i\},\vspace*{-2mm}\notag
\end{eqnarray}
respectively.
\end{lemma}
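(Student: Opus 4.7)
The plan is to exploit the key structural observation that the Lagrangian $\mathcal{L}(\bm{\Gamma},\mathbf{B},\bm{\Lambda})$ in \eqref{Largrangian_conf} decomposes additively into $\sum_{k\in\mathcal{K}}\|\mathbf{w}_k\|_2^2+f_1(\bm{\Gamma},\bm{\Lambda})$, which depends only on $\bm{\Gamma}$ for a fixed dual variable $\bm{\Lambda}$, and $f_2(\mathbf{B},\bm{\Lambda})$, which depends only on $\mathbf{B}$. This separation is apparent from the definitions: $\mathbf{B}$ enters the LMI constraint $\mathrm{C3a}$ only through the off-diagonal block $\mathbf{B}\mathbf{H}_k$, whose Lagrangian contribution is isolated in $f_2$. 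Consequently, the inner minimization over $\bm{\Gamma}$ appearing in $\mathrm{C5a}$ and $\mathrm{C5b}$ can be decoupled from the dependence on $\mathbf{B}$.

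Next I would invoke strong duality for the convex primal problem \eqref{Primal_problem}. Since the problem is convex and Slater's condition holds whenever it is feasible, the primal optimum $\bm{\Gamma}^{(t)}$ and the dual optimum $\bm{\Lambda}^{(t)}$ returned in the $t$-th iteration for the fixed matrix $\mathbf{B}^{(t-1)}$ satisfy the KKT stationarity condition $\bm{\Gamma}^{(t)}\in\arg\min_{\bm{\Gamma}}\mathcal{L}(\bm{\Gamma},\mathbf{B}^{(t-1)},\bm{\Lambda}^{(t)})$. Because $f_2(\mathbf{B}^{(t-1)},\bm{\Lambda}^{(t)})$ is constant in $\bm{\Gamma}$, this is equivalent to saying that $\bm{\Gamma}^{(t)}$ minimizes $\sum_{k\in\mathcal{K}}\|\mathbf{w}_k\|_2^2+f_1(\bm{\Gamma},\bm{\Lambda}^{(t)})$, an expression that does not involve $\mathbf{B}$ at all. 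Hence $\bm{\Gamma}^{(t)}$ is simultaneously a minimizer of $\mathcal{L}(\cdot,\mathbf{B},\bm{\Lambda}^{(t)})$ for every admissible $\mathbf{B}$, and evaluating the minimum yields
\begin{equation}
\min_{\bm{\Gamma}}\mathcal{L}(\bm{\Gamma},\mathbf{B},\bm{\Lambda}^{(t)})=\sum_{k\in\mathcal{K}}\|\mathbf{w}_k^{(t)}\|_2^2+f_1(\bm{\Gamma}^{(t)},\bm{\Lambda}^{(t)})+f_2(\mathbf{B},\bm{\Lambda}^{(t)}).
\end{equation}
Substituting this identity into $\mathrm{C5a}$ immediately produces $\overline{\mathrm{C5a}}$ as displayed in \eqref{Recast_C5}.

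For $\overline{\mathrm{C5b}}$, the same reasoning applies to the feasibility-check Lagrangian $\widetilde{\mathcal{L}}=f_1(\bm{\Gamma},\widetilde{\bm{\Lambda}})+f_2(\mathbf{B},\widetilde{\bm{\Lambda}})$ in \eqref{Largrangian_feasible}, which again decouples additively. Using the primal optimum $\widetilde{\bm{\Gamma}}^{(t)}$ and the dual optimum $\widetilde{\bm{\Lambda}}^{(t)}$ of the always-feasible problem \eqref{Feasible_problem}, the identical stationarity argument shows that $\widetilde{\bm{\Gamma}}^{(t)}$ attains $\min_{\bm{\Gamma}}\widetilde{\mathcal{L}}(\bm{\Gamma},\mathbf{B},\widetilde{\bm{\Lambda}}^{(t)})$ for every $\mathbf{B}$, which reduces $\mathrm{C5b}$ to the claimed linear inequality in $\mathbf{B}$. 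The only delicate point I anticipate is verifying that an appropriate constraint qualification is in force so that the primal optimum can be legitimately read off from the KKT stationarity condition of the dual optimum; since both problems are convex and the feasibility-check problem is strictly feasible by construction (arbitrarily large $\lambda_k$ make every $\overline{\mathrm{C1a}}$ slack), this is routine. The payoff of the lemma is structural rather than computational: replacing the embedded inner minimizations in $\mathrm{C5a}$--$\mathrm{C5b}$ with explicit affine functions of $\mathbf{B}$ turns the master problem \eqref{Master_problem} into a standard MILP that can be solved by off-the-shelf branch-and-bound solvers.
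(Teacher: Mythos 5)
Your proposal is correct and follows essentially the same route as the proof the paper delegates to \cite[Proposition 1]{ng2015secure}: the additive separation of the Lagrangian into a $\bm{\Gamma}$-only part ($\sum_k\|\mathbf{w}_k\|_2^2+f_1$) and a $\mathbf{B}$-only part ($f_2$), combined with KKT stationarity of $\bm{\Gamma}^{(t)}$ (resp.\ $\widetilde{\bm{\Gamma}}^{(t)}$) at the optimal dual point, so that the inner minimum over $\bm{\Gamma}$ is attained at the stored primal solution independently of $\mathbf{B}$. This is the standard GBD variable-separability argument, and your handling of the constraint-qualification caveat is adequate.
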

\begin{proof}\vspace{-2mm}
The proof is identical to that for \cite[Proposition 1]{ng2015secure}, and thus is omitted here due to page limitation.
\end{proof}
By substituting $\bm{\Gamma}^{(t)}$ and $\widetilde{\bm{\Gamma}}^{(t)}$ in C5a and C5b, respectively, the master problem in \eqref{Master_problem} becomes a typical MILP problem, which can be optimally solved by employing standard numerical MILP solvers, e.g.,  MOSEK \cite{grant2008cvx}. The master problem provides a lower bound, $\eta^{(i)}$, to the original problem \eqref{Reformulated_problem} and its solution, $\mathbf{B}^{(i)}$, is adopted to generate the primal problem in the next iteration. 
{\color{black}\begin{proposition}
    If $\mathbf{B}^{(i)}$ is the optimal solution of the master problem in the $i$-th iteration, $\mathbf{B}^{(j)}\neq\mathbf{B}^{(i)}$ holds for the solution of the master problem in the $j$-th iteration, $\forall j > i$.
\end{proposition}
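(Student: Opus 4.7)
The plan is to argue by contradiction: assume there exists $j > i$ with $\mathbf{B}^{(j)} = \mathbf{B}^{(i)}$ and derive a contradiction from the cut appended to the master problem in iteration $i+1$. Since the binary matrix $\mathbf{B}^{(i)}$ produced by the master at iteration $i$ is subsequently plugged into either the primal problem in \eqref{Primal_problem} or, if infeasible, the feasibility-check problem in \eqref{Feasible_problem}, the argument splits naturally into two cases corresponding to the two cuts $\overline{\mathrm{C5a}}$ and $\overline{\mathrm{C5b}}$ introduced in Lemma~3.

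In the first case, $\mathbf{B}^{(i)}$ is feasible for \eqref{Primal_problem}. I would invoke strong duality for the convex primal to show that $f_1(\bm{\Gamma}^{(i+1)},\bm{\Lambda}^{(i+1)}) + f_2(\mathbf{B}^{(i)},\bm{\Lambda}^{(i+1)}) = 0$, so that the optimality cut $\overline{\mathrm{C5a}}$ evaluated at $\mathbf{B}^{(i)}$ collapses to $\eta \geq \sum_{k}\|\mathbf{w}_k^{(i+1)}\|_2^2$. If $\mathbf{B}^{(j)}=\mathbf{B}^{(i)}$ were also a master optimiser at iteration $j$, this would force $\eta^{(j)} \geq \sum_{k}\|\mathbf{w}_k^{(i+1)}\|_2^2$. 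Combined with the standard GBD sandwich $\eta^{(j)} \leq p^{*} \leq \sum_{k}\|\mathbf{w}_k^{(i+1)}\|_2^2$, the lower and upper bounds must coincide, meaning the stopping criterion of the proposed algorithm is already met at iteration $i+1$; hence iteration $j$ would never be executed, contradicting the premise.

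In the second case, $\mathbf{B}^{(i)}$ is infeasible for \eqref{Primal_problem}. By Remark~2, the optimal value of \eqref{Feasible_problem} is strictly positive, and by strong duality of this convex program (with the $\bm{\lambda}$-terms of the full Lagrangian eliminated via the KKT stationarity $\widetilde{\alpha}_k\leq 1$) one obtains $f_1(\widetilde{\bm{\Gamma}}^{(i+1)},\widetilde{\bm{\Lambda}}^{(i+1)}) + f_2(\mathbf{B}^{(i)},\widetilde{\bm{\Lambda}}^{(i+1)}) > 0$. However, the feasibility cut $\overline{\mathrm{C5b}}$ appended to the master at iteration $i+1$ demands that $0 \geq f_1(\widetilde{\bm{\Gamma}}^{(i+1)},\widetilde{\bm{\Lambda}}^{(i+1)}) + f_2(\mathbf{B},\widetilde{\bm{\Lambda}}^{(i+1)})$ for every feasible $\mathbf{B}$ in all subsequent iterations. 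Substituting $\mathbf{B}=\mathbf{B}^{(i)}$ yields $0 > 0$, immediately excluding $\mathbf{B}^{(i)}$ from the feasible set of every master problem for $j > i$.

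The main obstacle is the first case, where the contradiction is more delicate: it relies on the tightness of strong duality between the primal and its Lagrangian, together with the linearity of $f_2$ in $\mathbf{B}$, which is what decouples the inner minimisation over $\bm{\Gamma}$ from $\mathbf{B}$ and allows the value of $\overline{\mathrm{C5a}}$ at $\mathbf{B}^{(i)}$ to be identified with the primal optimum; the argument is then closed only by invoking the algorithm's termination rule to convert bound coincidence into an actual contradiction. The second case is comparatively clean, since the strict positivity guaranteed by Remark~2 yields a pointwise violation of the linear cut $\overline{\mathrm{C5b}}$ without any appeal to convergence considerations.
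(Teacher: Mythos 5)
Your argument is correct in substance and is essentially the paper's own proof: the paper simply refers to the proof of \cite[Theorem 2.4]{geoffrion1972generalized}, and that proof is exactly your two-case analysis --- strong duality makes the optimality cut $\overline{\mathrm{C5a}}$ tight at a feasible $\mathbf{B}^{(i)}$ (so $f_1+f_2=0$ and a repeat forces $\eta^{(j)}\geq v(\mathbf{B}^{(i)})\geq \mathrm{UB}^{(j)}$), while Remark~2 together with strong duality of the $l_1$ feasibility problem makes the cut $\overline{\mathrm{C5b}}$ strictly violated at an infeasible $\mathbf{B}^{(i)}$. One small slip in Case~1: the coincidence of the bounds you derive occurs at iteration $j$ (the iteration at which the repeat would be returned), not at iteration $i+1$, so the precise conclusion is that a repeated $\mathbf{B}^{(i)}$ can be returned only as the terminal iterate, at which point the stopping criterion fires --- which is the form in which Geoffrion's theorem, and the paper's subsequent finite-convergence claim, actually use the proposition.
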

\begin{proof}
    Please refer to the proof of \cite[Theorem 2.4]{geoffrion1972generalized}.
\end{proof}
In the following section, we will exploit Proposition 1 to establish the global optimality of the proposed GBD-based algorithm.
}
\setlength{\textfloatsep}{0pt}
\begin{algorithm}[t]
\caption{Optimal Resource Allocation Algorithm}
\begin{algorithmic}[1]
\small
\STATE Set iteration index $i=0$, initialize upper bound $\mathrm{UB}^{(0)}\gg 1$, lower bound $\mathrm{LB}^{(0)}=0$, the set of the feasible iteration indices $\mathcal{F}^{(0)}=\emptyset$, the set of the infeasible iteration indices $\mathcal{I}^{(0)}=\emptyset$, and convergence tolerance $\Delta\ll 1$ and generate a feasible $\mathbf{B}^{(0)}$
\REPEAT
\STATE Set $i=i+1$
\STATE Solve \eqref{Primal_problem} for given $\mathbf{B}^{(i-1)}$
\IF{the primal problem in \eqref{Primal_problem} is feasible}
\STATE Update $\bm{\Gamma}^{(i)}$ and store the corresponding objective function value of $\sum_{k\in\mathcal{K}}\left\|\mathbf{w}_k^{(i)}\right\|_2^2$
\STATE Construct $\mathcal{L}(\bm{\Gamma},\mathbf{B},\bm{\Lambda}^{(i)})$ based on \eqref{Largrangian_conf}
\STATE Update the upper bound of \eqref{Reformulated_problem} based on \eqref{UB} 
\STATE Update $\mathcal{F}$ by $\mathcal{F}\cup\{i\}$
\ELSE
\STATE Solve \eqref{Feasible_problem} for a given $\mathbf{B}^{(i-1)}$, update $\widetilde{\bm{\Gamma}}^{(i)}$
\STATE Construct $\widetilde{\mathcal{L}}(\bm{\Gamma},\mathbf{B},\widetilde{\bm{\Lambda}}^{i})$ based on \eqref{Largrangian_feasible}
\STATE Update $\mathcal{I}$ by $\mathcal{I}\cup\{i\}$
\ENDIF
\STATE Solve the optimization problem in \eqref{Master_problem} and update $\eta^{(i)}$ and $\mathbf{B}^{(i)}$
\STATE Update the lower bound as $\mathrm{LB}^{(i)}=\eta^{(i)}$
\UNTIL $\mathrm{UB}^{(i)}-\mathrm{LB}^{(i)}\leq \Delta$
\end{algorithmic}
\end{algorithm}

\subsubsection{Overall Algorithm}
The complete GBD procedure is outlined in \textbf{Algorithm 1}. Some additional remarks are as follows:
\begin{enumerate}
    \item \textit{Initial point:} Iteration index $i$ is first set to zero and a binary matrix $\mathbf{B}$ satisfying constraints C2a and C2b is randomly generated.
    \item \textit{Optimality and convergence:} Given the binary matrix $\mathbf{B}$, problem \eqref{Primal_problem} is solved. In particular, if \eqref{Primal_problem} is feasible, we generate an optimality cut based on the intermediate solutions for $\bm{\Gamma}^{(i)}$, and the corresponding Lagrangian multiplier set $\bm{\Lambda}^{(i)}$. Additionally, the objective value $\sum_{k\in\mathcal{K}}\left\|\mathbf{w}_k^{(i)}\right\|_2^2$ obtained in the current iteration is exploited to update the performance upper bound $\mathrm{UB}^{(i)}$ as follows
    \begin{equation}\label{UB}
        \mathrm{UB}^{(i)}=\min\left\{\mathrm{UB}^{(i-1)},\hspace*{1mm}\sum_{k\in\mathcal{K}}\left\|\mathbf{w}_k^{(i)}\right\|_2^2\right\}.
    \end{equation}
     On the other hand, if \eqref{Primal_problem} is infeasible, we turn to solve the $l_1$-norm minimization problem in \eqref{Feasible_problem}. The obtained solutions and dual variables for problem \eqref{Feasible_problem} in the current iteration are adopted to generate an infeasibility cut in the master problem to reduce the feasible solution space for searching. Then, we solve the master problem in \eqref{Master_problem} optimally using a standard MILP solver. The objective function value of the master problem serves as a performance lower bound $\mathrm{LB}^{(i)}$ for the original optimization problem in \eqref{Ori_Problem1}. 
     {\color{black}We note that the value of UB is monotonically non-increasing, while the value of LB is monotonically non-decreasing. Moreover, based on Proposition 1, no $\mathbf{B}^{(i)}$ can be repeated in a subsequent iteration of \textbf{Algorithm 1} and the feasible set of $\mathbf{B}$ is a finite set. Thus, according to \cite[Theorem 2.4]{geoffrion1972generalized}, the proposed GBD-based algorithm is guaranteed to converge to a globally optimal solution of the formulated MINLP problem in a finite number of iterations for a given convergence tolerance $\Delta\geq 0$.}
    \item \textit{Complexity:} {\color{black}In each iteration of \textbf{Algorithm 1}, we have to solve a convex primal problem and an MILP master problem. Since the primal problem in \eqref{Primal_problem} involves $K$ LMI constraints with dimension $\mathcal{O}(L+K+N+1)$, the computational complexity required for solving the primal problem in each iteration is given by $\mathcal{O}(\log\frac{1}{\rho}(K(L+K+N+1)^3+K^2(L+K+N+1)^2+K^3)$, where $\mathcal{O}(\cdot)$ is the big-O notation, and $\rho>0$ is the convergence tolerance of the interior point method \cite[Theorem 3.12]{bomze2010interior}.\footnote{\color{black}According to \cite{bomze2010interior}, the computational complexity entailed by LMI constraints is much higher than that entailed by SOC constraints. Thus, the complexity of the $K$ SOC constraints in \eqref{Primal_problem} is ignored in the analysis.} On the other hand, the master problem in \eqref{Master_problem} is solved by the branch-and-bound (BnB) method adopted in current numerical MILP solvers \cite{mosek}. For solving the master problem in the $i$-th iteration of \textbf{Algorithm 1}, in each BnB iteration, a linear programming problem that involves $i$ linear constraints and $(L\times N+1)$ optimization variables has to be solved. Thus, the computational complexity of solving the master problem in the $i$-th iteration is given by $\mathcal{O}(I_{\mathrm{BnB}}(L\times N+1+i)\times (L\times N+1))$, where $I_{\mathrm{BnB}}$ denotes the number of BnB iterations\footnote{\color{black}The actual number of BnB iterations $I_{\mathrm{BnB}}$ can be significantly reduced by employing an intelligent branching strategy. Nevertheless, the worst-case number of BnB iterations $I_{\mathrm{BnB}}$ scales exponentially with the number of IRS elements \cite{mosek}.}.}  Based on our simulation results in Section \rom{5}, we observe that the proposed GBD-based \textbf{Algorithm 1} converges in significantly fewer iterations than an exhaustive search, although in the worst-case, the number of iterations required by GBD algorithms also scales exponentially with the number of IRS elements \cite{geoffrion1972generalized}.
\end{enumerate}
\vspace*{-4mm}
\subsection{Proposed SCA-Based Algorithm}
The proposed GBD-based algorithm provides a performance upper bound for the considered IRS-assisted system. However, the worst-case computational complexity of the proposed GBD-based algorithm is non-polynomial. In this section, we introduce a suboptimal resource allocation algorithm that has only polynomial time computational complexity. We note that the major difficulty in solving \eqref{Reformulated_problem} is binary constraint C2b. To circumvent this difficulty, we recast C2b equivalently in form of the following two inequality constraints
\begin{equation}
\begin{aligned}
       \overline{\mbox{C2b}}:&\hspace*{1mm} \sum_{n=1}^N\sum_{l=1}^L (b_n[l]-b_n^2[l])\leq 0,\\
       \overline{\mbox{C2c}}:&\hspace*{1mm} 0\leq b_n[l]\leq 1, \hspace*{1mm}\forall l,\forall n,
\end{aligned}
\end{equation}
where constraint $\overline{\mbox{C2b}}$ is a DC constraint. Then, problem \eqref{Reformulated_problem} can be rewritten as follows
\begin{eqnarray}
\label{DC_problem}
    &&\hspace*{-6mm}\underset{{\bm{\Gamma},\,\mathbf{B}}}{\mino}\hspace*{2mm}\sum_{k\in\mathcal{K}}\left\|\mathbf{w}_k\right\|_2^2\notag\\
    &&\hspace*{0mm}\mbox{s.t.}\hspace*{8mm}\mbox{C1a},\mbox{C1b},\mbox{C2a},\overline{\mbox{C2b}},\overline{\mbox{C2c}},\mbox{C3a},\overline{\mbox{C3b}}.
\end{eqnarray}
To facilitate a low-complexity algorithm design, we resort to the penalty method \cite{ng2015secure} to tackle the DC constraint $\overline{\mbox{C2b}}$ and rewrite the problem in \eqref{DC_problem} as follows 
\begin{eqnarray}
\label{Penalty_problem}
    &&\hspace*{-6mm}\underset{{\bm{\Gamma},\,\mathbf{B}}}{\mino}\hspace*{2mm}\sum_{k\in\mathcal{K}}\left\|\mathbf{w}_k\right\|_2^2+\frac{1}{\mu}\sum_{n=1}^N\sum_{l=1}^L (b_n[l]-b_n^2[l])\notag\\
    &&\hspace*{0mm}\mbox{s.t.}\hspace*{8mm}\mbox{C1a},\mbox{C1b},\mbox{C2a},\overline{\mbox{C2c}},\mbox{C3a},\overline{\mbox{C3b}},
\end{eqnarray}
where $\mu>0$ is a penalty factor for penalizing the violation of the binary constraint. When $\mu$ is sufficiently small, i.e., $\mu\rightarrow 0$, problem \eqref{Reformulated_problem} and problem \eqref{Penalty_problem} are equivalent \cite{le2012exact}. Now, the objective function in \eqref{Penalty_problem} is in the form of a canonical DC problem. Hence, a suboptimal solution of \eqref{Penalty_problem} can be obtained by the SCA method \cite{le2012exact}. In particular, in the $(i+1)$-th iteration of the SCA algorithm, we construct a global underestimator for the term $\sum_{n=1}^N\sum_{l=1}^Lb_n^2[l]$ by leveraging its first-order Taylor approximation as follows:
\begin{equation}
    \sum_{n=1}^N\sum_{l=1}^Lb_n^2[l]\geq \sum_{n=1}^N\sum_{l=1}^L(2b_n^{(i)}[l]b_n[l]-(b_n^{(i)}[l])^2), 
\end{equation}
where $b_n^{(i)}[l]$ is the solution for $b_n[l]$ in the $i$-th iteration. As a result, the optimization problem in the $i$-th iteration of the SCA algorithm is given by
{\color{black}
    \begin{eqnarray}
\label{SCA_problem}
    &&\hspace*{-6mm}\underset{{\bm{\Gamma},\,\mathbf{B}}}{\mino}\hspace*{2mm}\sum_{k\in\mathcal{K}}\left\|\mathbf{w}_k\right\|_2^2\notag\\
    &&\hspace*{10mm}+\frac{1}{\mu}\sum_{n=1}^N\sum_{l=1}^L (b_n[l]-2b_n^{(i)}[l]b_n[l]+(b_n^{(i)}[l])^2)\notag\\
    &&\hspace*{0mm}\mbox{s.t.}\hspace*{8mm}\mbox{C1a},\mbox{C1b},\mbox{C2a},\overline{\mbox{C2c}},\mbox{C3a},\overline{\mbox{C3b}}.
\end{eqnarray}}
Problem \eqref{SCA_problem} is convex and can be optimally solved by standard convex program solvers such as CVX. The proposed iterative SCA algorithm is summarized in \textbf{Algorithm 2}. Some further remarks are given as follows:
\begin{enumerate}
    \item \textit{Initial point:} A matrix $\mathbf{B}$ satisfying constraints $\mbox{C2a}$ and $\overline{\mbox{C2c}}$ is randomly generated.
    \item {\color{black}\textit{Optimality and convergence:} 
    Solving problem \eqref{SCA_problem} provides an upper bound for the original resource allocation problem in \eqref{Penalty_problem}. By solving \eqref{SCA_problem} iteratively, we can gradually tighten this upper bound. We note that the proposed suboptimal algorithm converges to a locally optimal solution of \eqref{Penalty_problem} \cite{razaviyayn2013unified}. Moreover, the convergence of \textbf{Algorithm 2} depends on the selection of the penalty factor $1/\mu$. In particular, \textbf{Algorithm 2} is guaranteed to converge to a binary matrix $\mathbf{B}$ for a sufficiently small $\mu$ \cite[Remark 4]{pham2010efficient}. However, how to determine the optimal value of $\mu$ is still an open problem. In this work, we adopt the method proposed in \cite[Remark 7]{pham2010efficient} to determine the value of $\mu$. Specifically, $\mu=10^{-3}$ is first chosen to formulate the optimization problem in \eqref{SCA_problem}. If the proposed algorithm does not converge to a binary matrix $\mathbf{B}$, the value of $\mu$ is further reduced until a binary matrix $\mathbf{B}$ is obtained. }
    \item \textit{Complexity:} In each iteration, we only need to solve a convex programming problem. Thus, \textbf{Algorithm 2} exhibits polynomial computational complexity \cite{razaviyayn2013unified}. In particular, since the problem in \eqref{SCA_problem} involves $K$ LMI constraints with dimension $\mathcal{O}(L+K+N+1)$, the computational complexity of each iteration of the proposed SCA-based algorithm is given by $\mathcal{O}(\log\frac{1}{\rho}(K(L+K+N+1)^3+K^2(L+K+N+1)^2+K^3)$, where $\mathcal{O}(\cdot)$ is the big-O notation, and $\rho>0$ is the convergence tolerance of the interior point method\cite[Theorem 3.12]{bomze2010interior}
    
\end{enumerate}
\begin{algorithm}[t]
\caption{Successive Convex Approximation (SCA) Algorithm}
\begin{algorithmic}[1]
\small
\STATE Set iteration index $i=0$ and generate a feasible $\mathbf{B}^{(0)}$. Set convergence tolerance $0<\Delta_{\mathrm{SCA}}\ll 1$ and penalty factor $0<\mu\ll 1$
\REPEAT
\STATE Set $i=i+1$
\STATE Solve \eqref{SCA_problem} for a given $\mathbf{B}^{(i-1)}$ and update $\mathbf{B}^{(i)}$ as the optimal solution of \eqref{SCA_problem}
\UNTIL $\frac{\|\mathbf{B}^{(i-1)}-\mathbf{B}^{(i)}\|_F}{\|\mathbf{B}^{(i-1)}\|_F}\leq \Delta_{\mathrm{SCA}}$
\end{algorithmic}
\end{algorithm}
\vspace*{-5mm}
\section{Algorithm Design for Imperfect CSI}
\begin{figure*}[!htbp]
    \centering
    \includegraphics[width=6.8in]{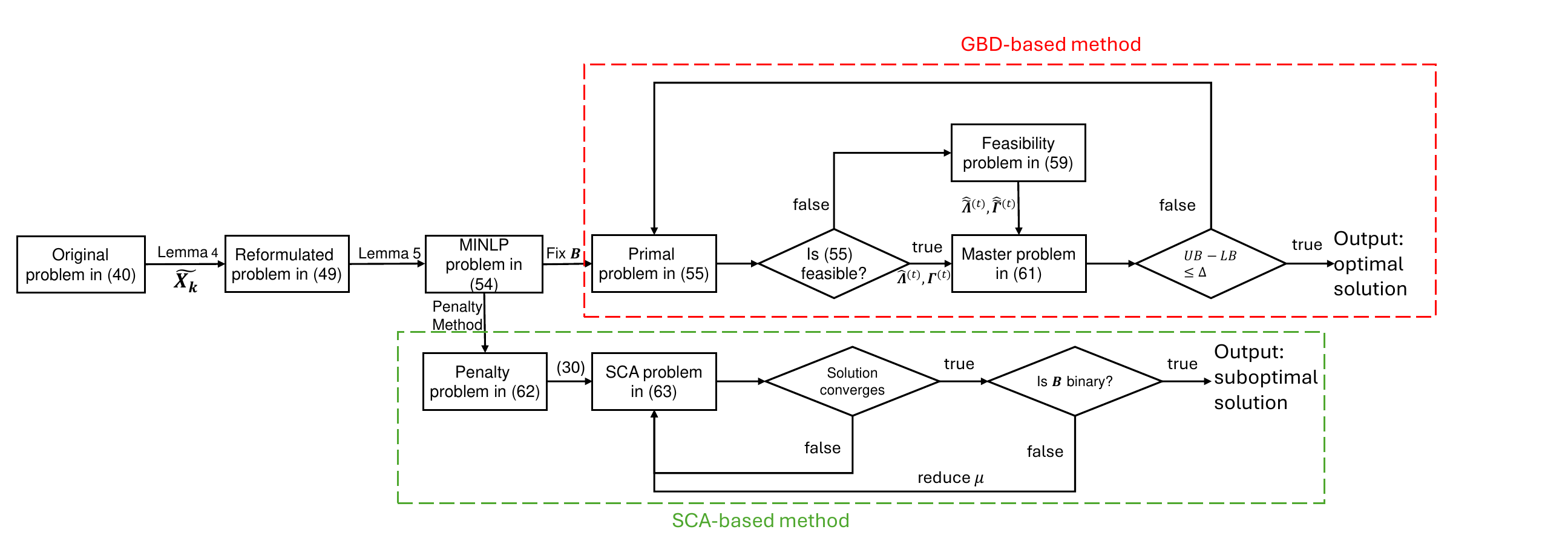}
    \caption{\color{black}Flowchart of the algorithms proposed for the case of imperfect CSI.}
    \label{fig:FC_imperfect_CSI}
\end{figure*}
{\color{black}In this section, we first introduce the adopted CSI uncertainty model and reformulate the resource allocation problem in \eqref{Ori_Problem} to account for imperfect CSI. In particular, we aim to minimize the BS transmit power while guaranteeing a worst-case received SINR for each user. Then, we suitably modify the GBD-based and SCA-based methods proposed in Section \rom{3}, respectively, to accommodate the new robust resource allocation problem. An overview of the individual steps of the derivation and flow of the proposed algorithms is provided in Fig. \ref{fig:FC_imperfect_CSI}.}

\vspace*{-5mm}
\subsection{Imperfect CSI Model}
As it is challenging to estimate the individual IRS-assisted reflected channels $\mathbf{F}$ and $\mathbf{h}_k$, we assume the cascaded reflected channel from the BS to the users, $\mathbf{E}_k=\mathrm{diag}(\mathbf{h}_k^H)\mathbf{F}, \forall k\in\mathcal{K}$, and the direct channel between the BS and the users, $\mathbf{d}_k,\forall k\in\mathcal{K}$, are estimated, see \cite{hu2021robust,you2020channel,lin2022channel} for details. In this work, we adopt the norm-bounded CSI error model for both the cascaded reflected channels and the direct channels \cite{ng2014robust,xu2022robust}. Specifically, the cascaded reflected channel $\mathbf{E}_k$ and the direct channel $\mathbf{d}_k$ for user $k$ are modeled as follows:
\begin{equation}\label{CSI_model}
    \begin{aligned}
        &\mathbf{E}_k=\Bar{\mathbf{E}}_k+\Delta\mathbf{E}_k,\, \mathbf{d}_k=\Bar{\mathbf{d}}_k+\Delta\mathbf{d}_k,\\
        &\Omega_{\mathbf{E}_k}=\{\Delta\mathbf{E}_k\in\mathbb{C}^{N\times M}:\|\Delta\mathbf{E}_k\|_F\leq \epsilon_{\mathbf{E}_k}\}, \forall k\in\mathcal{K},\\
        &\Omega_{\mathbf{d}_k}=\{\Delta\mathbf{d}_k\in\mathbb{C}^{M}:\|\Delta\mathbf{d}_k\|_2\leq \epsilon_{\mathbf{d}_k}\} , \forall k\in\mathcal{K},
    \end{aligned}
\end{equation}
where $\Bar{\mathbf{E}}_k$ and $\Bar{\mathbf{d}}_k$ denote the estimates of the cascaded reflected channel and the direct channel of user $k$, respectively. Furthermore, $\Delta\mathbf{E}_k$ and $\Delta\mathbf{d}_k$ are the CSI estimation errors of the cascaded reflected channel and the direct channel of user $k$, whose norms are bounded by constants $\epsilon_{\mathbf{E}_k}$ and $\epsilon_{\mathbf{d}_k}$, respectively. The sets $\Omega_{\mathbf{E}_k}$ and $\Omega_{\mathbf{d}_k}$ contain all possible CSI errors satisfying the bounded norm condition.
\vspace{-5mm}\subsection{Problem Formulation}
To facilitate the robust resource allocation design, we first rewrite the received signal of user $k$, $y_k$, as follows
\begin{equation}
\begin{aligned}
    y_k=\mathbf{v}^T\widehat{\mathbf{H}}_k\mathbf{W}\mathbf{s}+n_k,
\end{aligned}
\end{equation}
where $\hat{\mathbf{H}}_k=[\mathbf{E}_k^H, \mathbf{d}_k]^H$ denotes the effective channel between the BS and user $k$. Recalling \eqref{CSI_model}, the channel matrix $\widehat{\mathbf{H}}_k$ can be modelled as follows
\begin{equation}
\begin{aligned}
\widehat{\mathbf{H}}_k=\widehat{\bar{\mathbf{H}}}_k+\Delta\widehat{\mathbf{H}}_k,
\end{aligned}
\end{equation}
where $\widehat{\bar{\mathbf{H}}}_k=[\Bar{\mathbf{E}}_k^H, \Bar{\mathbf{d}}_k]^H$ and $\Delta\widehat{\mathbf{H}}_k=[\Delta\mathbf{E}_k^H, \Delta\mathbf{d}_k]^H$. Then, we can obtain an upper bound for the channel uncertainty $\Delta \widehat{\mathbf{H}}_k$ \cite{yu2021robust}
\begin{equation}
    \|\Delta\widehat{\mathbf{H}}_k\|_F\leq \sqrt{\epsilon_{\mathbf{E}_k}^2+\epsilon_{\mathbf{d}_k}^2}=\epsilon_k.
\end{equation}
{\color{black}By using $\mathrm{vec}(\mathbf{AXB})^T=\mathrm{vec}^T(\mathbf{X})(\mathbf{B}\otimes\mathbf{A}^T)$, the received signal $y_k$ can be rewritten as
\begin{equation}\label{robust_receiver}
    \begin{aligned}
        y_k&=\mathrm{vec}^T(\widehat{\mathbf{H}}_k)(\mathbf{I}_M\otimes\mathbf{v})\mathbf{W}\mathbf{s}+n_k\\
        &=\mathbf{g}_k^T\mathbf{V}\mathbf{W}\mathbf{s}+n_k,
    \end{aligned}
\end{equation}
where $\mathbf{V}=\mathbf{I}_M\otimes\mathbf{v}\in\mathbb{C}^{(N+1)M\times M}$ and $\mathbf{g}_k=\mathrm{vec}(\widehat{\mathbf{H}}_k)$. Next, $\mathbf{V}$ is rewritten as follows 
\begin{equation}
    \mathbf{V}=\mathbf{I}_M\otimes({\mathbf{B}}^T\bm{\theta})=(\mathbf{I}_M\otimes{\mathbf{B}}^T)(\mathbf{I}_M\otimes\bm{\theta})=\Bar{\mathbf{B}}\bm{\Theta},
\end{equation}
where $\Bar{\mathbf{B}}$ and $\bm{\Theta}$ are defined as $\Bar{\mathbf{B}}=\mathbf{I}_M\otimes{\mathbf{B}}^T$ and $\bm{\Theta}=\mathbf{I}_M\otimes\bm{\theta}$, respectively.}
We further rewrite effective channel vector $\mathbf{g}_k$ as
\begin{equation}
    \mathbf{g}_k=\bar{\mathbf{g}}_k+\Delta\mathbf{g}_k,
\end{equation}
where $\bar{\mathbf{g}}_k=\mathrm{vec}(\widehat{\bar{\mathbf{H}}}_k)$, $\Delta\mathbf{g}_k=\mathrm{vec}(\Delta\widehat{\mathbf{H}}_k)$, and $\|\Delta\mathbf{g}_k\|_2=\|\Delta\widehat{\mathbf{H}}_k\|_F\leq \epsilon_k.$
Then, the received signal $y_k$ can be recast as follows
\begin{equation}
    y_k=\mathbf{g}_k^T\Bar{\mathbf{B}}\bm{\Theta}\mathbf{W}\mathbf{s}+n_k.
\end{equation}
With $y_k$ in hand, we can formulate the resource allocation problem taking into account the channel uncertainty.
Considering the CSI uncertainty model in \eqref{CSI_model}, in order to obtain a robust joint BS beamforming and IRS phase shift policy, the resource allocation problem in \eqref{Ori_Problem} is extended as follows
\begin{eqnarray}
\label{Ori_Problem_robust}
    &&\hspace*{-6mm}\underset{\mathbf{W},\,{\mathbf{B}}}{\mino}\hspace*{2mm}\sum_{k\in\mathcal{K}}\left\|\mathbf{w}_k\right\|_2^2\notag\\
    &&\hspace*{-4mm}\mbox{s.t.}\hspace*{2mm} \widehat{\mbox{C1}}:\hspace*{-1mm} \min_{\|\Delta\mathbf{g}_k\|_2\leq \epsilon_k}\frac{|\mathbf{g}_k^H\Bar{\mathbf{B}}\bm{\Theta}\mathbf{w}_k|^2}{\sum_{k'\in\mathcal{K}\setminus\{k\}}|\mathbf{g}_k^H\Bar{\mathbf{B}}\bm{\Theta}\mathbf{w}_{k'}|^2+\sigma_k^2}\geq \gamma_k,\forall k,\notag\\
    &&\hspace*{2mm}\mbox{C2a},{\mbox{C2b}}.
\end{eqnarray}
{\color{black}Since the worst-case QoS constraint $\widehat{\mbox{C1}}$ comprises infinitely many non-convex inequality constraints introduced by the continuity of the CSI uncertainty set, the worst-case QoS constraint cannot be recast as a series of convex constraints as was done for the perfect CSI cases in Section \rom{3}-A. Therefore, in the following, a series of transformations and auxiliary variables are introduced to reformulate the robust resource allocation problem as a more tractable MINLP problem, as illustrated in Fig. \ref{fig:FC_imperfect_CSI}.} In particular, the numerator of SINR constraint $\widehat{\mbox{C1}}$ can be rewritten as follows
\begin{equation}
    |\mathbf{g}_k^H\Bar{\mathbf{B}}\bm{\Theta}\mathbf{w}_k|^2=\mathbf{g}_k^H\Bar{\mathbf{B}}\bm{\Theta}\mathbf{W}_k\bm{\Theta}^H\Bar{\mathbf{B}}^H\mathbf{g}_k,
\end{equation}
where $\mathbf{W}_k=\mathbf{w}_k\mathbf{w}_k^H$.
The denominator can be rewritten in a similar manner and we can recast constraint $\widehat{\mbox{C1}}$ as follows
\begin{equation}\label{Reform_hat_C1}
\begin{aligned}
    \widehat{\mbox{C1}}\Leftrightarrow \min_{\|\Delta\mathbf{g}_k\|_2\leq\epsilon_k}\sigma_k^2\gamma_k+\mathbf{g}_k^H\Bar{\mathbf{B}}\bm{\Theta}\widetilde{\mathbf{W}}_k\bm{\Theta}^H\Bar{\mathbf{B}}^H\mathbf{g}_k\leq 0, 
\end{aligned}
\end{equation}
where $\widetilde{\mathbf{W}}_k=\gamma_k\sum_{k'\in\mathcal{K}\setminus\{k\}}{\mathbf{W}}_{k'}-{\mathbf{W}}_k$.
Next, by substituting $\mathbf{g}_k=\bar{\mathbf{g}}_k+\Delta\mathbf{g}_k$ into \eqref{Reform_hat_C1},  we can rewrite constraint $\widehat{\mbox{C1}}$ as follows
\begin{eqnarray}
\label{Reform_hat_C1_2}
\widehat{\mbox{C1}}\Leftrightarrow&&\hspace*{-6mm}\Bar{\mathbf{g}}_k^H\Bar{\mathbf{B}}\bm{\Theta}\widetilde{\mathbf{W}}_k\bm{\Theta}^H\Bar{\mathbf{B}}^H\Bar{\mathbf{g}}_k+2\mathrm{Re}\{\Bar{\mathbf{g}}_k^H\Bar{\mathbf{B}}\bm{\Theta}\widetilde{\mathbf{W}}_k\bm{\Theta}^H\Bar{\mathbf{B}}^H\Delta{\mathbf{g}}_k\}\notag\\
+&&\hspace*{-6mm}\Delta{\mathbf{g}}_k^H\Bar{\mathbf{B}}\bm{\Theta}\widetilde{\mathbf{W}}_k\bm{\Theta}^H\Bar{\mathbf{B}}^H\Delta{\mathbf{g}}_k\leq -\sigma_k^2\gamma_k,\|\Delta\mathbf{g}_k\|_2\leq \epsilon_k.  
\end{eqnarray}
Now, we transform the constraint in \eqref{Reform_hat_C1_2} into an LMI constraint by exploiting the following lemma.
{\color{black}\begin{lemma}\vspace{-5mm}\label{S_procedure}
    (S-Procedure \cite{boyd2004convex}): Let $g_1(\mathbf{x})$ and $g_2(\mathbf{x})$ be real-valued functions of vector $\mathbf{x}\in\mathbb{C}^{J\times 1}$ and be defined as follows:
    \begin{equation}
        \begin{aligned}
            g_1(\mathbf{x})&=\mathbf{x}^H\mathbf{A}_1\mathbf{x}+2\operatorname{Re}\{\mathbf{a}_1^H\mathbf{x}\}+a_1,\\
            g_2(\mathbf{x})&=\mathbf{x}^H\mathbf{A}_2\mathbf{x}+2\operatorname{Re}\{\mathbf{a}_2^H\mathbf{x}\}+a_2,
        \end{aligned}
    \end{equation}
    where $\mathbf{A}_1$, $\mathbf{A}_2\in\mathbb{H}^J$, $\mathbf{a}_1$, $\mathbf{a}_2\in\mathbb{C}^{J\times 1}$, and $a_1,a_2\in\mathbb{R}$. Assume there exists a point $\hat{\mathbf{x}}$ with $\hat{\mathbf{x}}\mathbf{A}_1\mathbf{x}+2\mathrm{Re}\{\mathbf{a}_1^H\mathbf{x}\}+a_1<0$. Then, the implication $g_1(\mathbf{x})\leq 0\Rightarrow g_2(\mathbf{x})\leq 0$ holds if and only if there exists some real number $\lambda\geq 0$ such that
    \begin{equation}
        \lambda
    \begin{bmatrix}
        \mathbf{A}_1 & \mathbf{a}_1 \\
        \mathbf{a}_1^H & a_1
    \end{bmatrix}-
     \begin{bmatrix}
        \mathbf{A}_2 & \mathbf{a}_2 \\
        \mathbf{a}_2^H & a_2
    \end{bmatrix}\succeq \mathbf{0}.
    \end{equation}\vspace{-5mm}
\end{lemma}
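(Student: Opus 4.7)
The plan is to decompose the S-Procedure into its two implications and handle them separately. First I would introduce the augmented vector $\tilde{\mathbf{x}} = [\mathbf{x}^T, 1]^T \in \mathbb{C}^{(J+1)\times 1}$ so that both quadratic forms admit a uniform matrix representation: $g_i(\mathbf{x}) = \tilde{\mathbf{x}}^H \mathbf{M}_i \tilde{\mathbf{x}}$, where
\begin{equation*}
\mathbf{M}_i = \begin{bmatrix} \mathbf{A}_i & \mathbf{a}_i \\ \mathbf{a}_i^H & a_i \end{bmatrix}, \quad i \in \{1,2\}.
\end{equation*}
This homogenization reduces the lemma to a statement about $(J+1)\times(J+1)$ Hermitian matrices acting on vectors whose last coordinate is pinned to $1$.

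For the sufficiency direction ($\Leftarrow$), suppose $\lambda \geq 0$ satisfies $\lambda \mathbf{M}_1 - \mathbf{M}_2 \succeq \mathbf{0}$. Then for every $\mathbf{x} \in \mathbb{C}^{J\times 1}$, positive semidefiniteness gives $\lambda g_1(\mathbf{x}) - g_2(\mathbf{x}) = \tilde{\mathbf{x}}^H(\lambda \mathbf{M}_1 - \mathbf{M}_2)\tilde{\mathbf{x}} \geq 0$. Hence whenever $g_1(\mathbf{x}) \leq 0$, we obtain $g_2(\mathbf{x}) \leq \lambda g_1(\mathbf{x}) \leq 0$, establishing the forward implication.

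For the necessity direction ($\Rightarrow$), the argument is substantially more delicate. I would recast the hypothesis as the QCQP bound $p^\star = \sup\{g_2(\mathbf{x}) : g_1(\mathbf{x}) \leq 0\} \leq 0$ and pass to its SDP relaxation
\begin{equation*}
p^\star_{\mathrm{SDP}} = \sup\bigl\{\mathrm{Tr}(\mathbf{M}_2\mathbf{Z}) : \mathrm{Tr}(\mathbf{M}_1\mathbf{Z}) \leq 0,\ \mathbf{Z}\succeq \mathbf{0},\ Z_{J+1,J+1}=1\bigr\}.
\end{equation*}
Tightness $p^\star = p^\star_{\mathrm{SDP}}$ follows from the convexity of the joint numerical range $\{(\tilde{\mathbf{x}}^H \mathbf{M}_1 \tilde{\mathbf{x}}, \tilde{\mathbf{x}}^H \mathbf{M}_2 \tilde{\mathbf{x}}) : \tilde{\mathbf{x}} \in \mathbb{C}^{J+1}\}$, a classical result of Brickman valid in the complex setting. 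The Slater point $\hat{\mathbf{x}}$ lifts to a strictly feasible $\mathbf{Z}$ in the SDP, so strong duality holds, and writing out the Lagrangian dual of the SDP then produces a dual multiplier $\lambda \geq 0$ certifying $\lambda \mathbf{M}_1 - \mathbf{M}_2 \succeq \mathbf{0}$.

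The main obstacle will be rigorously justifying the SDP tightness and the resulting strong duality: over $\mathbb{R}$ the required convexity is Dines' theorem and can fail in degenerate low-dimensional cases, whereas over $\mathbb{C}$ one needs Brickman's theorem, so care is required to stay in the complex setting throughout. Since the paper invokes this result as a standard tool and cites \cite{boyd2004convex}, in the final write-up I would defer the technical core of Brickman's convexity result to that reference and focus only on verifying that the hypotheses (Hermitian $\mathbf{A}_1, \mathbf{A}_2$, the existence of $\hat{\mathbf{x}}$ with $g_1(\hat{\mathbf{x}})<0$, and the homogenization via $\tilde{\mathbf{x}}$) are all met by the worst-case constraint in \eqref{Reform_hat_C1_2} to which the lemma will subsequently be applied.
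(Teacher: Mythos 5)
The paper offers no proof of this lemma at all: it is stated as a standard result and attributed directly to \cite{boyd2004convex}, so there is no in-paper argument to compare against, and any reconstruction is by definition a "different route." Your reconstruction is the canonical textbook proof and is essentially sound: the sufficiency direction is complete as written (and, as you implicitly note, does not need the Slater point), and for necessity the chain of homogenization via $\tilde{\mathbf{x}}=[\mathbf{x}^T,1]^T$, SDP relaxation, tightness via convexity of the joint numerical range of two Hermitian forms, and extraction of $\lambda$ from the dual is exactly the standard argument, with the convexity result correctly identified as the technical heart. Two loose ends would need tightening in a full write-up: (i) the lift $\tilde{\mathbf{x}}\tilde{\mathbf{x}}^H$ of the Slater point is rank one and hence not strictly feasible for the SDP, so before invoking strong duality and dual attainment you must perturb it, e.g., to $\tilde{\mathbf{x}}\tilde{\mathbf{x}}^H+\epsilon\mathbf{D}$ with $\mathbf{D}\succ\mathbf{0}$ and the corner entry renormalized, which preserves $\mathrm{Tr}(\mathbf{M}_1\mathbf{Z})<0$ for small $\epsilon$; and (ii) the identification $g_i(\mathbf{x})=\tilde{\mathbf{x}}^H\mathbf{M}_i\tilde{\mathbf{x}}$ holds only on the affine slice where the last coordinate equals one, whereas the joint-numerical-range convexity concerns all of $\mathbb{C}^{J+1}$, so the recession directions with vanishing last coordinate must be handled separately, using the phase and scale invariance available in the complex case. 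Since the paper itself treats the lemma as a black box, deferring these standard technicalities to the literature is acceptable, and your closing checklist (Hermitian data, existence of $\hat{\mathbf{x}}$ with $g_1(\hat{\mathbf{x}})<0$, and the correct identification of $\mathbf{A}_i$, $\mathbf{a}_i$, $a_i$ from \eqref{Reform_hat_C1_2}) is precisely what the paper implicitly relies on when it applies the lemma to obtain \eqref{LMI_SINR}.
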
}
{\color{black}
By leveraging Lemma \ref{S_procedure}, constraint $\widehat{\mbox{C1}}$ is rewritten as an LMI constraint as follows
\begin{equation}
\label{LMI_SINR}
\begin{aligned}
    &q_k\begin{bmatrix}
       \mathbf{I}_{M(N+1)} & \mathbf{0} \\
        \mathbf{0} & -\epsilon_k^2
    \end{bmatrix}-\\
    &\begin{bmatrix}
       \Bar{\mathbf{B}}\bm{\Theta}\widetilde{\mathbf{W}}_k\bm{\Theta}^H\Bar{\mathbf{B}}^H & \Bar{\mathbf{B}}\bm{\Theta}\widetilde{\mathbf{W}}_k\bm{\Theta}^H\Bar{\mathbf{B}}^H\Bar{\mathbf{g}}_k \\
        \Bar{\mathbf{g}}_k^H\Bar{\mathbf{B}}\bm{\Theta}\widetilde{\mathbf{W}}_k\bm{\Theta}^H\Bar{\mathbf{B}}^H & \Bar{\mathbf{g}}_k^H\Bar{\mathbf{B}}\bm{\Theta}\widetilde{\mathbf{W}}_k\bm{\Theta}^H\Bar{\mathbf{B}}^H\Bar{\mathbf{g}}_k+\sigma_k^2\gamma_k
    \end{bmatrix}\succeq \mathbf{0},\\
    &\Leftrightarrow\mathbf{P}_k-
    \mathbf{G}_k^H\Bar{\mathbf{B}}\bm{\Theta}\widetilde{\mathbf{W}}_k\bm{\Theta}^H\Bar{\mathbf{B}}^H\mathbf{G}_k\succeq \mathbf{0}, \forall k,
\end{aligned}
\end{equation}
where 
\begin{equation}
    \begin{aligned}
        \mathbf{P}_k&=
    \begin{bmatrix}
        q_k\mathbf{I}_{M(N+1)} & \mathbf{0} \\
        \mathbf{0} & -q_k\epsilon_k^2-\sigma_k^2\gamma_k
    \end{bmatrix},\\
    \mathbf{G}_k&=[\mathbf{I}_{M(N+1)}\quad \Bar{\mathbf{g}}_k],
    \end{aligned}
\end{equation}
and $q_k\geq 0$ is an auxiliary variable.}
{\color{black}Next, we define a new optimization variable $\hat{\mathbf{X}}_k=\Bar{\mathbf{B}}\bm{\Theta}{\mathbf{W}}_k\bm{\Theta}^H\Bar{\mathbf{B}}^H$ to rewrite the constraint in \eqref{LMI_SINR} as follows
\begin{equation}
\begin{aligned}
    \widehat{\overline{\mbox{C1}}}:\hspace*{1mm} 
    &\mathbf{P}_k-
    \mathbf{G}_k^H\left(\gamma_k\sum_{k'\in\mathcal{K}\setminus\{k\}}\hat{\mathbf{X}}_{k'}-\hat{\mathbf{X}}_k\right)\mathbf{G}_k\succeq \mathbf{0}, \forall k,\\
    \Leftrightarrow\hspace*{1mm} &\mathbf{P}_k-
    \mathbf{G}_k^H\widetilde{\mathbf{X}}_k\mathbf{G}_k\succeq \mathbf{0}, \forall k,
\end{aligned}
\end{equation}
where $\widetilde{\mathbf{X}}_k=\gamma_k\sum_{k'\in\mathcal{K}\setminus\{k\}}\hat{\mathbf{X}}_{k'}-\hat{\mathbf{X}}_k$. 
Note that constraint $\widehat{\overline{\mbox{C1}}}$ is convex w.r.t. variables $\hat{\mathbf{X}}_{k}, \forall k$. Then,
\eqref{Ori_Problem_robust} can be recast into the following equivalent optimization problem:
\begin{eqnarray}
\label{SDR_Problem_robust}
    &&\hspace*{-12mm}\underset{\hat{\mathbf{X}},\mathbf{W},{\mathbf{B}}}{\mino}\hspace*{2mm}\sum_{k\in\mathcal{K}}\left\|\mathbf{w}_k\right\|_2^2\notag\\
    &&\hspace*{-6mm}\mbox{s.t.}\hspace*{4mm} \widehat{\overline{\mbox{C1}}},\mbox{C2a},\mbox{C2b},\hspace*{1mm}\widehat{\overline{\mbox{C3}}}\mbox{:}\hspace*{1mm}\hat{\mathbf{X}}_k=\Bar{\mathbf{B}}\bm{\Theta}{\mathbf{W}}_k\bm{\Theta}^H\Bar{\mathbf{B}}^H,\hspace*{1mm}\forall k,
\end{eqnarray}
where $\hat{\mathbf{X}}=[\hat{\mathbf{X}}_{1},\cdots, \hat{\mathbf{X}}_{K}]$ denotes the collection of all $\hat{\mathbf{X}}_k$, $\forall k$.
Note that $\widehat{\overline{\mbox{C3}}}$ is a non-convex constraint due to the coupled variables ${\mathbf{W}}$ and $\bar{\mathbf{B}}$. In particular, since equality constraint $\widehat{\overline{\mbox{C3}}}$ is quadratic w.r.t. $\bar{\mathbf{B}}$, $\widehat{\overline{\mbox{C3}}}$ is neither a bilinear nor a biconvex constraint. To the best of our knowledge, there is no computationally efficient general method to handle such non-convex constraints. However, thanks to the binary nature of $\bar{\mathbf{B}}$, we are able to convexify constraint $\widehat{\overline{\mbox{C3}}}$ by exploiting the following lemma.
\setcounter{lemma}{4}
\begin{lemma}
    Equality constraint $\widehat{\overline{\mbox{C3}}}$ is equivalent to the following LMI constraints:
    \begin{eqnarray}\label{sdp_robust}
\widehat{\overline{\mathrm{C3a}}}&\mbox{:}&\hspace*{1mm}
   \begin{bmatrix}
        \hat{\mathbf{S}}_k & \hat{\mathbf{X}}_k & \Bar{\mathbf{B}}\bm{\Theta}\\
        \hat{\mathbf{X}}_k^H & \hat{\mathbf{T}}_k & \mathbf{Y}_k\\
        \bm{\Theta}^H\Bar{\mathbf{B}}^H & \mathbf{Y}_k^H & \mathbf{I}_M
    \end{bmatrix}\succeq \mathbf{0},\notag\\
    \widehat{\overline{\mathrm{C3b}}}&\mbox{:}&\hspace{1mm}\mathrm{Tr}\big(\hat{\mathbf{S}}_k\big)-(N+1)M\leq 0,\notag\\
\widehat{\overline{\mathrm{C3c}}}&\mbox{:}&\hspace{1mm}  \begin{bmatrix}
        \mathbf{U}_k & {\mathbf{Y}}_k & \Bar{\mathbf{B}}\bm{\Theta}\\
        {\mathbf{Y}}_k^H & \mathbf{V}_k & {\mathbf{W}}_k\\
        \bm{\Theta}^H\Bar{\mathbf{B}}^H & {\mathbf{W}}_k^H & \mathbf{I}_M
    \end{bmatrix}\succeq \mathbf{0},\notag\\
    \widehat{\overline{\mathrm{C3d}}}&\mbox{:}&\hspace{1mm}\mathrm{Tr}\big({\mathbf{U}}_k\big)-(N+1)M\leq 0,
\end{eqnarray}
where $\hat{\mathbf{S}}_k\in\mathbb{C}^{(N+1)M\times (N+1)M}$, $\hat{\mathbf{T}}_k\in\mathbb{C}^{(N+1)M\times (N+1)M}$, $\mathbf{Y}_k\in\mathbb{C}^{(N+1)M\times M}$, $\mathbf{U}_k\in\mathbb{C}^{(N+1)M\times (N+1)M}$, and $\mathbf{V}_k\in\mathbb{C}^{M\times M}$ are auxiliary variables.
\end{lemma}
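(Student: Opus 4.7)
My plan is to prove this lemma by mimicking the Schur-complement-plus-trace trick used to establish Lemma 1, but applied twice in sequence, bridged by an auxiliary matrix $\mathbf{Y}_k$. The core obstacle is that the equality in $\widehat{\overline{\mathrm{C3}}}$ is \emph{quadratic} in the coupled pair $(\bar{\mathbf{B}},\mathbf{W}_k)$ rather than bilinear, so the single-step decomposition of Lemma 1 does not apply directly. The fix is to split the quadratic equality into two bilinear equalities:
\begin{equation}
\mathbf{Y}_k=\bar{\mathbf{B}}\bm{\Theta}\mathbf{W}_k, \qquad \hat{\mathbf{X}}_k=\bar{\mathbf{B}}\bm{\Theta}\mathbf{Y}_k^H.
\end{equation}
Since $\mathbf{W}_k=\mathbf{w}_k\mathbf{w}_k^H$ is Hermitian, combining the two yields $\hat{\mathbf{X}}_k=\bar{\mathbf{B}}\bm{\Theta}\mathbf{W}_k\bm{\Theta}^H\bar{\mathbf{B}}^H$, so the two-step formulation is equivalent to $\widehat{\overline{\mathrm{C3}}}$.

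Next, I would apply the Schur-complement identity from Lemma 1 to each bilinear equality separately. For $\mathbf{Y}_k=\bar{\mathbf{B}}\bm{\Theta}\mathbf{W}_k$, I introduce slack matrices $\mathbf{U}_k,\mathbf{V}_k$ and take the Schur complement of the $(3,3)$ block of the matrix in $\widehat{\overline{\mathrm{C3c}}}$ to obtain
\begin{equation}
\begin{bmatrix}
\mathbf{U}_k-\bar{\mathbf{B}}\bm{\Theta}\bm{\Theta}^H\bar{\mathbf{B}}^H & \mathbf{Y}_k-\bar{\mathbf{B}}\bm{\Theta}\mathbf{W}_k\\
\mathbf{Y}_k^H-\mathbf{W}_k^H\bm{\Theta}^H\bar{\mathbf{B}}^H & \mathbf{V}_k-\mathbf{W}_k^H\mathbf{W}_k
\end{bmatrix}\succeq\mathbf{0}.
\end{equation}
The $(1,1)$ block forces $\mathbf{U}_k\succeq\bar{\mathbf{B}}\bm{\Theta}\bm{\Theta}^H\bar{\mathbf{B}}^H$, and combining this with the trace inequality $\widehat{\overline{\mathrm{C3d}}}$, I would argue that equality must hold because the trace of a PSD matrix is nonnegative. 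Once the $(1,1)$ block is forced to zero, the off-diagonal block of the above PSD matrix must also vanish, yielding $\mathbf{Y}_k=\bar{\mathbf{B}}\bm{\Theta}\mathbf{W}_k$ exactly. An identical argument applied to $\widehat{\overline{\mathrm{C3a}}}$ together with $\widehat{\overline{\mathrm{C3b}}}$ enforces $\hat{\mathbf{X}}_k=\bar{\mathbf{B}}\bm{\Theta}\mathbf{Y}_k^H$.

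The key computational step, and the one that requires the most care, is verifying the exact trace bound $\mathrm{Tr}(\bar{\mathbf{B}}\bm{\Theta}\bm{\Theta}^H\bar{\mathbf{B}}^H)=(N+1)M$, which is what makes $\widehat{\overline{\mathrm{C3b}}}$ and $\widehat{\overline{\mathrm{C3d}}}$ the correct right-hand sides. I would exploit the Kronecker structures $\bar{\mathbf{B}}=\mathbf{I}_M\otimes\mathbf{B}^T$ and $\bm{\Theta}=\mathbf{I}_M\otimes\bm{\theta}$ to write
\begin{equation}
\mathrm{Tr}\bigl(\bar{\mathbf{B}}\bm{\Theta}\bm{\Theta}^H\bar{\mathbf{B}}^H\bigr)=\mathrm{Tr}\bigl(\mathbf{I}_M\otimes(\bm{\theta}^H\mathbf{B}\mathbf{B}^T\bm{\theta})\bigr)=M\,\|\mathbf{B}^T\bm{\theta}\|_2^2=M\,\|\mathbf{v}\|_2^2,
\end{equation}
and then invoke the binary selection structure $\mathbf{v}=[e^{j\theta_1},\dots,e^{j\theta_N},1]^T$, whose entries all have unit modulus, so $\|\mathbf{v}\|_2^2=N+1$. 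This identity is the bridge that makes the equality reductions in both Schur steps collapse to $\widehat{\overline{\mathrm{C3}}}$.

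Finally, for the reverse direction, I would simply note that given any $(\mathbf{W}_k,\bar{\mathbf{B}})$ satisfying $\widehat{\overline{\mathrm{C3}}}$, one may take $\mathbf{Y}_k=\bar{\mathbf{B}}\bm{\Theta}\mathbf{W}_k$, $\mathbf{U}_k=\hat{\mathbf{S}}_k=\bar{\mathbf{B}}\bm{\Theta}\bm{\Theta}^H\bar{\mathbf{B}}^H$, $\mathbf{V}_k=\mathbf{W}_k^2$, and $\hat{\mathbf{T}}_k=\mathbf{Y}_k^H\mathbf{Y}_k$; substituting these and computing each Schur complement shows the four LMIs hold with the trace constraints tight, completing the equivalence. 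The only delicate point here is checking the dimensions of the auxiliary variables match those stated in the lemma, which follows directly from the Kronecker sizes above. Thus the main work is contained in the two Schur-complement reductions and the trace identity, with the hermiticity of $\mathbf{W}_k$ ensuring that $\mathbf{Y}_k^H=\mathbf{W}_k\bm{\Theta}^H\bar{\mathbf{B}}^H$ and, hence, that the composition $\bar{\mathbf{B}}\bm{\Theta}\mathbf{Y}_k^H$ indeed reproduces the required quadratic form.
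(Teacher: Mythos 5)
Your proposal is correct and follows essentially the same route as the paper: you split the quadratic equality $\widehat{\overline{\mbox{C3}}}$ into two bilinear equalities via the auxiliary matrix $\mathbf{Y}_k=\bar{\mathbf{B}}\bm{\Theta}\mathbf{W}_k^H$, apply the Schur-complement-plus-trace reduction of Lemma 1 to each, and close the argument with the identity $\mathrm{Tr}(\bar{\mathbf{B}}\bm{\Theta}\bm{\Theta}^H\bar{\mathbf{B}}^H)=M\|\mathbf{v}\|_2^2=(N+1)M$, exactly as in the paper's proof (with the added benefit that you make explicit the trace-forcing mechanism the paper delegates to the cited reference). The only blemish is a dimension slip in your reverse-direction construction: you should take $\hat{\mathbf{T}}_k=\mathbf{Y}_k\mathbf{Y}_k^H\in\mathbb{C}^{(N+1)M\times(N+1)M}$ rather than $\mathbf{Y}_k^H\mathbf{Y}_k$, which is only $M\times M$.
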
}
\begin{proof}
    We first define an auxiliary variable $\mathbf{Y}_k$ as $\mathbf{Y}_k=\Bar{\mathbf{B}}\bm{\Theta}{\mathbf{W}}_k^H$. Then, constraint $\widehat{\overline{\mbox{C3}}}$ can be recast into the following bilinear constraints
    \begin{equation}
        \widehat{\overline{\mbox{C3}}}\Leftrightarrow\hat{\mathbf{X}}_k=\Bar{\mathbf{B}}\bm{\Theta}{\mathbf{Y}}_k^H,\quad\mathbf{Y}_k=\Bar{\mathbf{B}}\bm{\Theta}{\mathbf{W}}_k^H.
    \end{equation}
    Based on Lemma 1, $\hat{\mathbf{X}}_k=\Bar{\mathbf{B}}\bm{\Theta}{\mathbf{Y}}_k$ can be rewritten as constraint $\widehat{\overline{\mbox{C3a}}}$ and inequality constraint
    \begin{equation}\label{C2b_original}
        \mathrm{Tr}\big(\hat{\mathbf{S}}_k-\Bar{\mathbf{B}}\bm{\Theta}\bm{\Theta}^H\Bar{\mathbf{B}}^H\big)\leq 0.
    \end{equation}
    According to the property of the Kronecker product and the binary nature of $\Bar{\mathbf{B}}$, we can rewrite \eqref{C2b_original} as the following linear inequality constraint
\begin{equation}\label{C2b}
\begin{aligned}
    &\hspace*{1mm}\mathrm{Tr}\big(\widehat{\mathbf{S}}_k-\Bar{\mathbf{B}}\bm{\Theta}\bm{\Theta}^H\Bar{\mathbf{B}}^H\big)=\mathrm{Tr}\big(\widehat{\mathbf{S}}_k-\mathbf{V}\mathbf{V}^H\big)\leq 0\\
    \Leftrightarrow&\hspace*{1mm}\widehat{\overline{\mbox{C3b}}}:\mathrm{Tr}\big(\widehat{\mathbf{S}}_k\big)-(N+1)M\leq 0.
\end{aligned}
\end{equation}
Similar to the above approach, we can recast constraint $\mathbf{Y}_k=\Bar{\mathbf{B}}\bm{\Theta}{\mathbf{W}}_k^H$ in terms of constraints $\widehat{\overline{\mbox{C3c}}}$ and $\widehat{\overline{\mbox{C3d}}}$,  which completes the proof.
\end{proof}
To simplify our notation, we define $\hat{\mathbf{W}}=[\mathbf{W}_1,\cdots,\mathbf{W}_K]$, $\widehat{\mathbf{S}}=[\hat{\mathbf{S}}_{1},\cdots, \hat{\mathbf{S}}_{K}]$, $\widehat{\mathbf{T}}=[\hat{\mathbf{T}}_{1},\cdots, \hat{\mathbf{T}}_{K}]$, $\mathbf{U}=[\mathbf{U}_1,\cdots,\mathbf{U}_K]$, $\mathbf{V}=[\mathbf{V}_1,\cdots,\mathbf{V}_K]$, and $\mathbf{Y}=[\mathbf{Y}_1,\cdots,\mathbf{Y}_K]$ to collect all ${\mathbf{W}}_{k}$, $\hat{\mathbf{S}}_{k}$, $\hat{\mathbf{T}}_{k}$, $\mathbf{U}_k$, $\mathbf{V}_k$, and $\mathbf{Y}_k$, $\forall k \in\mathcal{K}$, respectively. Hence, the robust resource allocation problem in \eqref{SDR_Problem_robust} can now be recast as
\begin{eqnarray}
\label{Ref_Problem_robust}
    &&\hspace*{-4mm}\underset{\substack{\widetilde{\mathbf{X}},\Bar{\mathbf{B}},\hat{\mathbf{S}},\hat{\mathbf{T}},\hat{\mathbf{W}}\\\mathbf{U},\mathbf{V},\mathbf{Y},\mathbf{q}}}{\mino}\hspace*{2mm}\sum_{k\in\mathcal{K}}\mathrm{Tr}(\mathbf{W}_k)\notag\\
    &&\hspace*{2mm}\mbox{s.t.}\hspace*{7mm} \widehat{\overline{\mbox{C1}}},\mbox{C2a},{\mbox{C2b}},\widehat{\overline{\mbox{C3a}}},\widehat{\overline{\mbox{C3b}}},\widehat{\overline{\mbox{C3c}}},\widehat{\overline{\mbox{C3d}}},\notag\\
    &&\hspace*{14mm} \mbox{C7}: \mathrm{rank}(\mathbf{W}_k)\leq 1,\forall k\in\mathcal{K},\\
    &&\hspace*{14mm}\mbox{C8:} \mathbf{W}_k\succeq\mathbf{0},\forall k\in\mathcal{K}\notag,
\end{eqnarray}
with vector $\mathbf{q}=[q_1,\cdots,q_K]^T$. Note that \eqref{Ref_Problem_robust} is an MINLP problem, where the non-convexity originates from the binary constraint C2b and the rank-one constraint C7.  
To facilitate the derivation of the globally optimal algorithm based on the GBD method, we first tackle the rank-one constraint C7 by semidefinite relaxation (SDR). The tightness of the relaxation is revealed in the following theorem.
\vspace*{-2mm}
\begin{theorem}
    For all feasible binary selection matrices $\bar{\mathbf{B}}$, there always exists an optimal beamforming matrix $\mathbf{W}_k^{\mathrm{opt}}$ with $\mathrm{rank}(\mathbf{W}_k^{\mathrm{opt}})\leq 1, \forall k$. 
\end{theorem}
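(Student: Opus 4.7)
The plan is to fix an arbitrary feasible $\bar{\mathbf{B}}$ and prove tightness of the SDR of \eqref{Ref_Problem_robust} (after dropping C7) via a KKT argument. First I would invoke the observation from the proof of Lemma 5 that, for fixed $\bar{\mathbf{B}}$, constraints $\widehat{\overline{\mbox{C3a}}}$--$\widehat{\overline{\mbox{C3d}}}$ jointly enforce $\hat{\mathbf{X}}_k=\bar{\mathbf{B}}\boldsymbol{\Theta}\mathbf{W}_k\boldsymbol{\Theta}^H\bar{\mathbf{B}}^H$; eliminating the auxiliary variables $\hat{\mathbf{X}}_k,\hat{\mathbf{S}}_k,\hat{\mathbf{T}}_k,\mathbf{U}_k,\mathbf{V}_k,\mathbf{Y}_k$ reduces the relaxation to the canonical robust multiuser beamforming SDP
\begin{equation*}
\min_{\{\mathbf{W}_k\succeq\mathbf{0}\},\,\mathbf{q}\geq\mathbf{0}}\,\sum_{k\in\mathcal{K}}\mathrm{Tr}(\mathbf{W}_k)\ \ \text{s.t.}\ \ \mathbf{P}_k-\mathbf{G}_k^H\mathbf{A}\widetilde{\mathbf{W}}_k\mathbf{A}^H\mathbf{G}_k\succeq\mathbf{0},\ \forall k,
\end{equation*}
with $\mathbf{A}:=\bar{\mathbf{B}}\boldsymbol{\Theta}$ and $\widetilde{\mathbf{W}}_k=\gamma_k\sum_{k'\neq k}\mathbf{W}_{k'}-\mathbf{W}_k$. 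Under a standard Slater condition on the robust-SINR LMIs, strong duality holds and the KKT conditions characterize the optimum.

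Next, I would introduce PSD dual variables $\mathbf{D}_k$ for the $k$-th robust-SINR LMI and $\mathbf{Z}_k$ for $\mathbf{W}_k\succeq\mathbf{0}$. Writing the Lagrangian and imposing stationarity $\nabla_{\mathbf{W}_k}\mathcal{L}=\mathbf{0}$ yields
\begin{equation*}
\mathbf{Z}_k=\mathbf{N}_k-\mathbf{A}^H\mathbf{G}_k\mathbf{D}_k\mathbf{G}_k^H\mathbf{A},\quad \mathbf{N}_k:=\mathbf{I}_M+\!\sum_{k'\neq k}\!\gamma_{k'}\mathbf{A}^H\mathbf{G}_{k'}\mathbf{D}_{k'}\mathbf{G}_{k'}^H\mathbf{A}\succ\mathbf{0},
\end{equation*}
together with complementary slackness $\mathbf{Z}_k\mathbf{W}_k^{\mathrm{opt}}=\mathbf{0}$. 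Post-multiplying the stationarity identity by $\mathbf{W}_k^{\mathrm{opt}}$ and using the invertibility of $\mathbf{N}_k$ gives $\mathrm{rank}(\mathbf{W}_k^{\mathrm{opt}})\leq\mathrm{rank}(\mathbf{A}^H\mathbf{G}_k\mathbf{D}_k\mathbf{G}_k^H\mathbf{A})$. Hence the claim reduces to proving that this sandwich matrix has rank at most one at the optimum.

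The main obstacle is this final rank bound. To secure it I would partition $\mathbf{D}_k$ conformally with $\mathbf{G}_k=[\mathbf{I}_{M(N+1)}\ \bar{\mathbf{g}}_k]$ and $\mathbf{P}_k$, and then exploit the matrix complementary slackness $\mathbf{D}_k[\mathbf{P}_k-\mathbf{G}_k^H\mathbf{A}\widetilde{\mathbf{W}}_k^{\mathrm{opt}}\mathbf{A}^H\mathbf{G}_k]=\mathbf{0}$. The key structural feature, mirroring the S-procedure-based SDR-tightness proofs for robust transmit beamforming in the literature, is that $\mathbf{P}_k$ is a positive scalar multiple of the identity on the channel-uncertainty block and a single negative scalar on the worst-case-SINR block; this forces the PSD dual $\mathbf{D}_k$ to admit a rank-one factorization whose contraction with $\mathbf{A}^H\mathbf{G}_k(\cdot)\mathbf{G}_k^H\mathbf{A}$ is necessarily rank-one. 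Carrying out this block-wise slackness argument cleanly, and uniformly in every feasible $\bar{\mathbf{B}}$ (so that no hidden assumption on the uncertainty set leaks in), is the delicate step. Once it is in place, a rank-one decomposition $\mathbf{W}_k^{\mathrm{opt}}=\mathbf{w}_k^{\mathrm{opt}}(\mathbf{w}_k^{\mathrm{opt}})^H$ furnishes the claimed optimal beamformer, and the SDR of \eqref{Ref_Problem_robust} is tight, justifying the use of the relaxation in the subsequent GBD- and SCA-based designs.
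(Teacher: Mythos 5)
Your overall strategy coincides with the paper's: fix a feasible $\bar{\mathbf{B}}$, drop the rank constraint, invoke Slater's condition and strong duality for the resulting SDP, use stationarity of the Lagrangian in $\mathbf{W}_k$ together with complementary slackness $\mathbf{Z}_k\mathbf{W}_k^{\mathrm{opt}}=\mathbf{0}$ to bound $\mathrm{rank}(\mathbf{W}_k^{\mathrm{opt}})$ by the rank of the dual sandwich associated with the robust-SINR LMI, and then argue that this sandwich has rank at most one. Your up-front elimination of the auxiliary variables $\hat{\mathbf{X}}_k,\hat{\mathbf{S}}_k,\hat{\mathbf{T}}_k,\mathbf{U}_k,\mathbf{V}_k,\mathbf{Y}_k$ is a legitimate streamlining (for a fixed binary $\bar{\mathbf{B}}$, Lemma 5 gives the equivalence with $\hat{\mathbf{X}}_k=\bar{\mathbf{B}}\bm{\Theta}\mathbf{W}_k\bm{\Theta}^H\bar{\mathbf{B}}^H$); the paper instead keeps these variables and eliminates their dual blocks $\hat{\mathbf{Q}}_{k,22},\hat{\bm{\Xi}}_{k,22},\hat{\mathbf{Q}}_{k,32},\hat{\bm{\Xi}}_{k,21}$ through the KKT conditions, arriving at the same effective stationarity identity for $\mathbf{Z}_k^{\mathrm{opt}}$. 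Either route is acceptable for that part of the argument.

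The genuine gap is the final rank bound, which you correctly identify as the main obstacle but do not establish, and the mechanism you sketch for it is not the right one. It is not true in general that the PSD dual $\mathbf{D}_k$ of the robust-SINR LMI is forced to ``admit a rank-one factorization,'' and nothing in the argument requires this: what must be shown is that the sandwich $\mathbf{G}_k\mathbf{D}_k\mathbf{G}_k^H$ (hence $\mathbf{A}^H\mathbf{G}_k\mathbf{D}_k\mathbf{G}_k^H\mathbf{A}$) has rank at most one even when $\mathbf{D}_k$ itself does not. The paper's device, following \cite{alavi2017robust}, is to exploit the two identities $\mathbf{G}_k[\mathbf{I}_{M(N+1)},\mathbf{0}]^T=\mathbf{I}_{M(N+1)}$ and $\mathbf{P}_k[\mathbf{I}_{M(N+1)},\mathbf{0}]^T=q_k\mathbf{G}_k^H-[\mathbf{0},q_k\bar{\mathbf{g}}_k]^T$, which hold precisely because $\mathbf{P}_k$ is a scalar multiple of the identity on the uncertainty block. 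Pre-multiplying the complementary slackness condition $\mathbf{D}_k\bigl(\mathbf{P}_k-\mathbf{G}_k^H\widetilde{\mathbf{X}}_k\mathbf{G}_k\bigr)=\mathbf{0}$ by $\mathbf{G}_k$ and post-multiplying by $[\mathbf{I}_{M(N+1)},\mathbf{0}]^T$ then gives
\begin{equation*}
\mathbf{G}_k\mathbf{D}_k\mathbf{G}_k^H\bigl(q_k\mathbf{I}_{M(N+1)}-\widetilde{\mathbf{X}}_k\bigr)=\mathbf{G}_k\mathbf{D}_k[\mathbf{0},q_k\bar{\mathbf{g}}_k]^T,
\end{equation*}
whose right-hand side has rank at most one because $[\mathbf{0},q_k\bar{\mathbf{g}}_k]^T$ has a single nonzero row; since the LMI forces $q_k\mathbf{I}_{M(N+1)}-\widetilde{\mathbf{X}}_k\succeq\mathbf{0}$ and this matrix is nonsingular, $\mathrm{rank}\bigl(\mathbf{G}_k\mathbf{D}_k\mathbf{G}_k^H\bigr)\leq 1$ follows. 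Without this (or an equivalent) concrete manipulation, a block partition of $\mathbf{D}_k$ and an appeal to the structure of $\mathbf{P}_k$ alone will not deliver the conclusion, so your proof as written is incomplete at exactly the step on which the theorem rests.
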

\begin{proof}
Please refer to the appendix.
\end{proof}
\begin{remark}\vspace*{-3mm}
    By omitting the rank-one constraint C7, problem \eqref{Ref_Problem_robust} becomes convex w.r.t. continuous variables $\hat{\mathbf{X}},\hat{\mathbf{S}},\hat{\mathbf{T}},\hat{\mathbf{W}}$,$\mathbf{U},\mathbf{V},\mathbf{Y},$ and $\mathbf{q}$, whereas it is linear w.r.t. discrete matrix $\mathbf{B}$. Thus, the MINLP problem in \eqref{Ref_Problem_robust} can be optimally solved by applying a similar procedure as in \textbf{Algorithm 1}.
\end{remark}
 
\vspace*{-4mm}\subsection{Proposed GBD-based Algorithm}
We first decompose the rank-one relaxed version of \eqref{Ref_Problem_robust} into a primal problem and a master problem. For a given binary matrix $\Bar{\mathbf{B}}^{(i-1)}$, the primal problem is obtained as follows,
\begin{eqnarray}
\label{Master_Problem_robust}
    &&\hspace*{-4mm}\underset{\substack{\widetilde{\mathbf{X}},\hat{\mathbf{S}},\hat{\mathbf{T}},\hat{\mathbf{W}}\\\mathbf{U},\mathbf{V},\mathbf{Y},\mathbf{q}}}{\mino}\hspace*{2mm}\sum_{k\in\mathcal{K}}\mathrm{Tr}(\mathbf{W}_k)\notag\\
    &&\hspace*{2mm}\mbox{s.t.}\hspace*{7mm} \widehat{\overline{\mbox{C1}}},\widehat{\overline{\mbox{C3a}}},\widehat{\overline{\mbox{C3b}}}, \widehat{\overline{\mbox{C3c}}}, \widehat{\overline{\mbox{C3d}}},\mbox{C8}.
\end{eqnarray}
Note that \eqref{Master_Problem_robust} is convex w.r.t. $\hat{\mathbf{W}}, \hat{\mathbf{X}},\hat{\mathbf{S}},\hat{\mathbf{T}},\mathbf{U},\mathbf{V}$, $\mathbf{Y}$, and $\mathbf{q}$, and can be optimally solved by standard convex solvers such as CVX. For the sake of notational simplicity, we define $\widehat{\bm{\Gamma}}=\{\hat{\mathbf{X}},\hat{\mathbf{S}},\hat{\mathbf{T}},\hat{\mathbf{W}},\mathbf{U},\mathbf{V},\mathbf{Y},\mathbf{q}\}$ to collect all optimization variables.  Similar to the notations in Section \ref{Opt_perfect_CSI}, we let \\$\widehat{\bm{\Gamma}}^{(i)}=\left\{\hat{\mathbf{X}}^{(i)},\hat{\mathbf{S}}^{(i)},\hat{\mathbf{T}}^{(i)},\hat{\mathbf{W}}^{(i)},\mathbf{U}^{(i)},\mathbf{V}^{(i)},\mathbf{Y}^{(i)},\mathbf{q}^{(i)}\right\}$ denote the optimal solution of \eqref{Master_Problem_robust} in the $i$-th iteration. The Lagrangian function of \eqref{Master_Problem_robust} is given by
\begin{equation}
    \bar{\mathcal{L}}(\widehat{\bm{\Gamma}},\bar{\mathbf{B}},\widehat{\bm{\Lambda}})=\sum_{k\in\mathcal{K}}\mathrm{Tr}(\mathbf{W}_k)+\hat{f}_1(\widehat{\bm{\Gamma}},\widehat{\bm{\Lambda}})+\sum_{k\in\mathcal{K}}\hat{f}_2(\bar{\mathbf{B}},\widehat{\bm{\Lambda}}),
\end{equation}
where
$\hat{f}_1(\bm{\Gamma},\widehat{\bm{\Lambda}})$ collects the terms that are independent of $\bar{\mathbf{B}}$ and $\hat{f}_2(\bar{\mathbf{B}},\widehat{\bm{\Lambda}})$ is defined as
\begin{equation}
    \hat{f}_2(\bar{\mathbf{B}},\widehat{\bm{\Lambda}})\hspace*{-0.5mm}=\hspace*{-0.5mm}2\left[\mathrm{Re}\left\{\mathrm{Tr}(\bar{\mathbf{B}}\bm{\Theta}\hat{\mathbf{Q}}_{k,31})\right\}\hspace*{-0.5mm}+\hspace*{-0.5mm}\mathrm{Re}\left\{\mathrm{Tr}(\bar{\mathbf{B}}\bm{\Theta}\hat{\bm{\Xi}}_{k,31})\right\}\right].
\end{equation}
Here, $\widehat{\bm{\Lambda}}=\{\hat{\mathbf{Q}}_k,\hat{\bm{\Xi}}_k\}$ denotes the collection of dual variables, where $\hat{\mathbf{Q}}_k\in\mathbb{C}^{(2N+3)M\times (2N+3)M}$ and $\hat{\bm{\Xi}}_k\in\mathbb{C}^{(N+3)M\times (N+3)M}$ represent the dual matrices for constraints $\widehat{\overline{\mbox{C3a}}}$ and $\widehat{\overline{\mbox{C3c}}}$, respectively. Similar to the notations in \eqref{Qi}, 
the dual variable matrices $\hat{\mathbf{Q}}_k$ and $\hat{\bm{\Xi}}_k$ are decomposed as follows
\begin{equation}\label{Qi_Ei}
\setlength\arraycolsep{1.5pt}
  \hat{\mathbf{Q}}_k\hspace*{-0.5mm}=\hspace*{-0.5mm}\left[ \begin{array}{ccc}
        \hat{\mathbf{Q}}_{k,11} & \hat{\mathbf{Q}}_{k,21}^H & \hat{\mathbf{Q}}_{k,31}^H\\
        \hat{\mathbf{Q}}_{k,21} & \hat{\mathbf{Q}}_{k,22} & \hat{\mathbf{Q}}_{k,32}^H\\
        \hat{\mathbf{Q}}_{k,31} & \hat{\mathbf{Q}}_{k,32} & \hat{\mathbf{Q}}_{k,33}
    \end{array}\right],\, \hat{\bm{\Xi}}_k\hspace*{-0.5mm}=\hspace*{-0.5mm}\left[ \begin{array}{ccc}
        \hat{\bm{\Xi}}_{k,11} & \hat{\bm{\Xi}}_{k,21}^H & \hat{\bm{\Xi}}_{k,31}^H\\
       \hat{\bm{\Xi}}_{k,21} & \hat{\bm{\Xi}}_{k,22} & \hat{\bm{\Xi}}_{k,32}^H\\
       \hat{\bm{\Xi}}_{k,31} & \hat{\bm{\Xi}}_{k,32} & \hat{\bm{\Xi}}_{k,33}
    \end{array}\right],
\end{equation}
where $\hat{\mathbf{Q}}_{k,11}\in\mathbb{C}^{(N+1)M\times (N+1)M}$, $\hat{\mathbf{Q}}_{k,21}\in\mathbb{C}^{(N+1)M\times (N+1)M}$, $\hat{\mathbf{Q}}_{k,22}\in\mathbb{C}^{(N+1)M\times (N+1)M}$, $\hat{\mathbf{Q}}_{k,31}\in\mathbb{C}^{M\times (N+1)M}$, $\hat{\mathbf{Q}}_{k,32}\in\mathbb{C}^{M\times (N+1)M}$, $\hat{\mathbf{Q}}_{k,33}\in\mathbb{C}^{M\times M}$ and $\hat{\bm{\Xi}}_{k,11}\in\mathbb{C}^{(N+1)M\times (N+1)M}$, $\hat{\bm{\Xi}}_{k,21}\in\mathbb{C}^{M\times (N+1)M}$, $\hat{\bm{\Xi}}_{k,22}\in\mathbb{C}^{M\times M}$, $\hat{\bm{\Xi}}_{k,31}\in\mathbb{C}^{M\times (N+1)M}$, $\hat{\bm{\Xi}}_{k,32}\in\mathbb{C}^{M\times M}$, $\hat{\bm{\Xi}}_{k,33}\in\mathbb{C}^{M\times M}$ denote sub-matrices of $\hat{\mathbf{Q}}_k$ and $\hat{\bm{\Xi}}_k$, respectively.

On the other hand, if the primal problem in \eqref{Master_Problem_robust} is not feasible in the $i$-th iteration for the given $\bar{\mathbf{B}}^{(i-1)}$, we turn to solve the following feasibility-check problem:
\begin{eqnarray}
\label{Feasible_problem_robust}
    &&\hspace*{-6mm}\underset{\widehat{\bm{\Gamma}},\hat{\bm{\lambda}}}{\mino}\hspace*{2mm}\sum_{k\in\mathcal{K}}\hat{\lambda}_k\notag\\
    &&\hspace*{0mm}\mbox{s.t.}\hspace*{6mm}\widehat{\overline{\mbox{C3a}}},\widehat{\overline{\mbox{C3b}}},\widehat{\overline{\mbox{C3c}}},\widehat{\overline{\mbox{C3d}}},\mbox{C8}, \notag\\
    &&\hspace*{10mm}\widehat{\widetilde{\mbox{C1}}}\mbox{:}\hspace*{1mm} \mathbf{P}_k-
    \mathbf{G}_k^H\widetilde{\mathbf{X}}_k\mathbf{G}_k\succeq -\hat{\lambda}_k\mathbf{I}_{M(N+1)+1},\hspace*{1mm}\forall k\in \mathcal{K},\notag\\
    &&\hspace*{10mm}\overline{\mbox{C4}}\mbox{:}\hspace*{1mm}\hat{\lambda}_k\geq 0, \forall k\in\mathcal{K},\vspace*{-2mm}
\end{eqnarray}
where $\hat{\bm{\lambda}}=[\hat{\lambda}_1,\cdots,\hat{\lambda}_K]$ is an auxiliary optimization variable. Problem \eqref{Feasible_problem_robust} is convex, always feasible, and can be solved with CVX \cite{grant2008cvx}. Similar to the notation in \eqref{Primal_problem}, we define $\widehat{\widetilde{\bm{\Lambda}}}=\{\widehat{\widetilde{\mathbf{Q}}}_k,\widehat{\widetilde{\bm{\Xi}}}_k\}$, where $\widehat{\widetilde{\mathbf{Q}}}_k$ and $\widehat{\widetilde{\bm{\Xi}}}_k$ are the dual matrices for constraints $\widehat{\overline{\mbox{C3a}}}$ and $\widehat{\overline{\mbox{C3d}}}$ in \eqref{Feasible_problem_robust}, respectively, and the optimal solution of \eqref{Feasible_problem_robust} is denoted by $\widehat{\widetilde{\bm{\Gamma}}}^{(i)}$. The Lagrangian of \eqref{Feasible_problem_robust} is given by
\vspace*{-2mm}
\begin{equation}\label{Largrangian_feasible_robust}
    \widehat{\widetilde{\mathcal{L}}}(\widehat{\bm{\Gamma}},\bar{\mathbf{B}},\widehat{\widetilde{\bm{\Lambda}}})=\hat{f}_{1}(\widehat{\bm{\Gamma}},\widehat{\widetilde{\bm{\Lambda}}})+\hat{f}_{2}(\bar{\mathbf{B}},\widehat{\widetilde{\bm{\Lambda}}}).
\end{equation}
Based on the solutions of \eqref{Master_Problem_robust} and \eqref{Feasible_problem_robust}, the master problem in the $i$-th iteration is formulated as follows
\begin{eqnarray}
\label{Master_problem_robust_Corr}
    &&\hspace*{-8mm}\underset{\bar{\mathbf{B}},\hat{\eta}}{\mino}\hspace*{2mm}\hat{\eta}\notag\\
    &&\hspace*{-4mm}\mbox{s.t.}\hspace*{2mm}\mbox{C2a}, \mbox{C2b},\notag\\
    &&\hspace*{2mm}\widehat{\mbox{{C5a}}}\mbox{:}\hspace*{1mm}\hat{\eta}\geq\bar{\mathcal{L}}(\widehat{\bm{\Gamma}}^{(t)},\bar{\mathbf{B}},\widehat{\bm{\Lambda}}^{(t)}),\hspace*{0mm}\forall t\in\hat{\mathcal{F}}\cap\{1,\cdots, i\},\notag\\
    &&\hspace*{2mm}\widehat{\mbox{{C5b}}}\mbox{:}\hspace*{1mm}0\geq\widehat{\widetilde{\mathcal{L}}}(\widehat{\widetilde{\bm{\Gamma}}}^{(i)},\bar{\mathbf{B}},\widehat{\widetilde{\bm{\Lambda}}}^{(i)}),\hspace*{0mm}\forall t\in\hat{\mathcal{I}}\cap\{1,\cdots, i\},\vspace*{-2mm}
\end{eqnarray}
where $\hat{\mathcal{F}}$ and $\hat{\mathcal{I}}$ denote the sets collecting the iteration indices for which \eqref{Master_Problem_robust} is feasible and infeasible, respectively. The master problem in \eqref{Master_problem_robust_Corr} is an MILP problem, which can be optimally solved by exploiting standard numerical solvers for MILPs, e.g., MOSEK. By optimally solving \eqref{Master_problem_robust_Corr}, a lower bound $\hat{\eta}^{(i)}$ for the original problem in \eqref{Ori_Problem_robust} is obtained, and the obtained optimal solution $\bar{\mathbf{B}}^{(i)}$ is exploited to generate the primal problem in the next iteration.

 The resulting GBD-based algorithm can be summarized in a similar manner as \textbf{Algorithm 1}. Specifically, the primal problem in \eqref{Master_Problem_robust} is an SDP problem with $K$ LMI constraints in $\widehat{\widebar{\mbox{C1}}}$ of size $((N+1)M+1)\times ((N+1)M+1)$, $K$ LMI constraints in $\widehat{\widebar{\mbox{C3a}}}$ of size $(2N+3)M\times (2N+3)M$, $K$ LMI constraints in $\widehat{\widebar{\mbox{C3c}}}$ of size $(N+3)M\times (N+3)M$, and $K$ LMI constraints in C8 of size $M\times M$. Therefore, the computational complexity in each iteration for solving the primal problem is given by $\mathcal{O}(\log\frac{1}{\rho_{\mathrm{SDP}}}(K(((N+1)M+1)^3+((2N+3)^3+(N+3)^3+1)M^3)+K^2((N+1)M+1)^2+((2N+3)^2+(N+3)^2+1)M^2)+K^3))$, where $\rho_{\mathrm{SDP}}$ is the convergence tolerance of the interior point method used to solve the SDP problem\cite[Theorem 3.12]{bomze2010interior}. Moreover, the initial point, the convergence, and the optimality analysis are identical to those for \textbf{Algorithm 1} in Section \rom{3}-B, and are not described here in detail due to page limitation. 
 \vspace*{-2mm}\begin{remark}
     We note that the computational complexity for imperfect CSI is significantly higher compared to that for perfect CSI since more LMI constraints and auxiliary variables have to be introduced to suitably reformulate the robust QoS constraint. Thus, it is not computationally efficient to employ the GBD-based algorithm designed for imperfect CSI to solve the resource allocation problem for perfect CSI by setting the CSI estimation error to zero, i.e., $\epsilon_k=0$.
 \end{remark}

\vspace*{-4mm}\subsection{Proposed Penalty-based SCA Algorithm}
Following a similar procedure as in Section \rom{3}-C, we can recast the robust resource allocation problem in \eqref{Ref_Problem_robust} as 
\begin{eqnarray}
\label{Penalty_Problem_robust}
    &&\hspace*{-6mm}\underset{\widehat{\bm{\Gamma}},\bar{\mathbf{B}}}{\mino}\hspace*{2mm}\sum_{k\in\mathcal{K}}\mathrm{Tr}(\mathbf{W}_k)+\frac{1}{\mu}\sum_{n=1}^N\sum_{l=1}^L (b_n[l]-b_n^2[l])\notag\\
    &&\hspace*{0mm}\mbox{s.t.}\hspace*{6mm} \widehat{\overline{\mbox{C1}}},\mbox{C2a},\overline{\mbox{C2c}},\widehat{\overline{\mbox{C3a}}},\widehat{\overline{\mbox{C3b}}},\widehat{\overline{\mbox{C3c}}},\widehat{\overline{\mbox{C3d}}},\mbox{C7},\mbox{C8},
\end{eqnarray}
By using SDP relaxation as for the optimal resource allocation algorithm, we can relax rank-one constraint C7 in problem \eqref{Penalty_Problem_robust}. Then, we apply SCA to obtain the following optimization problem in the $(i+1)$-th iteration
\begin{eqnarray}
\label{SCA_Problem_robust}
    &&\hspace*{-4mm}\underset{\widehat{\bm{\Gamma}},\bar{\mathbf{B}}}{\mino}\hspace*{2mm}\sum_{k\in\mathcal{K}}\mathrm{Tr}(\mathbf{W}_k)+\frac{1}{\mu}\sum_{n=1}^N\sum_{l=1}^L\notag\\
    &&\hspace*{13mm} (b_n[l]-2b_n^{(i)}[l]b_n[l]-(b_n^{(i)}[l])^2)\notag\\
    &&\hspace*{2mm}\mbox{s.t.}\hspace*{7mm} \widehat{\overline{\mbox{C1}}},\mbox{C2a},\overline{\mbox{C2c}},\widehat{\overline{\mbox{C3a}}},\widehat{\overline{\mbox{C3b}}},\widehat{\overline{\mbox{C3c}}},\widehat{\overline{\mbox{C3d}}},\mbox{C8},
\end{eqnarray}
which is convex w.r.t. $\widehat{\bm{\Gamma}}$ and $\bar{\mathbf{B}}$ and thus can be optimally solved by standard convex program solvers such as CVX. The resulting SCA algorithm can be implemented similarly to Algorithm 2 with slight changes in the notations. Specifically, problem \eqref{SCA_problem} is substituted by problem \eqref{SCA_Problem_robust}. The initial point as well as the optimality and convergence analysis are identical to those for \textbf{Algorithm 2} in Section \rom{3}-C, and thus are not provided here due to page limitation. The algorithm is guaranteed to converge to a stationary point of problem \eqref{SCA_Problem_robust} in polynomial time \cite{razaviyayn2013unified}. In particular, the relaxed problem in \eqref{SCA_Problem_robust} has a similar structure as the primal problem in \eqref{Master_Problem_robust}. Therefore, the computational complexity of each iteration of the proposed SCA-based algorithm is given by $\mathcal{O}(\log\frac{1}{\rho_{\mathrm{SDP}}}(K(((N+1)M+1)^3+((2N+3)^3+(N+3)^3+1)M^3)+K^2((N+1)M+1)^2+((2N+3)^2+(N+3)^2+1)M^2)+K^3))$, where $\rho_{\mathrm{SDP}}$ is the convergence tolerance of the interior point method used for solving the SDP problem\cite[Theorem 3.12]{bomze2010interior}. 
\vspace*{-2mm}\begin{remark}
    The proposed SCA-based algorithm for imperfect CSI is not efficient in solving the resource allocation problem for perfect CSI due to the significantly higher computational complexity compared to \textbf{Algorithm 2}.
\end{remark}
\vspace*{-5mm}
\section{Simulation Results}
\vspace*{-1mm}\subsection{Simulation Setting}
\begin{figure}\vspace*{-3mm}
    \centering
    \includegraphics[width=3.2in]{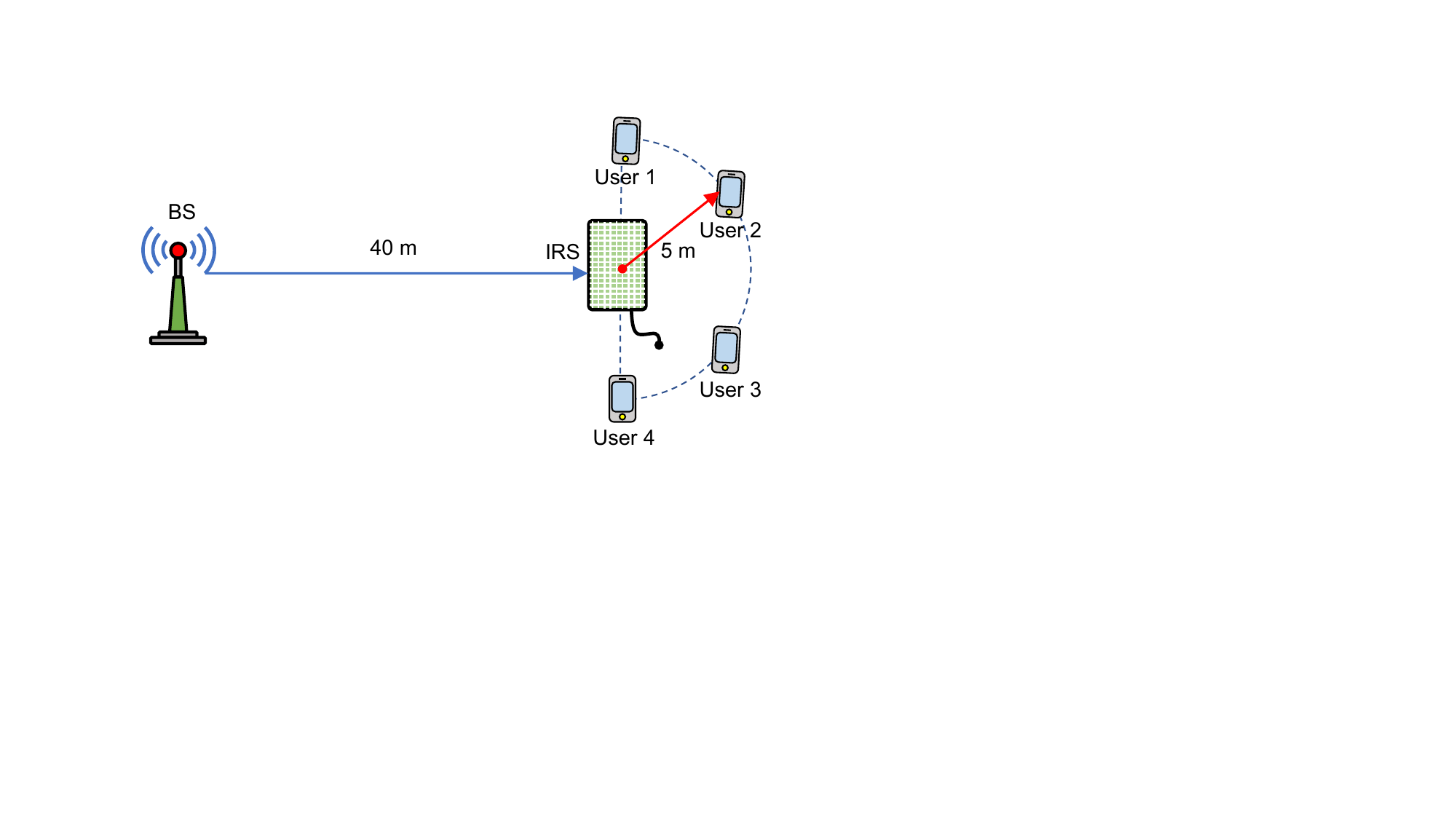}
    \caption{IRS-assisted wireless network topology adopted for simulations.}
    \label{fig::Sim_setting}
\end{figure}
As shown in Fig. \ref{fig::Sim_setting}, we consider an IRS-assisted multiuser MISO system, where the BS is equipped with $M=6$ transmit antennas and serves $K=4$ single-antenna users. 
{\color{black} The IRS is deployed $D=40$ m from the BS, while the users are located on a circle having its origin at the IRS and a radius of $r=5$ m as shown in Fig. \ref{fig::Sim_setting}.}
The noise variances of all users are set to $\sigma_k^2=-90$ dBm, $\forall k \in\mathcal{K}$. The channel matrix $\mathbf{F}$ between the BS and the IRS is modeled as Rician fading
\begin{equation}
    \mathbf{F}=\sqrt{L_0D^{-\alpha_{\mathrm{BI}}}}\left(\sqrt{\frac{\beta_{\mathrm{BI}}}{1+\beta_{\mathrm{BI}}}}\mathbf{F}_L+\sqrt{\frac{1}{1+\beta_{\mathrm{BI}}}}\mathbf{F}_N\right),
\end{equation}
where $L_0$ is the path loss at the reference distance $d_0=1$ m. $\alpha_{\mathrm{BI}}=2.2$ and $\beta_{\mathrm{BI}}=1$ denote the path loss exponent and the Rician factor, respectively \cite{yu2021robust}. Matrices $\mathbf{F}_L$ and $\mathbf{F}_N$ denote the line-of-sight (LoS) and non-LoS (NLoS) components, respectively. Here, the LoS component is the product of the receive and transmit array response vectors, while the entries of NLoS matrix $\mathbf{F}_N$ are modeled as Rayleigh-distributed random variables. We generate the channel vector between the IRS and user $k$, i.e., $\mathbf{h}_k$, in a similar manner as $\mathbf{F}$. The corresponding path loss exponent and the Rician fading factor of $\mathbf{h}_k$, $\forall k \in\mathcal{K}$, i.e., $\alpha_{\mathrm{IU}}^k$ and $\beta_{\mathrm{IU}}^k$, are set to $2.8$ and $1$, respectively \cite{yu2021robust}. 
{\color{black} Furthermore, we assume that the users are not located in the direct path between BS and IRS and that the LoS path between the BS and users is fully blocked. Thus, the direct channel vector $\mathbf{d}_k$ between the BS and user $k$ has only an NLoS component.}
Hence, $\mathbf{d}_k$ is modeled as Rayleigh distributed, where the path loss exponent for the direct channels is given by $\alpha_d=4$. On the other hand, the channel estimates $\bar{\mathbf{E}}_k$ and $\bar{\mathbf{d}}_k$ are generated as $\bar{\mathbf{E}}_k=\mathrm{diag}(\mathbf{h}_k^H)\mathbf{F}-\Delta\mathbf{E}_k$ and $\bar{\mathbf{d}}_k=\mathbf{d}_k-\Delta\mathbf{d}_k$, respectively, where the channel estimation errors $\Delta\mathbf{E}_k$ and $\Delta\mathbf{d}_k$ are randomly generated and meet the norm-bound constraints. For ease of exposition, we assume all users impose the same SINR requirement, i.e., $\gamma_k=\gamma, \forall k$, and define the maximum normalized estimation error of the channels as $\kappa_k=\epsilon_k/\|\bar{\mathbf{H}}_k\|_F=\kappa$, $\forall k$. The number of channel realizations considered for the following simulations is $200$.

For comparison, we consider four baseline schemes. For \textbf{Baseline Scheme 1}, we evaluate the performance of a conventional multiuser MISO system without the deployment of an IRS and solve problem \eqref{Primal_problem} and \eqref{Master_Problem_robust} optimally with $\mathbf{B}=\mathbf{0}$ for the cases of perfect and imperfect CSI, respectively. For \textbf{Baseline Scheme 2}, the binary IRS selection matrix $\mathbf{B}_{\mathrm{rdn}}$ is randomly generated. The beamforming matrix is designed by solving the problems in \eqref{Primal_problem} and \eqref{Master_Problem_robust} with $\mathbf{B}_{\mathrm{rdn}}$ for perfect and imperfect CSI, respectively. Furthermore, we adopt the AO-based algorithms in \cite{wu2019intelligent} and \cite{yu2021robust} as \textbf{Baseline Scheme 3} for the cases of perfect and imperfect CSI, respectively. Here, the BS beamforming matrix and continuous IRS phase shifts are alternatingly optimized in an iterative manner by adopting semidefinite relaxation until convergence. After convergence, the obtained continuous phase shift matrix is quantized to the feasible discrete phase shift matrix $\bm{\Phi}_{\mathrm{AO}}$, and the beamforming matrix is designed by solving problems in \eqref{Primal_problem} and \eqref{Master_Problem_robust} with $\bm{\Phi}_{\mathrm{AO}}$. Moreover, for perfect CSI, we adopt the algorithm based on the inner approximation (IA) method reported in \cite{yu2021robust} as \textbf{Baseline Scheme 4}. {\color{black} In this case, the coupling between the BS beamforming matrix and the continuous IRS phase shifts is handled by the IA technique, while the non-convexity introduced by the continuous IRS phase shifts is reformulated using SCA. After convergence, the feasible discrete phase shift matrix $\bm{\Phi}_{\mathrm{IA}}$ and beamforming matrix are generated based on the obtained continuous IRS phase shift matrix using a similar procedure as for {Baseline Scheme 3}. Since the IA algorithm achieves a close-to-optimal performance for perfect CSI with continuous IRS phase shifts, Baseline Scheme 4 can reveal the performance loss caused by quantization of continuous phase shifts to discrete ones.}
\vspace*{-4mm}\subsection{Numerical Results for Perfect CSI}
\subsubsection{Convergence of the proposed algorithms}
Fig. \ref{fig::conver_perfect} illustrates the convergence of the proposed GBD-based and SCA-based algorithms for different numbers of IRS elements $N$ and perfect CSI at the BS. 
{\color{black}In the upper half of Fig. \ref{fig::conver_perfect}, the objective function values of the upper bound (UB) and lower bound (LB) in the $i$-th iteration are obtained from \eqref{UB} and the optimization problem in \eqref{Master_problem}, respectively. Furthermore, we adopt an exhaustive search (ES) approach as a benchmark to verify the global optimality of the proposed GBD-based method. Specifically, the ES method solves the problem in \eqref{Primal_problem} considering all $L^N$ possible IRS phase shift configurations to attain the global optimum. As can be seen from the upper half of Fig. \ref{fig::conver_perfect}, the proposed GBD-based algorithm converges to the optimal solution obtained with the ES method, which confirms the global optimality of the proposed GBD-based method. Moreover, the upper bound value obtained by solving the primal problem coincides with the lower bound value obtained from \eqref{Master_problem} after less than $150$ iterations for $N=16$, which is much faster than the brute force ES method that requires evaluating all $L^N=2^{16}$ possible IRS phase shift configurations \cite{sahinidis1991convergence,ng2014robust}.}
On the other hand, as the lower half of Fig. \ref{fig::conver_perfect} shows, the proposed suboptimal algorithm converges to a locally optimal value in fewer than $10$ iterations, approaching the optimal bound obtained via the GBD-based method, revealing a trade-off between complexity and performance. 
\begin{figure}[t]\vspace*{-5mm}
	\centering
	\includegraphics[width=3.2in]{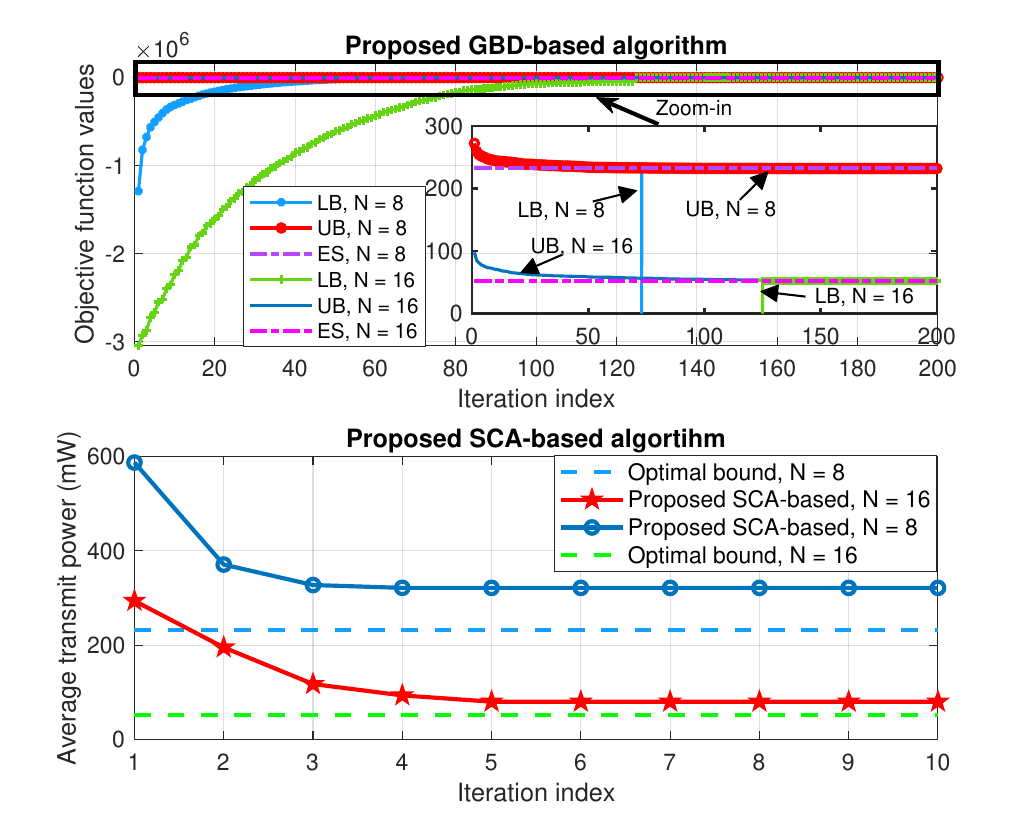}\vspace*{-2mm}
	\caption{Convergence behavior of proposed algorithms for different values of $N$ and perfect CSI. The system parameters are set as $\gamma=5$ dB, $B=1$, and $L=2$.}
	\label{fig::conver_perfect}
\end{figure}
\begin{table}[!htbp]
\color{black}
\caption{Runtime comparison for perfect CSI.}
\centering
\vspace{3mm}
\begin{tabular}{|l|llll|}
\hline
\multirow{2}{*}{Employed method} & \multicolumn{4}{l|}{Runtime (seconds)}                              \\ \cline{2-5} 
                                 & \multicolumn{1}{l|}{$N=4$} & \multicolumn{1}{l|}{$N=8$} & \multicolumn{1}{l|}{$N=16$} & \multicolumn{1}{l|}{$N=32$} \\ \hline
GBD-based method                 & \multicolumn{1}{l|}{1.36}    & \multicolumn{1}{l|}{2.68} & \multicolumn{1}{l|}{6.64} & \multicolumn{1}{l|}{24.75} \\ \hline
SCA-based method                 & \multicolumn{1}{l|}{0.11}    & \multicolumn{1}{l|}{0.25} & \multicolumn{1}{l|}{0.56} & \multicolumn{1}{l|}{1.82} \\ \hline
Baseline Scheme 3                 & \multicolumn{1}{l|}{0.20}    & \multicolumn{1}{l|}{0.42} & \multicolumn{1}{l|}{0.73} & \multicolumn{1}{l|}{1.20} \\ \hline
Baseline Scheme 4                 & \multicolumn{1}{l|}{0.32}    & \multicolumn{1}{l|}{0.44} & \multicolumn{1}{l|}{0.94} & \multicolumn{1}{l|}{2.27} \\ \hline
\end{tabular}
\color{black}
\end{table}

{\color{black} Next, the runtimes of the proposed algorithms, and Baseline Schemes 3 and 4 are investigated for different numbers of IRS elements $N$ for the case of perfect CSI. In particular, the convergence tolerances of the proposed GBD-based and SCA-based algorithms are set to $\Delta=10^{-3}$ and $\Delta_{\mathrm{SCA}}=10^{-3}$, respectively. Moreover, the convergence criterion of Baseline Schemes 3 and 4 are set as $\frac{\|\bm{\Phi}_{\mathrm{AO}}^{(i)}-\bm{\Phi}_{\mathrm{AO}}^{(i-1)}\|}{\|\bm{\Phi}_{\mathrm{AO}}^{(i-1)}\|_F}\leq 10^{-3}$ and $\frac{\|\bm{\Phi}_{\mathrm{IA}}^{(i)}-\bm{\Phi}_{\mathrm{IA}}^{(i-1)}\|}{\|\bm{\Phi}_{\mathrm{IA}}^{(i-1)}\|_F}\leq 10^{-3}$, respectively. The algorithms are implemented in MATLAB R2018a and tested on a PC with a Core i9-13900K CPU. MOSEK was adopted as CVX solver for time efficiency. As can be observed, the proposed SCA-based algorithm requires significantly less runtime compared with the GBD-based algorithm, confirming the computational time efficiency of the proposed SCA-based algorithm. Moreover, the proposed SCA-based method also has a lower runtime than the Baseline Schemes 3 and 4 for small-scale IRSs, e.g., $N\leq 16$, since it needs only a few iterations to converge. On the other hand, for moderate-to-large scale IRSs, due to the significant dimension increase caused by the introduced auxiliary variables, the proposed SCA-based algorithm requires a longer runtime to converge compared to the AO-based Baseline Scheme 3.\footnote{ \color{black}Note that the actual runtime of the proposed algorithm can be significantly reduced by employing customized convex program solvers, and processing hardware\cite{grant2008cvx,mosek}. Yet, these are beyond the scope of this paper.} }
\subsubsection{Average transmit power versus minimum required SINR}
Fig. \ref{fig::SINR_perfect} depicts the average BS transmit power for different minimum required SINRs, when perfect CSI is available at the BS. It is observed that as the minimum required SINR increases, the BS has to consume more power to satisfy the more stringent QoS requirements of the users. Furthermore, the proposed schemes outperform Baseline Schemes 1 and 2 for all considered values of $\gamma$. In particular, since no IRS is deployed for Baseline Scheme 1, while for Baseline Scheme 2, the phase shift pattern is randomly generated, Baseline Schemes 1 and 2 cannot effectively shape the channels for the users. As for Baseline Scheme 3, the adopted AO algorithm optimizes the beamforming vectors and the phase shift matrix, leading to a $5$ dB gain compared to Baseline Scheme 2. However, the AO-based algorithm, which searches only a subset of the feasible region close to the initial point, is prone to getting trapped in stationary points, resulting in a $4.5$ dB performance gap compared to the proposed GBD-based optimal solution in the considered setting. On the other hand, since the IA-based algorithm suffers from the quantization loss after convergence, the GBD-based scheme achieves a considerable gain ($3$ dB) compared to the IA-based scheme. Furthermore, the proposed SCA-based algorithm outperforms the IA-based and AO-based schemes and achieves a performance close to the optimal GBD-based scheme. Moreover, as can be seen from Fig.  \ref{fig::SINR_perfect}, employing slightly more expensive 2-bit phase shifters, i.e., with $L=2^2=4$ discrete phase levels, leads to a $2$ dB gain compared to the case with 1-bit phase shifters, indicating a trade-off between the required transmit power and the phase shifter cost in IRS-assisted communication systems.
\begin{figure}[t]\vspace*{-5mm}
	\centering
	\includegraphics[width=2.8in]{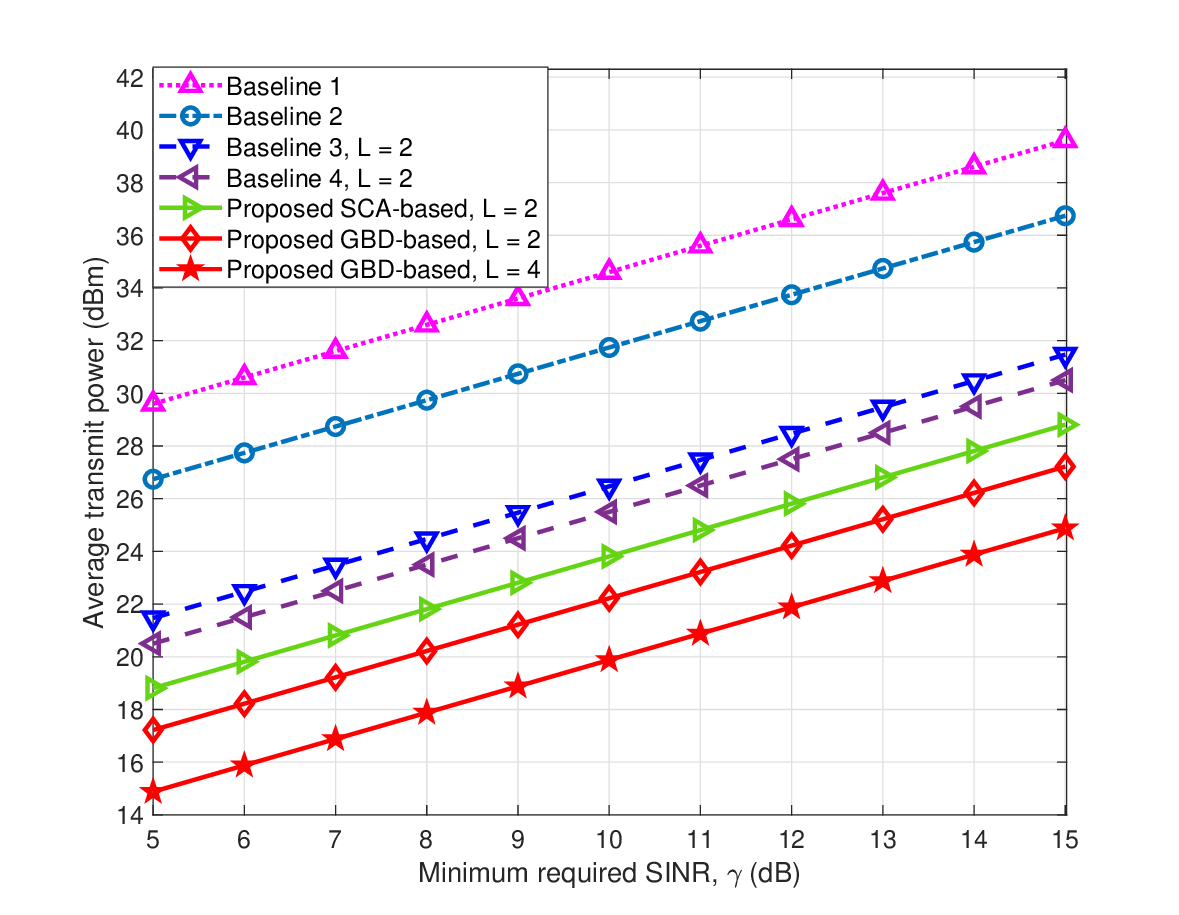}
	\caption{Average BS transmit power versus the minimum required SINR with perfect CSI. The number of IRS elements is set as $N=16$.}
	\label{fig::SINR_perfect}
\end{figure}

\subsubsection{Average transmit power versus number of IRS elements}

\begin{figure}[t]\vspace*{-5mm}
	\centering
	\includegraphics[width=3.2in]{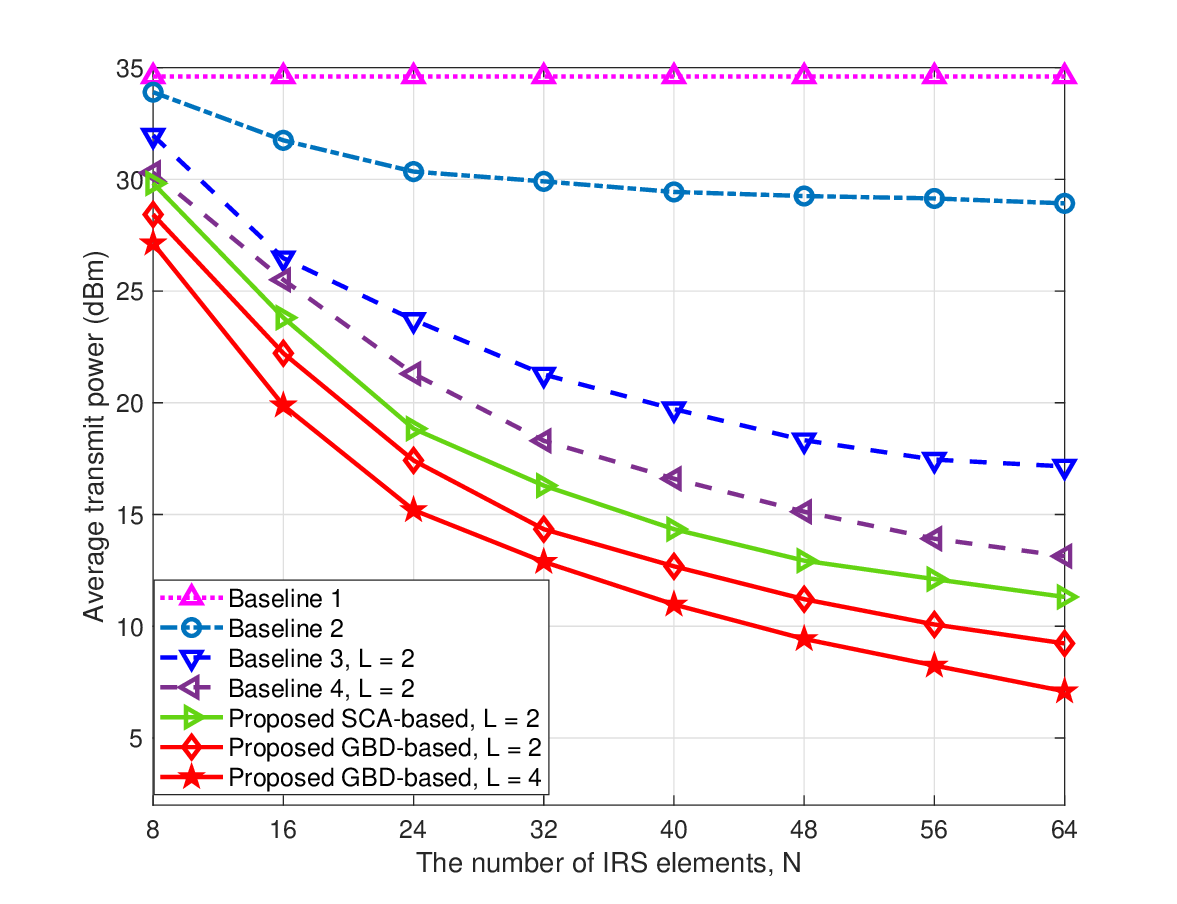}
	\caption{Average BS transmit power versus number of IRS elements with perfect CSI. The minimum required SINR is set to $\gamma=10$ dB.}
	\label{fig::N_perfect}
\end{figure}
Fig. \ref{fig::N_perfect} shows the BS transmit power versus the number of IRS elements for $\gamma=10$ dB. As can be observed, the transmit powers of the proposed schemes and Baseline Schemes 2, 3, and 4 decrease as the number of IRS elements increases. This is attributed to the fact that a larger number of IRS elements can reflect more power towards the users, leading to a power gain. Moreover, the additional phase shifters provide extra degrees of freedom (DoFs) for resource allocation design, which can be exploited to tailor a more favorable wireless channel for information transmission. Furthermore, we observe that the gap between the proposed schemes and Baseline Scheme 2 is enlarged as the number of IRS elements increases. In particular, random phase shifts cannot fully exploit the additional DoFs introduced by deploying more IRS elements, leading to limited passive IRS beamforming gains. On the other hand, the performance gap between the Baseline Scheme 3 and the proposed GBD-based scheme is enlarged from $N=8$ to $N=40$. In fact, the rapid expansion of the feasible solution set as $N$ increases makes Baseline Scheme 3 more susceptible to getting trapped in an unsatisfactory stationary point. For instance, for $N=64$, an additional $8$ dB transmit power is required for Baseline Scheme 3 to approach the performance of the proposed GBD-based scheme. Moreover, a $5$ dB gap between the proposed GBD-based and Baseline Scheme 4 is observed for $N=64$ due to the phase shift quantization loss. 
{\color{black}On the other hand, the average transmit power required for the proposed SCA-based algorithm is $5$ dB lower than that required with Baseline Scheme 3, which clearly shows the effectiveness of the proposed SCA-based algorithm, although only 10 iterations are employed.}
Moreover, increasing the resolution of the discrete phase shifts, $L$, leads to a $2.5$ dB gain at the expense of higher hardware costs and additional control bits $B$. Given a fixed total number of $64$ IRS control bits, i.e., $N\times B=64$, $N=64$ IRS elements with 1-bit phase shifters outperform $N=32$ IRS elements with 2-bit phase shifters by $3$ dB for the considered system. This observation highlights the appeal of deploying a large number of low-resolution IRS elements.
\vspace*{-5mm}\subsection{Numerical Results for Imperfect CSI}

\begin{figure}[t]\vspace*{-5mm}
	\centering
	\includegraphics[width=3.2in]{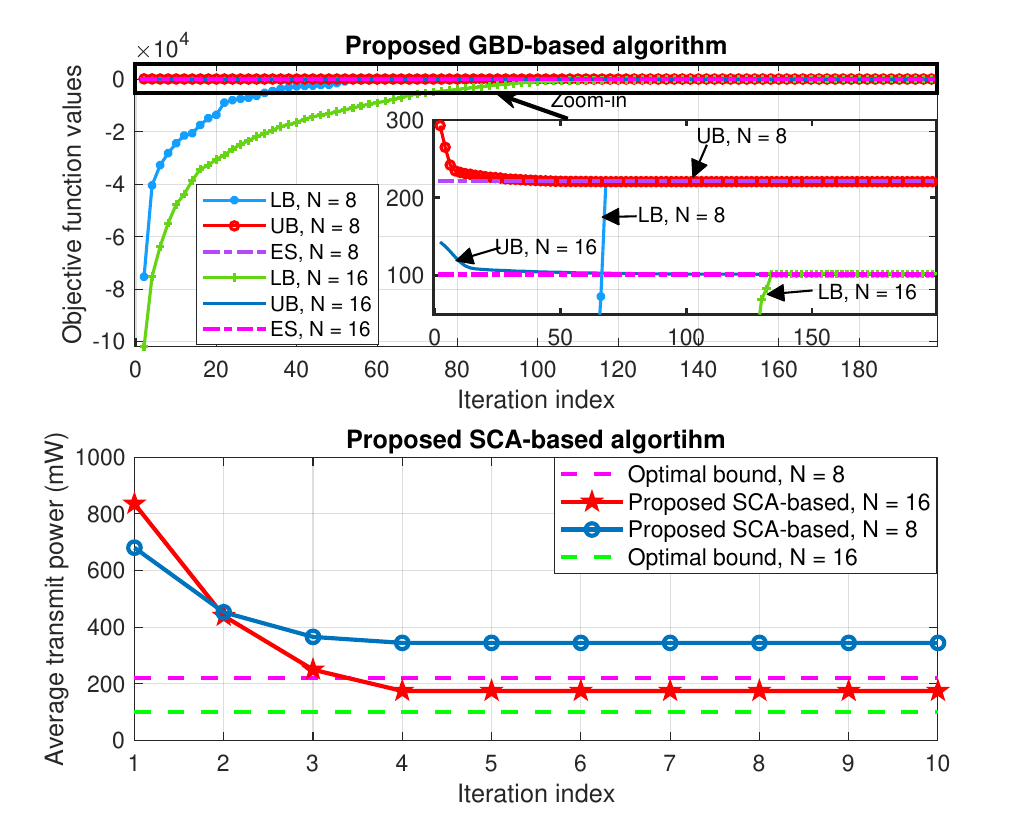}
	\vspace*{-3mm}\caption{Convergence behavior of the proposed algorithms for different values of $N$ and imperfect CSI. The system parameters are set as $\gamma=5$ dB, $B=1$, $L=2$, and $\kappa=0.1$.}
	\label{fig::conver_imperfect}
\end{figure}
\subsubsection{Convergence of the proposed algorithms}
{\color{black}We investigate the convergence of the proposed GBD-based and SCA-based algorithms for imperfect CSI in Fig. \ref{fig::conver_imperfect}. Similar to Fig. \ref{fig::conver_perfect}, we use the ES method that solves the optimization problem in \eqref{Master_Problem_robust} considering all $L^N$ possible IRS configurations as a benchmark. As can be observed, the proposed GBD-based algorithm converges to the global optimal solution provided by ES but requires a higher transmit power compared to the case of perfect CSI, see Fig. \ref{fig::conver_perfect}.}
In fact, the higher transmit power is required to combat the negative effects of CSI uncertainty. Moreover, the GBD method converges within 60 and 130 iterations on average for $N=8$ and $N=16$, respectively, which is much faster than the brute force ES method.
{\color{black} In addition, comparing with Fig. \ref{fig::conver_perfect}, we observe that for the proposed GBD-based methods, the performance difference between perfect CSI and imperfect CSI is larger for $N=16$ IRS elements than for $N=8$ IRS elements. In particular, increasing the number of IRS elements increases the dimension of the cascade channel matrix ${{\mathbf{E}}}_k$, leading to a larger norm-bounded region for the CSI estimation error. On the other hand, increasing the CSI error region also enlarges the feasible region of the CSI uncertainties. Therefore, more transmit power is required to ensure the robustness of the system with $N=16$. }
Furthermore, the proposed SCA-based algorithm achieves a locally optimal performance in less than $10$ iterations also for imperfect CSI, revealing the computational efficiency of the proposed algorithm for robust resource allocation design.
\begin{table}[!htbp]\color{black}
\caption{Runtime comparison for imperfect CSI.}
\centering
\vspace{3mm}
\begin{tabular}{|l|llll|}
\hline
\multirow{2}{*}{Employed method} & \multicolumn{4}{l|}{Runtime (seconds)}                              \\ \cline{2-5} 
                                 & \multicolumn{1}{l|}{$N=4$} & \multicolumn{1}{l|}{$N=8$} & \multicolumn{1}{l|}{$N=12$} & \multicolumn{1}{l|}{$N=16$} \\ \hline
GBD-based method                 & \multicolumn{1}{l|}{4.12}    & \multicolumn{1}{l|}{9.77} & \multicolumn{1}{l|}{21.35} & \multicolumn{1}{l|}{42.20} \\ \hline
SCA-based method                 & \multicolumn{1}{l|}{0.70}    & \multicolumn{1}{l|}{1.13} & \multicolumn{1}{l|}{2.89} & \multicolumn{1}{l|}{6.40} \\ \hline
Baseline Scheme 3                 & \multicolumn{1}{l|}{0.84}    & \multicolumn{1}{l|}{1.55} & \multicolumn{1}{l|}{2.51} & \multicolumn{1}{l|}{5.08} \\ \hline
\end{tabular}\color{black}
\end{table}

{\color{black}Table \rom{2} illustrates the runtime of the proposed algorithms and Baseline Scheme 3 w.r.t. the number of IRS elements $N$. In particular, we adopt the same convergence criteria for the proposed algorithms and Baseline Scheme 3 as for the case of perfect CSI in Table \rom{1}. Compared with the results in Table \rom{1}, both the SCA-based and GBD-based algorithms for imperfect CSI require longer runtimes to converge. In fact, the proposed robust designs require additional LMI constraints and auxiliary variables for the reformulation of the robust QoS constraint, leading to a relatively higher computational complexity compared to the algorithm for perfect CSI. Nevertheless, the proposed SCA-based algorithm requires a shorter runtime than the AO-based Baseline Scheme 3 for IRSs of small size, e.g., $N\leq 8$, since the SCA-based method requires fewer iterations to converge. However, as the IRS size grows, so do the dimensions of the auxiliary variables required for the SCA-based method. Thus, for larger IRSs, e.g., $N=16$, the SCA-based method demands a longer runtime compared to Baseline Scheme 3.}
\subsubsection{Average transmit power versus minimum required SINR}
Fig. \ref{fig::SINR_imperfect} illustrates the BS transmit power versus the minimum required SINR for imperfect CSI. Due to the negative impact of the CSI uncertainty, the baseline schemes cannot find a feasible solution for a large minimum required SINR $\gamma$, i.e., $\gamma\geq 5$ dB. Therefore, we investigate the BS transmit power for $\gamma$ from $0$ to $5$ dB. The proposed algorithms reduce the BS transmit power significantly compared to the three baseline schemes. Specifically, we observe that Baseline Scheme 1 without IRS outperforms Baseline Scheme 2 with randomly tuned IRS elements since the reflected links in Baseline Scheme 2 introduce additional CSI uncertainty. 
This underscores the importance of jointly optimizing the beamforming vectors and the IRS phase shift matrix in case of imperfect CSI.
{\color{black}On the other hand, the proposed SCA-based scheme achieves a considerable gain ($2$ dB) compared to the AO-based Baseline Scheme 3 in the considered setting, which confirms the superiority of the SCA-based algorithm also for imperfect CSI. Note that the results obtained by the AO-based method have to be quantized to feasible discrete IRS matrices after convergence. The coarse quantization from continuous phase shift values to discrete ones leads to a mismatch for beamforming and interference cancellation \cite{wu2019beamforming}. Therefore, a significant performance degradation is expected after quantization.}
Moreover, the performance gap between the SCA-based algorithm and the GBD-based algorithm is enlarged in Fig. \ref{fig::SINR_imperfect} compared to the case of perfect CSI considered in Fig. \ref{fig::SINR_perfect}. In particular, the SCA-based solution exhibits a higher sensitivity to CSI errors than the optimal GBD solution, as the latter does not get trapped in ineffective local optima, even if the size of the feasible solution set shrinks due to a low CSI quality.
\vspace*{-1mm}\subsubsection{Average transmit power versus number of IRS elements}
\begin{figure}[t]\vspace*{-3mm}
	\centering
	\includegraphics[width=3.2in]{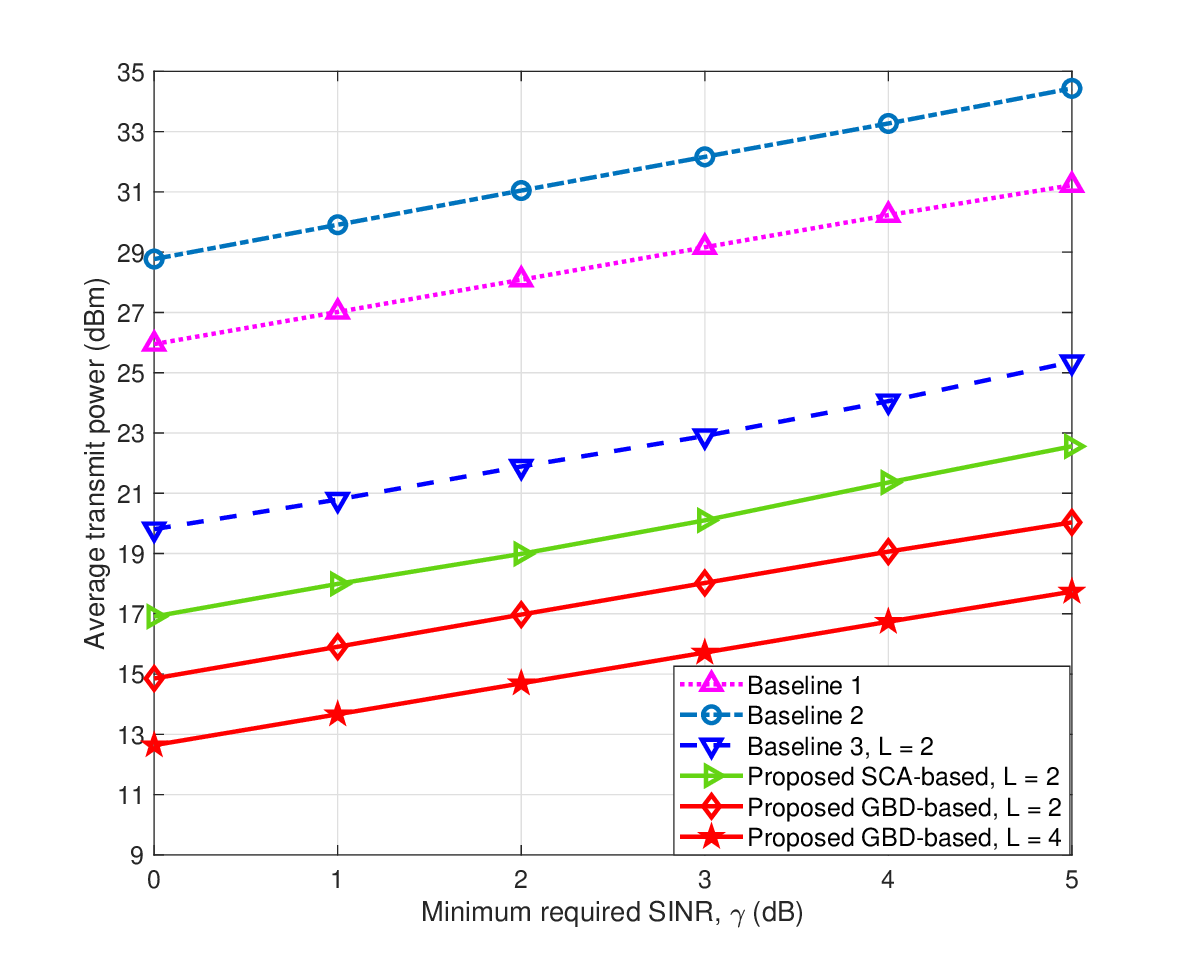}
	\caption{Average BS transmit power versus minimum required SINR $\gamma$ for imperfect CSI. The system parameters are set as $N=16$, and $\kappa=0.1$.}
	\label{fig::SINR_imperfect}
\end{figure}
We investigate the BS transmit power as a function of the number of IRS elements in Figs. \ref{fig::N_imperfect1} and \ref{fig::N_imperfect2}. Due to the high complexity of the GBD-based algorithm, we investigate its performance only in Fig. \ref{fig::N_imperfect1} for a relatively small system size. 
{\color{black} Here, we also show the optimal performance of the IRS-assisted systems with perfect CSI as a performance upper bound to investigate the impact of CSI estimation errors. As can be observed, for Baseline Scheme 2, the BS consumes more power to combat the additional CSI estimation errors introduced by the IRS-assisted channel. On the other hand, similar to Fig. 7, the transmit power for the proposed schemes and Baseline Scheme 3 decreases as $N$ grows, which is attributed to the additional DoFs introduced by the IRS elements. Furthermore, a significant performance gap, i.e., $6$ dB, between the GBD-based method and Baseline Scheme 3 is observed for $N=20$, which highlights the significance of optimally exploiting all DoFs provided by the IRS for the considered imperfect CSI scenario. Moreover, the gap between the optimal performance for perfect CSI and imperfect CSI is enlarged as the number of IRS elements increases. In particular, increasing the IRS size also increases the size of the estimated cascaded channel matrix, leading to an enlarged feasible region for the CSI uncertainty. Thus, the BS has to invest more transmit power to compensate for the additional CSI estimation error as $N$ increases.}
On the other hand, according to Figs. \ref{fig::N_imperfect1} and \ref{fig::N_imperfect2}, the performance gap between the proposed SCA-based algorithm and Baseline Scheme 3 is also enlarged as $N$ increases. In particular, the additional CSI estimation error introduced by a larger number of IRS elements reduces the size of the feasible solution set. Thus, the AO-based Baseline Scheme 3 tends to search over a region close to the given initial point for large $N$, resulting in an unsatisfactory performance.
\begin{figure*}[htbp]\vspace*{-5mm}
	\centering
 \subfloat[$N\in(4,20).$]{
         \centering
         \includegraphics[width=3.2in]{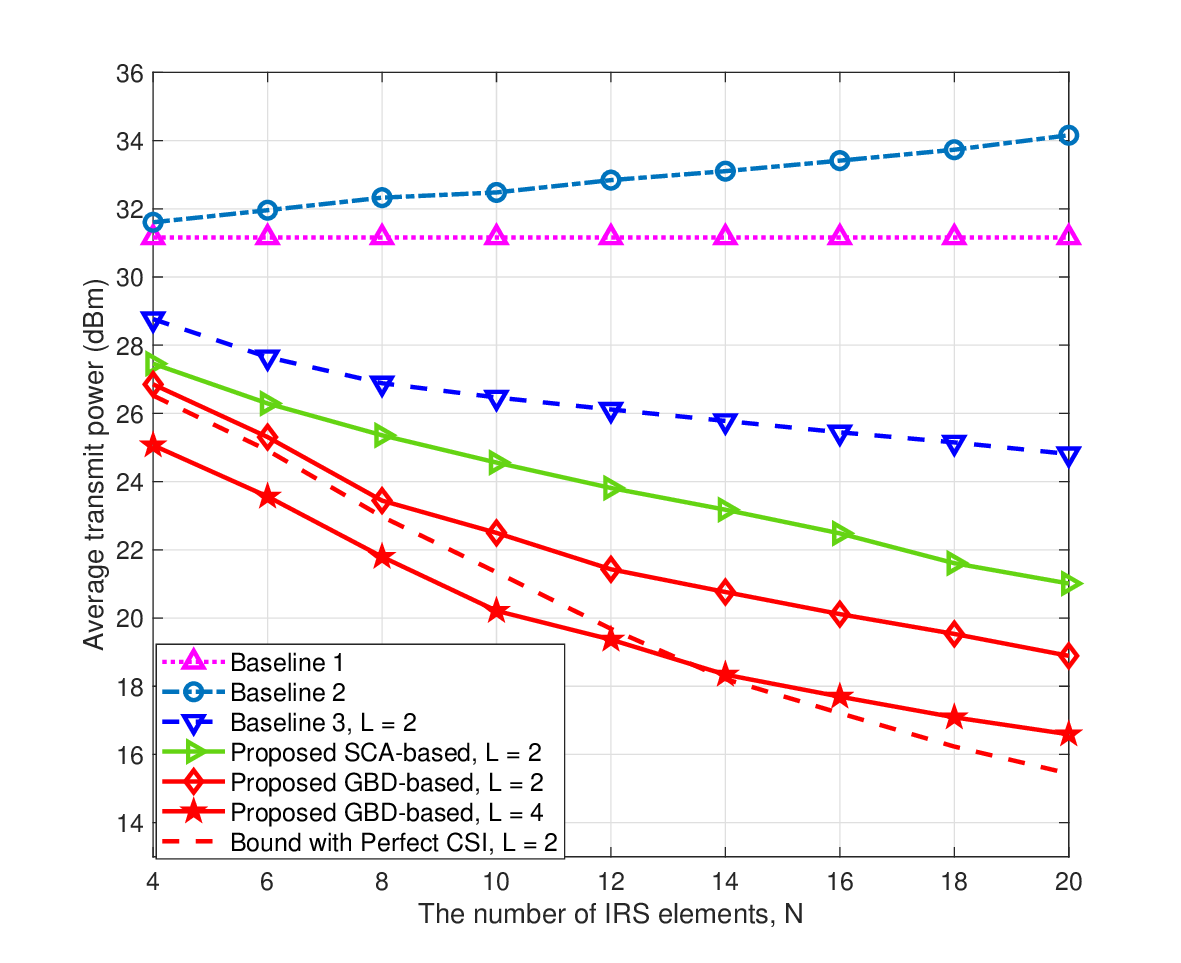}     \label{fig::N_imperfect1}
}
 \subfloat[$N\in(8,64).$]{
         \centering
         \includegraphics[width=3.3in]{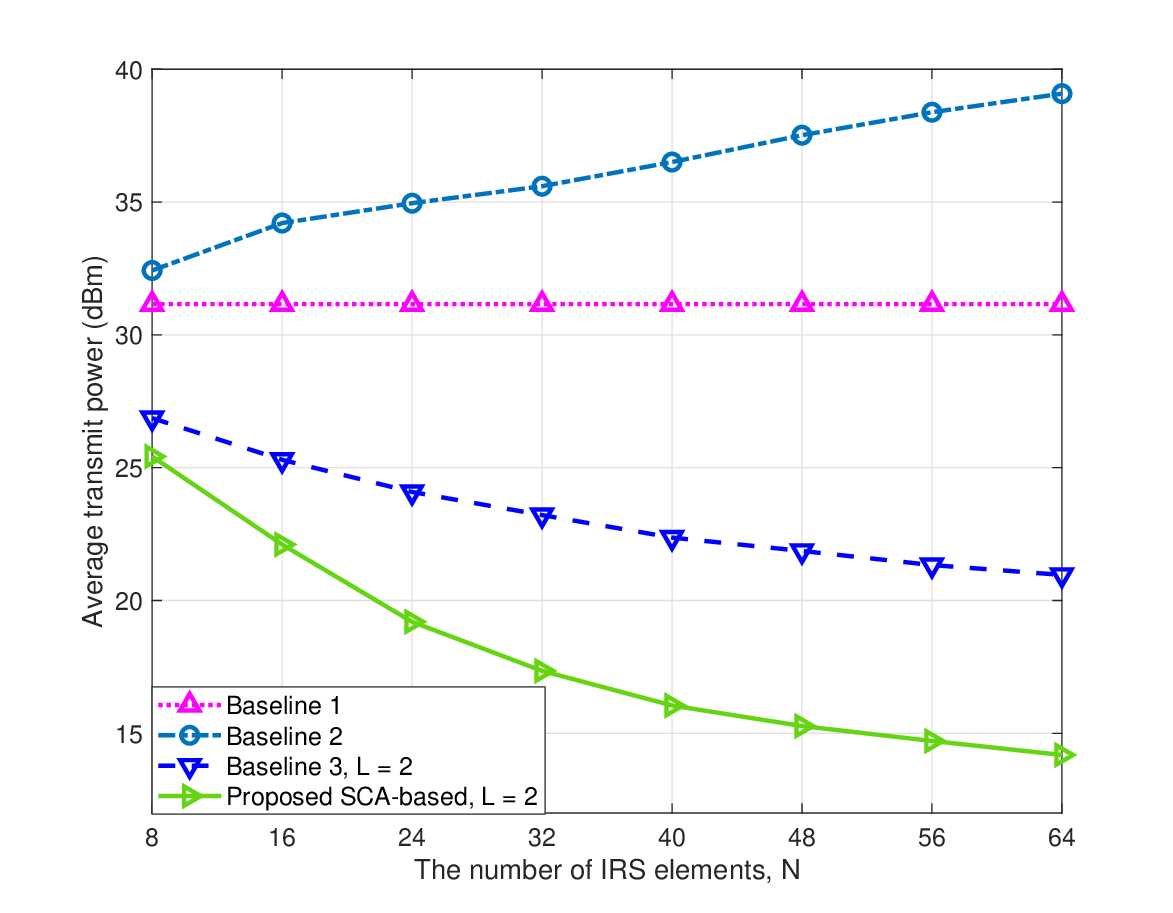}
         \label{fig::N_imperfect2}
}
         
	\vspace*{3mm}\caption{Average BS transmit power versus number of IRS elements with imperfect CSI. The system parameters as set as $\gamma=5$ dB, and $\kappa=0.1$.}
    \vspace{-1.5em}

\end{figure*}

\vspace*{-1mm}\subsubsection{CSI uncertainty}
Fig. \ref{fig::kappa} shows the average transmit power versus the maximum normalized channel estimation error $\kappa$. It is observed that the average transmit power increases as the CSI estimation quality decreases. In particular, as the quality of the estimated CSI deteriorates, it becomes more challenging for the BS to perform accurate beamforming and to satisfy the QoS requirements, such that a higher transmit power is required to ensure the robustness of the communication system. In addition, both the proposed SCA-based and GBD-based algorithms outperform the baseline schemes, which reveals the capability of the proposed schemes to effectively exploit the spatial DoFs even in the presence of severe CSI uncertainty. Furthermore, for large CSI estimation errors, i.e., $\kappa=0.2$, Baseline Scheme 2 requires more transmit power compared to Baseline Scheme 1. Specifically, Baseline Scheme 2 suffers from the additional CSI uncertainty of the reflected link, which is not present for Baseline Scheme 1. 
Moreover, the performance gain achieved by the proposed schemes over the AO-based scheme increases with $\kappa$. This confirms that the proposed schemes are more efficient in minimizing the transmit power in the presence of high CSI uncertainty.
\begin{figure}[t]\vspace*{-3mm}
	\centering
	\includegraphics[width=3.2in]{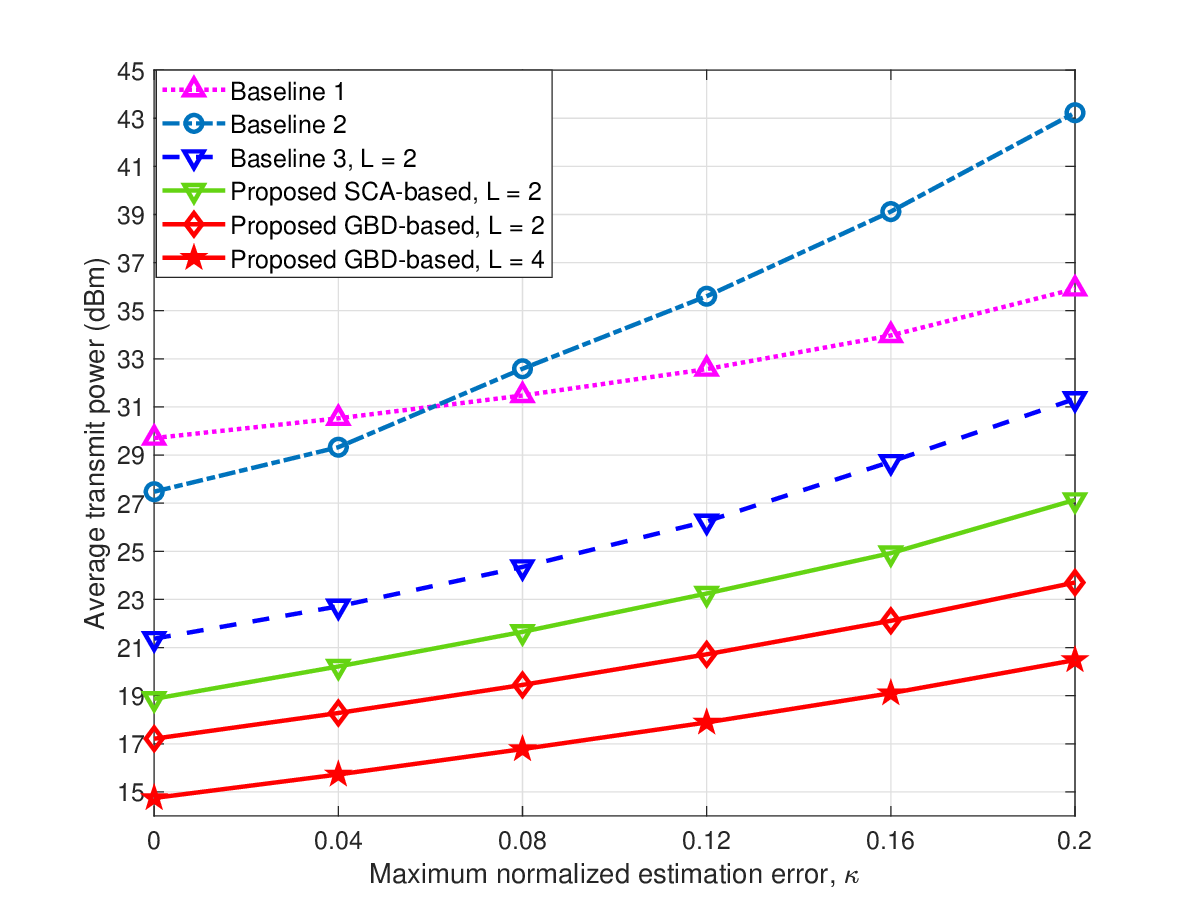}
	\caption{Average BS transmit power versus maximum normalized channel estimation error $\kappa$.}
	\label{fig::kappa}
\end{figure}
\vspace*{-4mm}\section{Conclusion}
In this paper, we investigated the optimal resource allocation design for IRS-assisted multiuser MISO wireless communication for the cases of perfect and imperfect CSI, respectively. Our goal was to minimize the total transmit power while ensuring the QoS requirements of the users. {\color{black}Since the intricate coupling between the discrete IRS phase shift matrix and the BS beamforming vectors presents an obstacle to efficient resource allocation algorithm design, we leveraged and recast a series of transformations introduced in \cite{6698281,zhang2008joint} and developed a novel optimization framework that converted the considered problems into more tractable MINLP problems for perfect and imperfect CSI, respectively. The proposed optimization framework facilitated optimal and suboptimal designs for perfect and imperfect CSI. In particular, to provide a performance upper bound for practical IRS-assisted communication systems, two globally optimal algorithms based on the GBD method were proposed for jointly designing the beamforming matrix at the BS and the discrete phase shifts at the IRS for perfect and imperfect CSI, respectively. To overcome the high complexity of the GBD-based schemes, exploiting the proposed optimization framework, we also proposed two corresponding SCA-based algorithms, which attained locally optimal solutions in polynomial time.}
Our simulation results verified the global optimality of the proposed GBD-based algorithms and demonstrated the effectiveness of all proposed algorithms. Moreover, we highlighted the robustness of the proposed GBD-based and SCA-based algorithms for imperfect CSI, ensuring reliable performance even under uncertain channel conditions. We note that the proposed GBD-based methods can serve as performance upper bounds when studying any suboptimal resource allocation algorithm for IRS-assisted wireless systems with discrete phase shifters. Furthermore, the proposed SCA-based algorithms outperform the state-of-the-art AO-based and IA-based methods and achieve an excellent performance for both perfect and imperfect CSI, particularly for IRSs of moderate-to-large sizes. 

\appendix[Proof of Theorem 1]\vspace*{-3mm}
    To start with, we note that proving Theorem 1 is equivalent to proving that for each given feasible $\bar{\mathbf{B}}=\bar{\mathbf{B}}_f$, we can obtain an optimal beamforming matrix $\mathbf{W}_k^{\mathrm{opt}}$ with $\mathrm{rank}(\mathbf{W}_k^{\mathrm{opt}})\leq 1$ by solving the following optimization problem
    \begin{eqnarray}
\label{Master_Problem_proof}
    &&\hspace*{-4mm}\underset{\substack{\widetilde{\mathbf{X}},\hat{\mathbf{S}},\hat{\mathbf{T}},\hat{\mathbf{W}}\\\mathbf{U},\mathbf{V},\mathbf{Y},\mathbf{q}}}{\mino}\hspace*{2mm}\sum_{k\in\mathcal{K}}\mathrm{Tr}(\mathbf{W}_k)\notag\\
    &&\hspace*{2mm}\mbox{s.t.}\hspace*{7mm} \widehat{\overline{\mbox{C1}}},\widehat{\overline{\mbox{C3a}}},\widehat{\overline{\mbox{C3b}}}, \widehat{\overline{\mbox{C3c}}}, \widehat{\overline{\mbox{C3d}}},\mbox{C8},
\end{eqnarray}
where binary matrix $\bar{\mathbf{B}}$ in $\widehat{\overline{\mbox{C3a}}},\widehat{\overline{\mbox{C3b}}}, \widehat{\overline{\mbox{C3c}}}, \widehat{\overline{\mbox{C3d}}}$ is set to $\bar{\mathbf{B}}_f$. 
The problem in \eqref{Master_Problem_proof} is convex w.r.t. \\$\hat{\mathbf{X}},\Bar{\mathbf{B}},\hat{\mathbf{S}},\hat{\mathbf{T}},\mathbf{U},\mathbf{V}$,$\mathbf{Y}$, $\mathbf{q}$ and Slater's condition is satisfied \footnote{Here, if the optimization problem in (65) is feasible, there exists a solution satisfying the strict feasibility of constraints $\widehat{\overline{\mbox{C1}}}$ and $\mbox{C8}$. Moreover, constraints $\widehat{\overline{\mbox{C3b}}}$ and $\widehat{\overline{\mbox{C3d}}}$ are linear inequality constraints and constraints $\widehat{\overline{\mbox{C3a}}}$ and $\widehat{\overline{\mbox{C3c}}}$ can be made strictly feasible by tuning the auxiliary matrices $\hat{\mathbf{S}},\hat{\mathbf{T}},\mathbf{U},$ and $\mathbf{V}$.}\cite{boyd2004convex,yu2021robust}. Thus, strong duality holds and the corresponding Lagrangian function is given by
    \begin{equation}
        \begin{aligned}
            \Bar{\mathcal{L}}=&\sum_{k\in\mathcal{K}}\left[\mathrm{Tr}(\mathbf{W}_k)-\mathrm{Tr}(\mathbf{Z}_k\mathbf{W}_k)+\mathrm{Tr}(\mathbf{A}_k\mathbf{G}_k^H\widetilde{\mathbf{X}}_k\mathbf{G}_k)-\mathrm{Tr}(\hat{\mathbf{Q}}_{k,22}\hat{\mathbf{T}}_k)-\mathrm{Tr}(\hat{\bm{\Xi}}_{k,22}\mathbf{V}_k)\right]\\
            &-2\mathrm{Re}\left\{\sum_{k\in\mathcal{K}}\left[\mathrm{Tr}(\hat{\bm{\Xi}}_{k,32}\mathbf{W}_k)+\mathrm{Tr}((\hat{\mathbf{Q}}_{k,32}+\hat{\bm{\Xi}}_{k,21}){\mathbf{Y}}_k)+\mathrm{Tr}(\hat{\mathbf{Q}}_{k,21}\hat{\mathbf{X}}_k)\right]\right\}+\nu,
        \end{aligned}
    \end{equation}
    where $\mathbf{Z}_k$ and $\mathbf{A}_k$ are the Lagrangian multiplier matrices associated with constraint $\mathbf{W}_k\succeq \mathbf{0}$ and $\widehat{\overline{\mbox{C1}}}$, respectively. Here, $\nu$ represents the collection of terms that do not involve optimization variables $\mathbf{W}_k$, $\mathbf{Y}_k$, $\hat{\mathbf{X}}_k$, $\hat{\mathbf{T}}_k$, and $\mathbf{V}_k$. 
    Moreover, since the problem in \eqref{Master_Problem_proof} corresponds to the primal problem in \eqref{Master_Problem_robust}, we use the same notations for $\hat{\mathbf{Q}}$ and $\hat{\bm{\Xi}}$ and their respective submatrices as in \eqref{Qi_Ei}. For ease of representation, we denote the optimal variables for \eqref{Ref_Problem_robust} with superscript "opt". The Karush–Kuhn–Tucker (KKT) conditions for \eqref{Ref_Problem_robust} without rank-one constraint C7 and without binary constraint C3b are given by
    \begin{eqnarray}
        &&\mathrm{K1:} \hspace*{2mm}\mathbf{Z}_k^{\mathrm{opt}}\succeq \mathbf{0},\mathbf{A}_k^{\mathrm{opt}}\succeq \mathbf{0},\notag\\
     &&\mathrm{K2.1:}\hspace*{2mm}\mathbf{Z}_k^{\mathrm{opt}}\mathbf{W}_k^{\mathrm{opt}}=\mathbf{0},\notag\\
        &&\mathrm{K2.2:}\hspace*{2mm}{\mathbf{A}}_{k}^{\mathrm{opt}}\left(\mathbf{P}_k-\mathbf{G}_k^H\widetilde{\mathbf{X}_k}\mathbf{G}_k\right)=\mathbf{0},\notag\\
       &&\mathrm{K2.3:}\hspace*{2mm}\hat{\mathbf{Q}}_{k,21}^{\mathrm{opt}}\bar{\mathbf{B}}_f\bm{\Theta}+\hat{\mathbf{Q}}_{k,22}^{\mathrm{opt}}{\mathbf{Y}}_k^{\mathrm{opt}}+(\hat{\mathbf{Q}}_{k,32}^{\mathrm{opt}})^H=\mathbf{0},\notag\\
       &&\mathrm{K2.4:}\hspace*{2mm}{\bm{\Xi}}_{k,21}^{\mathrm{opt}}\bar{\mathbf{B}}_f\bm{\Theta}+{\bm{\Xi}}_{k,22}^{\mathrm{opt}}{\mathbf{W}}_k^{\mathrm{opt}}+({\bm{\Xi}}_{k,32}^{\mathrm{opt}})^H=\mathbf{0},\notag\\           
        &&\mathrm{K3.1:}\hspace*{2mm}\triangledown_{\mathbf{W}_k} \Bar{\mathcal{L}}=\mathbf{I}_M-\mathbf{Z}_k^{\mathrm{opt}}-(\hat{\bm{\Xi}}_{k,32}^{\mathrm{opt}}+(\hat{\bm{\Xi}}_{k,32}^{\mathrm{opt}})^H)=\mathbf{0},\\
        &&\mathrm{K3.2:}\hspace*{2mm}\triangledown_{\mathbf{Y}_k} \Bar{\mathcal{L}}=\hat{\mathbf{Q}}_{k,32}^{\mathrm{opt}}+\hat{\bm{\Xi}}_{k,21}^{\mathrm{opt}}=\mathbf{0},\notag\\
        &&\mathrm{K3.3:}\hspace*{2mm}\triangledown_{\hat{\mathbf{T}}_k} \Bar{\mathcal{L}}=\hat{\mathbf{Q}}_{k,22}^{\mathrm{opt}}=\mathbf{0},\triangledown_{\mathbf{V}_k} \Bar{\mathcal{L}}=\hat{\bm{\Xi}}_{k,22}^{\mathrm{opt}}=\mathbf{0},\notag\\
        &&\mathrm{K3.4:}\hspace*{2mm}\triangledown_{\hat{\mathbf{X}}_k} \Bar{\mathcal{L}}=\sum_{k'\in\mathcal{K}\setminus k}\gamma_{k'}\mathbf{G}_{k'}\mathbf{A}_{k'}^{\mathrm{opt}}\mathbf{G}_{k'}-\mathbf{G}_k\mathbf{A}_k^{\mathrm{opt}}\mathbf{G}_k^H-(\mathbf{Q}_{k,21}^{\mathrm{opt}}+(\mathbf{Q}_{k,21}^{\mathrm{opt}})^H)=\mathbf{0},\quad\forall k\in\mathcal{K}.\notag
    \end{eqnarray}
    From K3.3, we obtain $\hat{\mathbf{Q}}_{k,22}^{\mathrm{opt}}=\mathbf{0}$ and $\hat{\bm{\Xi}}_{k,22}^{\mathrm{opt}}=\mathbf{0}$. By substituting $\hat{\mathbf{Q}}_{k,22}^{\mathrm{opt}}=\mathbf{0}$ and $\hat{\bm{\Xi}}_{k,22}^{\mathrm{opt}}=\mathbf{0}$ into K2.3 and K2.4, respectively, we can rewrite K2.3 and K2.4 as follows
    \begin{eqnarray}\label{Rank1_proof1}
        &&\mathrm{K2.3:} \hspace*{2mm}(\hat{\mathbf{Q}}_{k,32}^{\mathrm{opt}})^H=-\hat{\mathbf{Q}}_{k,21}^{\mathrm{opt}}\bar{\mathbf{B}}_f\bm{\Theta},\notag\\
       &&\mathrm{K2.4:}\hspace*{2mm}({\bm{\Xi}}_{k,32}^{\mathrm{opt}})^H=-{\bm{\Xi}}_{k,21}^{\mathrm{opt}}\bar{\mathbf{B}}_f\bm{\Theta}.
    \end{eqnarray}
    Then, we substitute \eqref{Rank1_proof1} into $\hat{\mathbf{Q}}_{k,32}^{\mathrm{opt}}+\hat{\bm{\Xi}}_{k,21}^{\mathrm{opt}}=\mathbf{0}$ and obtain
    \begin{equation}
    \hat{\bm{\Xi}}_{k,32}^{\mathrm{opt}}+\bar{\mathbf{B}}_f\bm{\Theta}\hat{\mathbf{Q}}_{k,21}^{\mathrm{opt}}(\bar{\mathbf{B}}_f\bm{\Theta})^H=\mathbf{0}.
    \end{equation}
    Using the above result, we can rewrite $\triangledown_{\hat{\mathbf{X}}_k} \Bar{\mathcal{L}}=\mathbf{0}$ as follows
    \begin{equation}
       \hat{\bm{\Xi}}_{k,32}^{\mathrm{opt}}+(\hat{\bm{\Xi}}_{k,32}^{\mathrm{opt}})^H=2\bar{\mathbf{B}}_f\bm{\Theta}\left(\mathbf{G}_k\mathbf{A}_k^{\mathrm{opt}}\mathbf{G}_k^H-\sum_{k'\in\mathcal{K}\setminus k}\gamma_{k'}\mathbf{G}_{k'}\mathbf{A}_{k'}^{\mathrm{opt}}\mathbf{G}_{k'}\right)(\bar{\mathbf{B}}_f\bm{\Theta})^H.
    \end{equation}
    Hence, $\triangledown_{\mathbf{W}_k} \Bar{\mathcal{L}}=\mathbf{0}$ can be recast as follows
    \begin{equation}\label{Zk_opt}
        \mathbf{Z}_k^{\mathrm{opt}}=\mathbf{I}_M-2\bar{\mathbf{B}}_f\bm{\Theta}\left(\mathbf{G}_k\mathbf{A}_k^{\mathrm{opt}}\mathbf{G}_k^H-\sum_{k'\in\mathcal{K}\setminus k}\gamma_{k'}\mathbf{G}_{k'}\mathbf{A}_{k'}^{\mathrm{opt}}\mathbf{G}_{k'}\right)(\bar{\mathbf{B}}_f\bm{\Theta})^H.
    \end{equation}
    Then, we post-multiply \eqref{Zk_opt} by $\mathbf{W}_k^{\mathrm{opt}}$, i.e., 
    \begin{equation}
        \begin{aligned}
            2\bar{\mathbf{B}}_f\bm{\Theta}\mathbf{G}_k\mathbf{A}_k^{\mathrm{opt}}\mathbf{G}_k^H(\bar{\mathbf{B}}_f\bm{\Theta})^H\mathbf{W}_k^{\mathrm{opt}}=\left(\mathbf{I}_M+2\bar{\mathbf{B}}_f\bm{\Theta}\sum_{k'\in\mathcal{K}\setminus k}\gamma_{k'}\mathbf{G}_{k'}\mathbf{A}_{k'}^{\mathrm{opt}}\mathbf{G}_{k'}(\bar{\mathbf{B}}_f\bm{\Theta})^H\right)\mathbf{W}_k^{\mathrm{opt}}
        \end{aligned}
    \end{equation}
    Since $\mathbf{I}_M+2\bar{\mathbf{B}}_f\bm{\Theta}\sum_{k'\in\mathcal{K}\setminus k}\gamma_{k'}\mathbf{G}_{k'}\mathbf{A}_{k'}^{\mathrm{opt}}\mathbf{G}_{k'}(\bar{\mathbf{B}}_f\bm{\Theta})^H$ is full-rank, we obtain the following rank inequality
    \begin{equation}
    \begin{aligned}
              \mathrm{rank}(\mathbf{W}_k^{\mathrm{opt}})&=\mathrm{rank}\left(\left(\mathbf{I}_M+2\bar{\mathbf{B}}_f\bm{\Theta}\sum_{k'\in\mathcal{K}\setminus k}\gamma_{k'}\mathbf{G}_{k'}\mathbf{A}_{k'}^{\mathrm{opt}}\mathbf{G}_{k'}(\bar{\mathbf{B}}_f\bm{\Theta})^H\right)\mathbf{W}_k^{\mathrm{opt}}\right)\\
              &=\mathrm{rank}\left(\bar{\mathbf{B}}_f\bm{\Theta}\mathbf{G}_k\mathbf{A}_k^{\mathrm{opt}}\mathbf{G}_k^H(\bar{\mathbf{B}}_f\bm{\Theta})^H\mathbf{W}_k^{\mathrm{opt}}\right)\\
              &\leq\min\left\{\mathrm{rank}(\mathbf{W}_k^{\mathrm{opt}}),\mathrm{rank}\left(\mathbf{G}_k\mathbf{A}_k^{\mathrm{opt}}\mathbf{G}_k^H\right)\right\}.
    \end{aligned}
    \end{equation}
    Thus, we can prove $\mathrm{rank}(\mathbf{W}_k^{\mathrm{opt}})\leq 1$ by showing that $\mathrm{rank}\left(\mathbf{G}_k\mathbf{A}_k^{\mathrm{opt}}\mathbf{G}_k^H\right)\leq1$. First, we obtain the following equalities \cite{alavi2017robust}
    \begin{equation}
    \begin{aligned}
        \mathbf{G}_k[\mathbf{I}_{M(N+1)},\mathbf{0}]^T&=[\mathbf{I}_{M(N+1)}\quad \Bar{\mathbf{g}}_k][\mathbf{I}_{M(N+1)},\mathbf{0}]^T=\mathbf{I}_{M(N+1)},\\
        \mathbf{P}_k[\mathbf{I}_{M(N+1)},\mathbf{0}]^T&=\begin{bmatrix}
        q_k\mathbf{I}_{M(N+1)} & \mathbf{0} \\
        \mathbf{0} & -q_k\epsilon_k^2-\sigma_k^2\gamma_k
    \end{bmatrix}\begin{bmatrix}
        \mathbf{I}_{M(N+1)}  \\
        \mathbf{0} 
    \end{bmatrix}=\begin{bmatrix}
        q_k\mathbf{I}_{M(N+1)}  \\
        \mathbf{0} 
    \end{bmatrix}=q_k\mathbf{G}_k^H-[\mathbf{0},q_k\bar{\mathbf{g}}_k]^T.
    \end{aligned}
    \end{equation}
    Then, we pre-multiply $\mathbf{G}_k$ and post-multiply $[\mathbf{I},\mathbf{0}]^T$ by K2.2, respectively, and rewrite K2.2 as follows
    \begin{equation}
        \begin{aligned}
            \mathbf{G}_k{\mathbf{A}}_{k}^{\mathrm{opt}}\left(q_k\mathbf{G}_k^H-[\mathbf{0},q_k\bar{\mathbf{g}}_k]^T-\mathbf{G}_k^H\widetilde{\mathbf{X}}_k\right)&=\mathbf{0},\\
            \mathbf{G}_k{\mathbf{A}}_{k}^{\mathrm{opt}}\mathbf{G}_k^H(q_k\mathbf{I}_{M(N+1)}-\widetilde{\mathbf{X}}_k)&=\mathbf{G}_k{\mathbf{A}}_{k}^{\mathrm{opt}}[\mathbf{0},q_k\bar{\mathbf{g}}_k]^T.
        \end{aligned}
    \end{equation}
    According to constraint $\widehat{\widebar{\mbox{C1}}}$, we have
    \begin{equation}
        \begin{bmatrix}
        q_k\mathbf{I}_{M(N+1)}-\widetilde{\mathbf{X}}_k & -\widetilde{\mathbf{X}}_k \bar{\mathbf{g}}_k\\
        -\bar{\mathbf{g}}_k^H\widetilde{\mathbf{X}}_k & -q_k\epsilon_k^2-\sigma_k^2\gamma_k-\bar{\mathbf{g}}_k^H\widetilde{\mathbf{X}}_k\bar{\mathbf{g}}_k
    \end{bmatrix}\succeq\mathbf{0}.
    \end{equation}
    Hence, we can claim $q_k\mathbf{I}_{M(N+1)}-\widetilde{\mathbf{X}}_k\succeq\mathbf{0}$ is non-singular \cite{alavi2017robust}. Since multiplying by a non-singular matrix will not alter the rank of a matrix, the following rank property holds
\begin{equation}
\begin{aligned}
    \mathrm{rank}(\mathbf{G}_k{\mathbf{A}}_{k}^{\mathrm{opt}}\mathbf{G}_k^H)= \mathrm{rank}(\mathbf{G}_k{\mathbf{A}}_{k}^{\mathrm{opt}}[\mathbf{0},\bar{\mathbf{g}}_k]^T)\leq 1
\end{aligned}
    \end{equation}
    This completes the proof of Theorem 1.

\bibliographystyle{IEEEtran}
\vspace{-4mm}\bibliography{Reference_List}
\end{document}